\def\@citecolor{blue}   \def\@urlcolor{blue}
   \def\@linkcolor{blue}
\newcommand{\mps}[1]{}
\newcommand{\HT}[1]{{#1}}
\newcommand{\NN}{\mathbf{N}}
\newcommand{\D}{\mathbf{D}}
\newcommand{\E}{\mathbf{E}}
\newcommand{\myC}{\mathbf{S}}
\newcommand{\myG}{\mathbf{G}}
\newcommand{\gtos}{\mathsf{gtos}}
\newcommand{\nh}{\mathsf{nh}}
\newcommand{\nnot}{\mathsf{not}}
\newcommand{\onedigit}{\mathsf{onedigit}}
 \newcommand{\gscomp}{\mathsf{gscomp}}
\newcommand{\Amb}{\mathbf{Amb}}
\newcommand{\Am}{\mathbf{A}}
\newcommand{\amb}{\mathbf{amb}}
\newcommand{\Fun}{\mathbf{Fun}}
\newcommand{\FV}{\mathrm{FV}} 
\newcommand{\Left}{\mathbf{Left}}
\newcommand{\Nil}{\mathbf{Nil}}
\newcommand{\Pair}{\mathbf{Pair}} 
\newcommand{\Right}{\mathbf{Right}}
\newcommand{\SD}{\mathbf{SD}}
\newcommand{\Set}{{\mathbf{\downdownarrows}}}
\newcommand{\False}{\mathbf{False}}
\newcommand{\tent}{\mathbf{t}}
\newcommand{\caseof}[1]{\mathbf{case}\, #1\, \mathbf{of}\, }
\newcommand{\botexp}{\mathbf{\bot}}
\newcommand{\leftright}{\mathsf{leftright}}
\newcommand{\mapamb}{\mathsf{mapamb}}
\newcommand{\aup}{\mathsf{up}}
\newcommand{\adown}{\mathsf{down}}
\newcommand{\re}{\mathbf{r}}
\newcommand{\defined}[1]{{#1}\ne \bot}     \newcommand{\eqdef}{\stackrel{\mathrm{Def}}{=}}
\newcommand{\eqmu}{\stackrel{\mu}{=}}
\newcommand{\eqnu}{\stackrel{\nu}{=}}
\newcommand{\eqmunu}{\stackrel{\munu}{=}}
\newcommand{\eqrec}{\stackrel{\mathrm{rec}}{=}}
\newcommand{\strictapp}[2]{#1{\downarrow}#2}
\newcommand{\ire}[2]{#1\,\mathbf{r}\,#2}
\newcommand{\inl}[1]{\Left(#1)}
\newcommand{\inr}[1]{\Right(#1)}
\newcommand{\all}[1]{\forall #1\,}
\newcommand{\ex}[1]{\exists #1\,}
\newcommand{\IFP}{\mathrm{IFP}}  \newcommand{\CFP}{\mathrm{CFP}}  \newcommand{\RIFP}{\mathrm{RIFP}}                                     \newcommand{\RCFP}{\mathrm{RCFP}}                                     
\newcommand{\rec}{\mathbf{rec}}
\newcommand{\tfix}[2]{\mathbf{fix}\,#1\,.\,#2}
\newcommand{\ssp}{\rightsquigarrow}
\newcommand{\newprintp}{\overset{\mathrm{p}}{\ssp}}
\newcommand{\newprintptr}{\mathbin{\stackrel{\mathrm{p}}{\ssp}\kern-.25em{}^*}}
\newcommand{\newprintc}{\overset{\mathrm{c}}{\ssp}}
\newcommand{\valu}[2]{\llbracket{#1}\rrbracket#2}
\newcommand{\val}[1]{\llbracket{#1}\rrbracket}   \newcommand{\rk}{\mathbf{rk}}
\newcommand{\cl}[1]{\mathrm{Pr}(#1)}
\newcommand{\ddata}{\mathrm{data}}
\newcommand{\Data}{\mathrm{Data}}
\newcommand{\rea}{\mathbf{R}}
\newcommand{\reah}{\mathbf{H}}
\newcommand{\dle}{\sqsubseteq}
\newcommand{\reali}[1]{\tilde{#1}}
\newcommand{\seq}{\mathbf{seq}}
\newcommand{\COIND}{\mathbf{COIND}}
\newcommand{\mon}{\mathsf{mon}}
\newcommand{\mycomment}[1]{}
\newcommand{\allex}{\Diamond}
\newcommand{\munu}{\Box}
\newcommand{\pcv}[1]{\hat{#1}}
\newcommand{\Mon}{\mathbf{Mon}}
\newcommand{\one}{\mathbf{1}}
\newcommand{\sgb}{\mathrm{sgb}}
\newcommand{\ax}{\mathcal{A}}
\newcommand{\indu}[1]{\mathbf{Ind}(#1)}
\newcommand{\induprime}[1]{\mathbf{Ind'}(#1)}
\newcommand{\resti}[2]{\mathbf{Rest}^+(#1,#2)}
\newcommand{\restb}[2]{\mathbf{RestBd}(#1,#2)}
\newcommand{\restret}[2]{\mathbf{RestRet}_{#2}(#1)}
\newcommand{\resta}[2]{\mathbf{RestAMon}(#1,#2)}
\newcommand{\restmp}[2]{\mathbf{RestMP}(#1,#2)}
\newcommand{\restefq}[1]{\mathbf{RestEFQ}_{#1}}
\newcommand{\rests}[1]{\mathbf{RestStab}(#1)}
\newcommand{\concl}[2]{\mathbf{ConcLEM}(#1,#2)}
\newcommand{\concret}[1]{\mathbf{ConcRet}(#1)}
\newcommand{\concmp}[2]{\mathbf{ConcMP}(#1,#2)}
\newcommand{\subd}{\delta^{\lhd}}            \newcommand{\subdom}[1]{#1 \lhd D}             
\newcommand{\tval}[2]{D^{#2}_{#1}}
\newcommand{\ftyp}[2]{#1 \Rightarrow #2}
\newcommand{\ep}[1]{\mathbf{ep}(#1)}
\newcommand{\epp}[1]{\mathbf{ep'}(#1)}
 \newcommand{\pt}[1]{\mathbf{pt}(#1)}  \newcommand{\monproof}[1]{\mathbf{Mon}_{#1}}
\newcommand{\monprop}[2]{\mathrm{Mon}_{#1}(#2)}
\newcommand{\tri}{\mathbf{3}}
\newcommand{\bool}{\mathbf{2}}
\newcommand{\nat}{\mathbf{nat}}
\newcommand{\stream}[1]{#1^\omega}
\newcommand{\toot}{\ \leftrightarrow \ }
\newcommand{\ConSD}{\mathbf{ConSD}}
\newcommand{\conSD}{\mathsf{conSD}}
\newcommand{\rt}[2]{#2|_{#1}}
\newcommand{\rtp}[2]{#2|'_{#1}}
\newcommand{\regD}{\mathrm{Reg}}
\newcommand{\cd}[1]{\mathrm{cd}_{#1}}
\newcommand{\variant}[2]{#1[#2]}
\newcommand{\datacons}{\mathrm{C_{data}}}
\newcommand{\adummy}[1]{\Delta(#1)}
\newcommand{\Dnh}{Definitely non-Harrop}
\newcommand{\dnh}{definitely non-Harrop}
\newcommand{\typeq}{\simeq}
\theoremstyle{definition}\newtheorem{example}{Example} \date{}
\begin{document}

\title{Extracting total Amb programs from proofs}

\author{Ulrich Berger\lmcsorcid{0000-0002-7677-3582}}[a]
\author{Hideki Tsuiki\lmcsorcid{0000-0003-0854-948X}}[b]

\address{Swansea University, Swansea, UK}
\email{u.berger@swansea.ac.uk} 

\address{Kyoto University, Kyoto, Japan}
\email{tsuiki@i.h.kyoto-u.ac.jp}

\begin{abstract}

We present a logical system $\CFP$ (Concurrent Fixed Point Logic) supporting the extraction of nondeterministic and concurrent programs that are provably total and correct. $\CFP$ is an intuitionistic first-order logic with inductive and coinductive definitions extended by two propositional operators, $\rt{A}{B}$ (restriction, a strengthening of implication) and $\Set(B)$ (total concurrency). The source of the extraction is formal $\CFP$ proofs, the target is a lambda calculus with constructors and recursion extended by a constructor Amb (for McCarthy's amb) which is interpreted operationally as globally angelic choice and is used to implement nondeterminism and concurrency. The correctness of extracted programs is proven via an intermediate domain-theoretic denotational semantics. We demonstrate the usefulness of our system by extracting a nondeterministic program that translates infinite Gray code into the signed digit representation. A noteworthy feature of $\CFP$ is the fact that the proof rules for restriction and concurrency involve variants of the classical law of excluded middle that would not be interpretable computationally without Amb.
This is a revised and extended version of the conference paper~\cite{CFPesop}
that contains full proofs of all major results.

\end{abstract}

\maketitle              

\section{Introduction}
\label{sec-introduction}

Nondeterministic bottom-avoiding choice is an important and useful idea. 
With the wide-spread use of hardware supporting parallel computation,
it 
can speed up practical computation and, at the same time,
relates 
to computation over mathematical structures like real 
numbers~\cite{Escardo96,Tsuiki02}.
On the other hand, it is not easy to apply 
theoretical tools like denotational semantics to nondeterministic bottom-avoiding choice~\cite{HughesO89,Levy07}, and guaranteeing 
correctness and totality of such programs 
is a difficult task.

To explain the subtleties of the problem, let us start with an example.
Suppose that $M$ and $N$ are partial programs that,  
under the (in general undecidable) conditions $A$ and $\neg A$, respectively, 
are guaranteed to terminate and produce values satisfying a specification $B$. 
Then, by executing $M$ and $N$ in parallel 
and taking the result 
returned
first, we should always obtain a result satisfying $B$. 
This kind of bottom-avoiding nondeterministic program
is known as \emph{McCarthy's amb (ambiguous) operator} \cite{McCarthy1963}, and 
we denote such a program by  $\Amb(M, N)$.
$\Amb$ is called the angelic choice operator and
is usually studied as one of the three 
nondeterministic choice operators (the other two are erratic choice and demonic choice).

If one tries to formalize this idea naively, one will face some 
obstacles.  
Let $\ire{M}{B}$ (``$M$ realizes $B$'') 
express 
that program $M$ satisfies specification $B$, and 
let $\Set(B)$ be the specification that 
is satisfied by concurrent programs 
of the form $\Amb(M, N)$ that  always 
terminate and produce 
a value satisfying $B$.  
Then, the above inference could be written as
\[
  \infer[\hbox{}]{
  \ire{\Amb(M, N)} \Set(B)
}{
A \to (\ire{M}{B})  \ \ \ \     \neg A \to (\ire{N}{B}) 
}
\]
However, this inference is not sound for the following reason.
Suppose that $A$ does not hold, that is, $\neg A$ holds. 
Then, the execution of $N$ will produce a
value satisfying $B$. But the execution of $M$ may terminate as well, and
produce data that does not satisfy $B$ since there is no condition on $M$
if $A$ does not hold.
Therefore,  if $M$ terminates first in the execution of $\Amb(M, N)$, 
we obtain a result that may not satisfy $B$.

To amend this problem, we add a new operator $\rt{A}{B}$ 
(pronounced ``$B$ restricted to $A$'')
and consider the rule
\begin{equation}\label{eq0}
  \infer[\hbox{}]{
  \ire{\Amb(M, N)} \Set(B)
}{
\ire{M}{(\rt{A}{B})}\ \ \ \  \ire{N}{(\rt{\neg A}{B})}}
\end{equation}

Intuitively, $\ire{M}{(\rt{A}{B})}$ means two things:
(1) If $A$ holds, then $M$ terminates, and
(2) if 
$M$ terminates, then the result satisfies $B$,
even when $A$ does not hold.
As we will see in Section~\ref{sub-conc}, 
the above rule is derivable in classical logic
and can therefore be used to prove total correctness 
of Amb programs (Lemma~\ref{lem-restrict}).

In this paper, we go 
a step further and
introduce a logical system $\CFP$ whose formulas
can be interpreted as specifications of 
nondeterministic programs
although they do not explicitly refer to programs.
$\CFP$ is defined by adding the two logical operators $\rt{A}{B}$ and 
$\Set(B)$ to the system $\IFP$, 
a logic for program extraction~\cite{IFP} 
(see also \cite{Berger11,SeisenBerger12,BergerPetrovska18}). 
$\IFP$ supports the extraction of lazy functional programs 
from inductive/coinductive proofs in intuitionistic first-order logic.
It has a prototype implementation in Haskell,
called Prawf \cite{DBLP:conf/cie/0001PT20}.
A related approach
has been developed in the
 proof system Minlog~\cite{SchwichtenbergMinlog06,BergerMiyamotoSchwichtenbergSeisenberger11,SchwichtenbergWainer12}.
 
We show that from a $\CFP$ proof of a formula, both
a program and a proof that the program
satisfies the specification can be extracted
(Soundness theorem, Theorem \ref{thm-soundnessI}). 
For example, in $\CFP$ we have the rule 
 \begin{equation}\label{eq00}
   \infer[\hbox{(Conc-lem)}]{
   \Set(B)
 }{
 \rt{A}{B}\ \ \ \    \rt{\neg A}{B}}
  \end{equation}
which is realized by the program   $\lambda a. \lambda b. \Amb(a, b)$,
  and whose correctness is expressed by the rule (\ref{eq0}).
Programs extracted from $\CFP$ proofs can be  
executed in Haskell,
by implementing $\Amb$
with the Concurrent Haskell package.

As an application,
we extract a nondeterministic 
program that converts 
infinite Gray code into the signed digit representation, where 
infinite Gray code is a 
coding of real numbers by partial digit streams
that are allowed to contain a $\bot$, that is, a digit
whose computation does not terminate~\cite{Gianantonio99,Tsuiki02}. 
Partiality and multi-valuedness are common phenomena in computable analysis 
and exact real number computation~\cite{Weihrauch00,LUCKHARDT1977321}.
This case study connects these two aspects through a nondeterministic and
concurrent program whose correctness is guaranteed by a CFP proof. The extracted Haskell programs are listed in the appendix, and are also 
available in the repository~\cite{githubUB}.

In this paper, we focus on the specific aspect of concurrent computation and its resulting nondeterminism discussed above: the situation where two processes simultaneously attempt to access the same memory cell. Other aspects of concurrency, such as communication, are not covered here.

Organization of the paper: 
In Section~\ref{sec-ang} and~\ref{sec-ops}
we present the denotational and operational semantics of a functional 
language with $\Amb$ and prove that they match 
(Theorem~\ref{thm:data} and~\ref{thm:dataconv}).
Sections~\ref{sec-cfp} and~\ref{sec-pe} describe the formal system $\CFP$ 
and its realizability interpretation 
on which our program extraction method is based 
(Theorem~\ref{thm-soundnessI},~\ref{thm-soundnessII}, and~\ref{thm-pe}).
In Section~\ref{sec-gray} we extract
a concurrent program that converts 
representations
 of real numbers 
and study its behaviour in Section~\ref{sec-experiments}.
The concluding Section~\ref{sec-conclusion} summarizes the results,
discusses related work, and
outlines further research directions.

\section{Denotational semantics of globally angelic choice }
\label{sec-ang}

In \cite{McCarthy1963}, 
McCarthy defined 
the ambiguity operator $\amb$ as
\[
\amb(x, y) = \left\{ \begin{array}{ll} x  &(x \ne \bot)\\
                       y & (y \ne \bot) \\
                       \bot & (x =  y = \bot)
                              \end{array}\right.
\]                              
where $\bot$ means `undefined' and the values $x$ and $y$ are 
taken nondeterministically 
if they
are not $\bot$.
This is called \emph{locally} angelic nondeterministic choice  
since convergence is chosen over divergence for each local call for 
the computation of $\amb(x, y)$.
It can be implemented by executing both of the arguments
in parallel and taking the result obtained first.  
Despite being a simple construction,
$\amb$ is known to have a lot of expressive power, 
and many constructions of nondeterministic and parallel computation 
such as erratic choice, countable choice (random assignment), 
and `parallel or' can be encoded through it \cite{LassenMoran99}.
These multifarious aspects of the operator $\amb$ are reflected by 
the difficulty of its mathematical treatment in denotational semantics.
For example, 
$\amb$ is not monotonic when interpreted over powerdomains
with the Egli-Milner order~\cite{Broy1986}.

Alternatively, 
one can consider an interpretation of $\amb$ as  
\emph{globally} angelic choice, where  an argument of $\amb$ 
is chosen so that the whole ambient computation converges 
if convergence 
is possible at all~\cite{ClingerHalpern85,SondergardSestoft92}.
It can be implemented by 
running the {\em whole} computation for both
of the arguments of $\amb$ in parallel and taking the 
result obtained first.
The difference between 
the locally and the globally angelic interpretation
of $\amb$ 
is highlighted
by the fact that the 
former 
does not commute with function application.
For example, if 
$f(0) = 0$ but $f(1)$ diverges,
then,
with the local interpretation,  
$\amb(f(0), f(1))$ will always terminate with the value $0$, whereas
$f(\amb(0, 1))$ may return 0 or diverge.
On the other hand, 
the latter term will always return $0$ 
if $\amb$ is implemented with a globally angelic semantics.
Though globally angelic choice is not defined compositionally,  
one can, as suggested in \cite{ClingerHalpern85},  
integrate it into the design of a programming 
language by using this commutation property.

Denotationally, globally angelic choice 
can be modelled by
the Hoare powerdomain construction. 
However, this would not be suitable for analyzing total correctness
  because the ordering of the Hoare powerdomain does 
not discriminate $X$ and $X \cup \{\bot\}$~\cite{HughesM92,HughesO89}.
Instead, we consider a two-staged approach (see~Section~\ref{sub-denot})
and provide a simple denotational semantics for a language with 
globally angelic choice.

\subsection{Programs and types}
\label{Sec:2.1}
Our target language for program extraction
is an untyped lambda calculus with recursion operator and
constructors as in~\cite{IFP}, but extended by an additional 
constructor $\Amb$ that corresponds to a globally angelic version
of McCarthy's $\amb$.
This could be easily generalized to 
$\Amb$ operators 
of any arity $\ge 2$.
\label{sub-prog}
\begin{align*}
&  \mathit{Programs} \owns M,N,L,P, Q,R :: = 
  a, b, \ldots, f, g \ \ \text{(program variables)}\\
&\quad  |\  \lambda a.\,M 
\ | \ M\,N 
\ | \ \strictapp{M}{N} 
\ | \ \rec\,M \ | \  \botexp \\
&\quad |\    \Nil\ | \ \Left(M)\ | \ \Right(M)\ | \ \Pair(M,N) |\ \Amb(M,N)\\
&\quad |\  \caseof{M}\{\Left(a) \to L; \Right(b) \to R\} \\ 
&\quad |\  \caseof{M}\{\Pair(a,b) \to N\} \\
&\quad |\  \caseof{M}\{\Amb(a,b) \to N\}  
\end{align*}
Denotationally, $\Amb$ is just another pairing operator. 
Its interpretation as globally angelic choice 
will come into effect only through its operational semantics 
(see Section~\ref{sec-ops}).
Though essentially a call-by-name language, 
it also has strict application $\strictapp{M}{N}$. 

We use $a, \ldots, g$ for program variables to distinguish them from the variables
$x, y, z$ of 
the logical system 
CFP (Section~\ref{sec-cfp}). 
$\Nil, \Left, \Right, \Pair, \Amb$ are called \emph{constructors}.
Constructors different from $\Amb$ are called \emph{data constructors} and
$\mathrm{\datacons}$ denotes the set of data constructors.

$\strictapp{\Left}{M}$ stands for $\strictapp{(\lambda a.\Left(a))}{M}$, etc.,
and we sometimes write $\Left$ and $\Right$ for $\Left(\Nil)$ and 
$\Right(\Nil)$.  
Natural numbers are
 encoded as $0 \eqdef \Left$,  $1 \eqdef \Right(\Left)$, 
$2 \eqdef \Right(\Right(\Left))$, etc.

Although programs are untyped, programs extracted from
proofs will be typable by the following system of 
simple recursive types:
\[
Types \ni \rho, \sigma ::=  \alpha\ (\hbox{type variables})
                         \mid \one  
                         \mid \rho \times \sigma
                         \mid \rho + \sigma
                         \mid \ftyp{\rho}{\sigma}
                         \mid \tfix{\alpha}{\rho}\ \hbox{(*)}
                         \mid \Am(\rho) 
\]
where (*) is the restriction that $\tfix{\alpha}{\rho}$ must be 
\emph{contractive}, i.e.\ $\rho$ must not be of the form 
$\tfix{\beta_1}{\ldots\tfix{\beta_n}{\alpha}}$ ($n\ge 0$)~\cite{Pierce:2002}. 
Types $\rho, \sigma$ are considered equal, written $\rho\typeq\sigma$, 
if they yield the same (potentially infinite) type in the limit when 
fixed point types are repeatedly unfolded.\footnote{Unfolding means to replace $\tfix{\alpha}{\tau}$ by $\tau[\tfix{\alpha}{\tau}/\alpha]$.} 
Due to contractiveness, this limit is well-defined. 
It is well-known that
$\typeq$ is a decidable congruence on types~\cite{Pierce:2002}.
The intention of $\Am(\rho)$ is to be the type of
programs which, if they terminate (see Section~\ref{sec-ops}), 
reduce to a form
$\Amb(M, N)$ with $M,N\!:\!\rho$.

To obtain a well-behaved denotational semantics of types 
(see Section~\ref{sub-denot}) 
we impose a further restriction on fixed points, and also a restriction
on types of the form $\Am(\rho)$:
A type is \emph{regular}\footnote{Not to be confused with the notion of a regular tree (tree with finitely many different subtrees).}
if 
(1) for every subtype $\tfix{\alpha}{\rho}$, 
every free occurrence of $\alpha$ in $\rho$ is at
a strictly positive (s.p.) position in $\rho$, i.e.\ not in
the left part of a function type in $\rho$.
(2) for every subtype $\Am(\rho)$, $\rho$ is {\em determined}, that is,  %
$\rho$ is of the form
$\tfix{\alpha_1}{\ldots\tfix{\alpha_n}{\rho'}}$ ($n\ge 0$) where $\rho'$ 
is of the form $\one$, 
or $\sigma\times\tau$, 
or $\sigma + \tau$, 
or $\ftyp{\sigma}{\tau}$.
%is neither a type variable nor of the form $\Am(\rho'')$.
%

Restriction (1) (together with contractiveness) ensures
that the semantics of the type transformer $\alpha \mapsto \rho$
has a unique fixed point, which is taken as the semantics of $\tfix{\alpha}{\rho}$
(see Section~\ref{sub-denot}).
Restriction (2)
enables the interpretation of $\Amb$ as a bottom-avoiding choice operator
(see the explanation below Theorem~\ref{cor:ddatabot}).

An example of a regular and determined type is
\[\nat \eqdef \tfix{\alpha}{\one+\alpha},\]
the type of lazy partial natural numbers.
Besides the `total' natural numbers $0,1,2,\ldots$
described as programs above,
this type contains 
(in the sense of the typing relation introduced below)
the `partial natural numbers'
obtained by replacing in the total natural numbers
the subterm $\Left$ by $\bot$ or $\Left(\bot)$.
In particular, $\nat$ contains the terms
$\bot$, $\Right(\bot)$, $\Right(\Right(\bot))$, \ldots,
which, in the domain-theoretic denotational semantics of types
introduced in Section~\ref{sub-denot},
denote an increasing chain
whose supremum lies in (the denotation of) $\nat$, too,
and which can be seen as an infinite number denoted by
the infinite term $\Right(\Right(\ldots))$.

\begin{table}
\fbox{\begin{minipage}{\textwidth}
\begin{center}
\AxiomC{$\Gamma,\rho$ regular}
\UnaryInfC{$\Gamma,a:\rho \vdash a:\rho$}
\DisplayProof
\hspace{3em} 
\AxiomC{$\Gamma$ regular}
\UnaryInfC{$\Gamma \vdash \Nil:\one$}
\DisplayProof
\hspace{3em}
\AxiomC{$\Gamma,\rho$ regular}
\UnaryInfC{$\Gamma \vdash \bot:\rho$}
\DisplayProof
\\[0.5em]
\AxiomC{$\Gamma\vdash M:\rho$}
\AxiomC{$\sigma$ regular}
             \BinaryInfC{$\Gamma \vdash \Left(M) : \rho + \sigma$}
            \DisplayProof 
\hspace{3em} 
\AxiomC{$\rho$ regular}
\AxiomC{$\Gamma\vdash M:\sigma$}
             \BinaryInfC{$\Gamma \vdash \Right(M) : \rho + \sigma$}
            \DisplayProof \ \ \ \ 
\\[0.5em]
\AxiomC{$\Gamma\vdash M:\rho$}
\AxiomC{$\Gamma\vdash N:\sigma$}
             \BinaryInfC{$\Gamma \vdash \Pair(M,N) : \rho\times\sigma$}
            \DisplayProof 
\ \ \ \ \ \ \ \ 
\AxiomC{$\Gamma\vdash M:\rho$}
\AxiomC{$\Gamma\vdash N:\rho$}
\AxiomC{$\rho$ determined}
             \TrinaryInfC{$\Gamma \vdash \Amb(M,N) : \Am(\rho)$}
            \DisplayProof 
\\[0.5em]
\AxiomC{$\Gamma, a:\rho\vdash M:\sigma$}
             \UnaryInfC{$\Gamma \vdash \lambda a.\,M : \ftyp{\rho}{\sigma}$}
            \DisplayProof 
\ \ \ \ \ \ \ \ \ \ 
\AxiomC{$\Gamma, a:\rho\vdash M\,a:\rho$}
             \UnaryInfC{$\Gamma \vdash \rec\,M : \rho$}
            \DisplayProof 
{($a$ not free in $M$)}
\\[0.5em]
\AxiomC{$\Gamma\vdash M:\ftyp{\rho}{\sigma}$}
\AxiomC{$\Gamma\vdash N:\rho$}
             \BinaryInfC{$\Gamma \vdash M\,N : \sigma$}
            \DisplayProof \ \ \ \ 
\hspace{3em} 
\AxiomC{$\Gamma\vdash M:\ftyp{\rho}{\sigma}$}
\AxiomC{$\Gamma\vdash N:\rho$}
             \BinaryInfC{$\Gamma \vdash \strictapp{M}{N} : \sigma$}
            \DisplayProof \ \ \ \ 
\\[0.5em]
\AxiomC{$\Gamma\vdash M:\rho$}
\AxiomC{$\rho\typeq\sigma$}
\AxiomC{$\sigma$ regular}
             \TrinaryInfC{$\Gamma \vdash M : \sigma$}
            \DisplayProof \ \ \ \ 
\\[1em]
\AxiomC{$\Gamma \vdash M:\rho + \sigma$\ \ \ \ 
$\Gamma, a:\rho \vdash L:\tau$\ \ \ \ 
$\Gamma, b:\sigma \vdash R:\tau$}
\UnaryInfC{$\Gamma\vdash \caseof{M} \{\Left(a) \to L; \Right(b) \to R \} :\tau$}
            \DisplayProof \ \ \ \ 
\\[1em]
\AxiomC{$\Gamma \vdash M:\rho\times\sigma$\ \ \ \ 
$\Gamma, a:\rho, b:\sigma \vdash N:\tau$}
\UnaryInfC{$\Gamma\vdash \caseof{M} \{\Pair(a, b) \to N\} :\tau$}
            \DisplayProof 
\ \ 
\AxiomC{$\Gamma \vdash M:\Am(\rho)$\ \ \ 
$\Gamma, a,b:\rho \vdash N:\tau$}
\UnaryInfC{$\Gamma\vdash \caseof{M} \{\Amb(a, b) \to N\} :\tau$}
            \DisplayProof 
\end{center}
\end{minipage}
}
\caption{Typing rules}
\label{fig-typing}
\end{table}
The typing rules, listed in Table~\ref{fig-typing},
derive sequents of the form $\Gamma\vdash M:\rho$ where
$\Gamma = a_1:\rho_1,\ldots,a_n:\rho_n$ (considered as a set) is
a \emph{typing context} assigning to each variable $a_i$ a unique 
type $\rho_i$. 
$\Gamma$ is called regular if all $\rho_i$ are regular.
The rules (which extend those given in~\cite{IFP}) 
are valid w.r.t.~the denotational semantics.
The fact that in the rule for $\typeq$ the program $M$ appears unchanged 
in the premise and the conclusion (without the application of an isomorphism) 
means that we are working with 
equi-recursive as opposed to iso-recursive types~\cite{Pierce:2002}.
The third premise of that rule (requiring $\sigma$ to be regular)
could be omitted since, 
by Lemma~\ref{lem-typing-subst}~(\ref{lem-typing-subst-reg}), $\rho$
is regular and the regularity of $\rho$ implies that of $\sigma$.
However, the latter implication has a rather involved proof which
we decided not to include.
\begin{lem}
\label{lem-typing-subst}
\begin{enumerate}
\item\label{lem-typing-subst-tsubst}
If $\alpha$ occurs free in $\rho$, then $\rho[\sigma/\alpha]$ is regular
iff $\rho$ and $\sigma$ are both regular.
%
%\item\label{lem-typing-typeq}
%If $\rho$ is regular and $\rho \typeq \sigma$, then $\sigma$ is regular.
%
\item\label{lem-typing-subst-reg}
If $\Gamma\vdash M:\rho$, 
then $\Gamma$ and $\rho$ are regular.
\item\label{lem-typing-subst-poly}
(Polymorphism) If 
$\Gamma \vdash M:\rho$, 
then 
$\Gamma[\sigma/\alpha]\vdash M:\rho[\sigma/\alpha]$ 
for every type $\sigma$.
\item\label{lem-typing-subst-cut}
(Cut)
If $\Gamma,a:\rho\vdash M:\sigma$ and $\Gamma\vdash N:\rho$, then 
$\Gamma\vdash M[N/a] :\sigma$.
\end{enumerate}
\end{lem}
\begin{proof}
(\ref{lem-typing-subst-tsubst}) %and (\ref{lem-typing-typeq}) 
is proved by induction on $\rho$.
The other items are proved by induction on typing derivations.
The easy proofs are omitted.
\end{proof}

As an example of a program consider
\begin{equation}
\label{eq-f}
f\eqdef\lambda a.\,\caseof{a} \{\Left(\_)\to\Left;
                                    \Right(\_)\to \bot\}
\end{equation}
which implements the function $f$ discussed earlier, i.e.,
$f\,0 = 0$ and $f\,1 = \bot$.
$f$ has type $\ftyp{\nat}{\nat}$. 
Since 
$\Amb(0,1)$ has type $\Am(\nat)$,
the application $f\,\Amb(0,1)$ 
is not well-typed. 
Instead, we consider
$\mapamb\,f\,\Amb(0, 1)$
where 
\begin{eqnarray*}
\mapamb\eqdef \ \lambda f.\, \lambda c.\ \caseof{c}\,
\{\Amb(a,b) \to 
 \Amb(\strictapp{f}{a}, \strictapp{f}{b})\}.
\end{eqnarray*}
Clearly, $\vdash \mapamb : \ftyp{(\ftyp{\rho}{\sigma})}{\ftyp{\Am(\rho)}{\Am(\sigma)}}$ 
for all types $\rho$ and $\sigma$ that are regular and determined. 
This operator realizes the globally angelic semantics:
Using the operational semantics in Section \ref{sec-ops},
$\mapamb\ f\ \Amb(0, 1)$ is reduced to $\Amb(\strictapp{f}{0}, \strictapp{f}{1})$,
 and then $\strictapp{f}{0}$ and $\strictapp{f}{1}$ 
are computed concurrently and 
the whole expression
is reduced to $0$.
In Section~\ref{sec-cfp}, we will introduce a concurrent (or nondeterministic) 
version of Modus Ponens, (Conc-mp), 
which will automatically generate an application of $\mapamb$.

\subsection{Denotational semantics}
\label{sub-denot}
The denotational semantics has two phases: 
\emph{Phase~I} interprets programs in a Scott domain $D$
defined by the 
following
recursive domain equation
\[
D = (\Nil + \Left(D) + \Right(D) + \Pair(D\times D) 
    + \Amb(D\times D) + \Fun(D\to D))_\bot \,.
\]  
where $+$ and $\times$ denote separated sum and cartesian product,
$D \to D$ is the continuous function space, 
and the operation $\cdot_\bot$ adds a least element $\bot$.
We recommend~\cite{GierzHofmannKeimelLawsonMisloveScott03}
as a reference for domain theory and the solution of domain equations.
In the following, we will use the fact that $D$ carries the structure
of a partial combinatory algebra
(and hence can interpret the $\lambda$-calculus), and that
continuous functions on $D$ have least fixed points.
We also use that  $D$ is a bounded-complete algebraic
directed-complete partial order with least element $\bot$ and
the set $D_0$ of compact elements of $D$ is countable.
Furthermore, there exists a \emph{rank function}, $\rk$, 
which assigns to each element of $D_0$ a natural number capturing
its finite nature. 
One way of defining $\rk(a)$ is as the first stage in the
iterative construction of $D$ (see e.g.~\cite{GierzHofmannKeimelLawsonMisloveScott03})
where $a$ appears. All we need of the rank is that the following two
properties hold for $a\in D_0$ \cite{IFP}:
\begin{enumerate}
\item[$\rk 1$]
  If $a$ has the form $C(a_1,\ldots,a_k)$ for a constructor $C$,
  then $a_1,\ldots,a_k$ are compact and $\rk(a) > \rk(a_i)$ $(i \leq k)$.
\item[$\rk 2$] If $a$ has the form $\Fun(f)$, then 
for every $b\in D$, $f(b)$ is compact with $\rk(a) > \rk(f(b))$ and 
there exists a compact $b_0 \dle b$ such that $\rk(a) > \rk(b_0)$ and 
$f(b_0)=f(b)$.
Moreover, there are finitely many %
$b_1,\ldots b_n\in D_0$ with $\rk(b_i) < \rk(a)$ such that 
for all $b \in D$, $f(b) = \bigsqcup \{f(b_i) | i=1,\ldots n, b_i\dle b\}$.
\end{enumerate}
Many facts about compact elements will be proven by induction on their ranks.

Table~\ref{fig-semantics-programs} defines the denotational semantics
of programs 
which interprets 
a program $M$ as an element $\valu{M}{\eta}\in D$
for every environment $\eta$ assigning domain elements to the free variables of $M$. 
\begin{table}
\fbox{\small
\begin{minipage}{\textwidth}
\vspace*{-0.2cm}
\begin{eqnarray*}
\valu{a}{\eta} &=& \eta(a)\\
\valu{\lambda a.\,M}{\eta} &=& \Fun(f)\quad  
  \hbox{where $f(d) = \valu{M}{\eta[a\mapsto d]}$}\\
\valu{M\,N}{\eta} &=& f(\valu{N}{\eta})\quad 
                        \hbox{if $\valu{M}{\eta}= \Fun(f)$}\\
\valu{\strictapp{M}{N}}{\eta} &=& f(\valu{N}{\eta})\quad 
                        \hbox{if $\valu{M}{\eta}= \Fun(f)$ and $\valu{N}{\eta}\neq\bot$}\\
\valu{\rec\,M}{\eta} &=& \hbox{the least fixed point of $f$ if $\valu{M}{\eta}=\Fun(f)$}\\
\valu{C(M_1,\ldots,M_k)}{\eta} &=& C(\valu{M_1}{\eta},\ldots,
                                       \valu{M_k}{\eta})
\quad \hbox{($C$ a constructor (including $\Amb$))}\\
\valu{\caseof{M}{\vec{Cl}}}{\eta} &=& 
       \valu{K}{\eta[\vec a \mapsto \vec d]} 
       \quad \hbox{if $\valu{M}{\eta} = C(\vec d)$ 
and $C(\vec a) \to K\in\vec{Cl}$}\\
\valu{M}{\eta} &=& \bot\ \ 
\hbox{in all other cases, in particular $\valu{\botexp}{\eta} = \bot$}
\end{eqnarray*}

\hspace{1cm}${\eta}$ is an environment that assigns elements of $D$ to variables.
\end{minipage}
}
\caption{Denotational semantics of programs (Phase I)}
\label{fig-semantics-programs}
\end{table}
For $\valu{\lambda a.\,M}{\eta}$ to be well-defined, the function
$f$ mapping $d$ to $\valu{M}{\eta[a\mapsto d]}$ must be continuous.
This follows easily from the fact that Scott domains form a cartesian 
closed category~\cite{GierzHofmannKeimelLawsonMisloveScott03}.
Note that $\Amb$ is interpreted (like $\Pair$) as a simple pairing operator.

Types are interpreted as subdomains, i.e.\ as subsets of $D$
that are downward closed and closed under suprema of bounded subsets.
To interpret the type constructors we use the following operations
on subsets of $D$: 
\begin{eqnarray*}%
X_\bot &\eqdef& X \cup\{\bot\}\\
X+Y &\eqdef& \{\Left(a)\mid a\in X\} \cup 
                           \{\Right(b) \mid b \in Y\}\\ 
X\times Y &\eqdef& \{\Pair(a,b) \mid a\in X, b\in Y \}\\
\ftyp{X}{Y} &\eqdef& \{\Fun(f) \mid f:D\to D\hbox{ continuous, }
                           \forall a\in X (f(a) \in Y) \}\\
\Amb(X,Y) &\eqdef& \{\Amb(a,b) \mid a\in X, b\in Y \}
\end{eqnarray*}
Clearly, if $X$ and $Y$ are subdomains of $D$, so are 
$(X+Y)_{\bot}$, $(X\times Y)_{\bot}$, $\Amb(X,Y)_{\bot}$, $(\ftyp{X}{Y})_{\bot}$.
The denotational semantics of types, shown in Table~\ref{fig-semantics-types},
maps every type $\rho$ to a subdomain $\tval{\rho}{\zeta}$ for every environment 
$\zeta$ that assigns subdomains to the free type variables of $\rho$.
\begin{table}
  \fbox{\small
  \begin{minipage}{\textwidth}
\vspace*{-0.2cm}
\begin{eqnarray*}
\tval{\alpha}{\zeta} &=& \zeta(\alpha)\qquad\\
\tval{\one}{\zeta} \ &=&  \ \{\Nil\}_\bot\\
\tval{\tfix{\alpha}{\rho}}{\zeta} &=& 
\bigcap\{\subdom{X} \mid \tval{\rho}{\zeta[\alpha \mapsto X]} \subseteq X \} 
\text{\qquad ($\subdom{X}$ means $X$ is a subdomain of $D$)}\\
\tval{\Am(\rho)}{\zeta} &=&
  \Amb(\tval{\rho}{\zeta},\tval{\rho}{\zeta})_\bot \\
\tval{\rho\diamond\sigma}{\zeta} &=& 
(\tval{\rho}{\zeta}\diamond\tval{\sigma}{\zeta})_{\bot}\quad 
(\diamond\in\{+,\times,\ftyp{}{}\})   \end{eqnarray*}                           

\hspace*{1cm}
${\zeta}$ is a type environment that assigns subdomains of $D$ to type variables.
\end{minipage}
}
\caption{Denotational semantics of types}
\label{fig-semantics-types}
\end{table}
It is easy to see that the typing rules in Table~\ref{fig-typing} are sound w.r.t.\ 
this semantics, 
that is, 
whenever $\Gamma\vdash M :\rho$ is derivable by the typing rules, 
and $\eta\in\tval{\Gamma}{\zeta}$, i.e.\ 
$\eta(x)\in\tval{\sigma}{\zeta}$ for all type assignments $x:\sigma\in\Gamma$,
then $\valu{M}{\eta} \in \tval{\rho}{\zeta}$.
The rules for fixed point types are sound since 
$\tval{\tfix{\alpha}{\rho}}{\zeta} = \tval{\rho[\tfix{\alpha}{\rho}/\alpha]}{\zeta}$
(see~\cite{IFP}, Lemma~7 for a proof).

For closed terms and types the environments are redundant and we write 
just $\val{M}$ and $\tval{\rho}{}$.
We then also write ``$M:\rho$'' or ``$M$ has type $\rho$''
for ``$\valu{M}\in\tval{\rho}{}$'',
and we say ``$M$ is regular typed'' if $M:\rho$ for some regular type $\rho$. 
In contrast, we write $\vdash M:\rho$ to express that the type 
assignment is provable by the rules in Table~\ref{fig-typing}.\footnote{Even if $\rho$ is regular, for computability-theoretic reasons,
$M:\rho$ does, in general, not imply that $\vdash M:\rho$ is derivable.}

In \emph{Phase~II} of the denotational semantics we assign to every $a\in D$
a set $\ddata(a)\subseteq D$ 
that reveals the role of $\Amb$ as a choice operator. 
The relation $\ddata(a,d)$, also written 
`$d\in\ddata(a)$',  
is defined (coinductively) as the largest relation satisfying
\begin{align}
d \in \ddata(a) \quad \eqnu \quad 
&\exists a',b'\,(a = \Amb(a', b') \land a' \ne \bot \land d \in \ddata(a')) \ \lor \notag\\
&\exists a',b'\,(a = \Amb(a', b') \land b' \ne \bot \land  d \in \ddata(b')) \ \lor \notag\\
&(a = \Amb(\bot, \bot) \land d = \bot) \ \lor \notag\\  
&\bigvee_{C \in \mathrm{\datacons}} 
\exists \vec{a'},\vec{d'}
\left(a =  C(\vec{a'}) \land
d = C(\vec{d'}) \land \bigwedge_i d'_i \in \ddata(a'_i)\right) \ \lor\notag\\
&\exists f\,(a = \Fun(f)  \land d = a )\  \lor\notag\\ 
&(a = d = \bot)\,. \label{e-data} 
\end{align} 
Now, every closed program $M$ denotes the set 
$\ddata(\val{M}) \subseteq D$ containing all possible globally 
angelic choices derived from its denotation in $D$.
For example, $\ddata(\val{\Amb(0,1)}) = \{0,1\}$ and, for $f$ as
defined in~(\ref{eq-f}), we have, as expected,
$\ddata(\val{\mapamb\ f\ \Amb(0, 1)}) = \ddata(\val{\Amb(0,\bot)}) = \{0\}$. 
In Section~\ref{sec-ops} we will define an operational semantics whose fair
execution sequences starting with a regular-typed program 
$M$ compute exactly the elements of $\ddata(\val{M})$.
In Section~\ref{sec-pe} (Lemma~\ref{lem-data-nonempty}) 
we will show that $\ddata(a)$ is nonempty 
for all $a\in D$.

\begin{rem}
In~\cite{IFP} the meaning of the word `data' is different.
There, it means that a domain element has no function component.
Therefore, in $\IFP$, `data' means `no $\Fun$', 
whereas here, it means `no $\Amb$'.
\end{rem}

\begin{example}\label{ex:random}
Let $M = \rec\ \lambda a. \Amb(\Left(\Nil), \Right(a))$. 
$M$ is a closed program
of regular type $\tfix \alpha \Am(\one + \alpha)$.
We have $\ddata(\val{M}) = \{0, 1, 2, \ldots,\infty\}$
where $\infty = \Right(\Right(\ldots))$.
Thus, we can express countable choice (aka random assignment \cite{LassenMoran99})
with $\Amb$.
\end{example}

\begin{lem}\label{lem:ddatabot}
If $a\in\tval{\rho}{\zeta}$ 
where $\rho$ is a regular
type but not a variable, then the following are equivalent:
  (i) $\bot \in \ddata(a)$;
  (ii) $\{\bot\} = \ddata(a)$;
  (iii) $a \in\{\bot,\Amb(\bot, \bot)\}$.   
\end{lem}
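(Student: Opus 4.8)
The plan is to establish the cycle $\mathrm{(iii)}\Rightarrow\mathrm{(ii)}\Rightarrow\mathrm{(i)}\Rightarrow\mathrm{(iii)}$, with the last implication doing all the real work.

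For $\mathrm{(iii)}\Rightarrow\mathrm{(ii)}$ I would simply evaluate $\ddata$ on the two listed elements by unfolding the fixed-point equation~\eqref{e-data} once (legitimate because $\ddata$ is a fixed point of the underlying monotone operator). For $a=\bot$ only the final clause can match, forcing $d=\bot$, so $\ddata(\bot)=\{\bot\}$. For $a=\Amb(\bot,\bot)$ the first two clauses are blocked by the side conditions $a'\ne\bot$, $b'\ne\bot$, the data-constructor and $\Fun$ clauses cannot match an $\Amb$, and the third clause yields exactly $d=\bot$; hence $\ddata(\Amb(\bot,\bot))=\{\bot\}$ as well. The implication $\mathrm{(ii)}\Rightarrow\mathrm{(i)}$ is trivial.

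The heart of the argument is $\mathrm{(i)}\Rightarrow\mathrm{(iii)}$. First I would record a one-step fact $(\ast)$: \emph{if $b\ne\bot$ and $b$ is not of the form $\Amb(\cdot,\cdot)$, then $\bot\notin\ddata(b)$.} This is again a single unfolding of~\eqref{e-data}: the three $\Amb$-clauses need $b$ to be $\Amb$-headed, the data-constructor and $\Fun$ clauses cannot produce the value $\bot$ on the right, and the last clause needs $b=\bot$; so no clause can witness $\bot\in\ddata(b)$. Now assume $\bot\in\ddata(a)$ and, for contradiction, $a\notin\{\bot,\Amb(\bot,\bot)\}$. If $a$ is not $\Amb$-headed, $(\ast)$ contradicts $\mathrm{(i)}$ at once. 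Otherwise $a=\Amb(c,d)$ with not both $c,d$ equal to $\bot$, and unfolding~\eqref{e-data} once more shows that $\bot\in\ddata(a)$ can only arise through the first two clauses; so, without loss of generality, $c\ne\bot$ and $\bot\in\ddata(c)$.

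The remaining, and main, step is to use regularity of $\rho$ to refute $c\ne\bot\wedge\bot\in\ddata(c)$, and here I would argue at the level of the type semantics (Table~\ref{fig-semantics-types}). Inspecting its clauses shows that, for a type that is not a variable, an $\Amb$-headed element can occur only when the outermost constructor is $\Am$. Since $a=\Amb(c,d)\in\rho$ with $\rho$ regular and not a variable, unfolding the leading fixed points (which terminates, since types are finite and proper strict positivity forbids the body of a fixed-point type to be a variable) exposes such an outermost $\Am$; thus $\rho$ unfolds to some $\Am(\sigma)$ and $c,d\in\tval{\sigma}{\zeta'}$ for the corresponding type environment $\zeta'$. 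Because $\Am(\sigma)$ is a subtype of the regular type $\rho$, its argument $\sigma$ is \emph{determined}, so removing the leading fixed points from $\sigma$ leaves a type whose head is $\one$, $\times$, $+$, or $\ftyp{\cdot}{\cdot}$ --- never $\Am$ --- and the matching clause of the type semantics shows that every element of $\tval{\sigma}{\zeta'}$ is either $\bot$ or headed by the corresponding non-$\Amb$ constructor, for any $\zeta'$. Hence $c\in\tval{\sigma}{\zeta'}$ with $c\ne\bot$ is not $\Amb$-headed, and $(\ast)$ yields $\bot\notin\ddata(c)$, the desired contradiction. I expect this type-level step to be the crux: it is exactly the determinedness requirement built into the notion of regular type that outlaws nested $\Amb$ (such as the ill-typed $\Amb(\Amb(\bot,\bot),\bot)$, which satisfies $\mathrm{(i)}$ but not $\mathrm{(iii)}$) and thereby makes $\Amb$ behave as a genuine bottom-avoiding choice.
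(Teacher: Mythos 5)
Your proposal is correct and follows essentially the same route as the paper's proof: the easy implications $\mathrm{(iii)}\Rightarrow\mathrm{(ii)}\Rightarrow\mathrm{(i)}$, then $\mathrm{(i)}\Rightarrow\mathrm{(iii)}$ by unfolding the definition of $\ddata$, using regularity to write $\rho$ as fixed points over $\Am(\sigma)$ with $\sigma$ determined, so that the arguments of $\Amb$ are not $\Amb$-headed and hence (your fact $(\ast)$, which the paper states as ``$\bot\in\ddata(a')$ iff $a'=\bot$'') can be applied. The only differences are cosmetic: you frame the last step as a contradiction rather than a direct case analysis, and you spell out the fixed-point unfolding and the one-step evaluations of $\ddata$ in more detail than the paper does.
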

\begin{proof}
Obviously (iii) implies (ii), and (ii) implies (i).
Therefore, it suffices to show that (i) implies (iii).
By the definition of $\ddata$, 
$\bot \in \ddata(a)$ if and only if $a = \bot$ or ($a = \Amb(a', b')$
  and (($a' \ne \bot$ and $\bot \in \ddata(a')$) or 
      ($b' \ne \bot$ and $\bot \in \ddata(b')$) or $a' = b' = \bot$)).
If $a=\bot$, then (iii) holds.
If $a=\Amb(a',b')$, then, since $\rho$ is not a variable (but regular),
$\rho$ must be of the form $\tfix{\alpha_1}{\ldots\tfix{\alpha_n}{\Am(\sigma)}}$
where $\sigma$ is determined.  
Therefore, $a',b'\in\tval{\sigma}{\zeta'}$ (for some type environment $\zeta'$) and hence 
neither $a'$ nor $b'$ have the form $\Amb(a'', b'')$.  
It follows that,
  $\bot \in \ddata(a')$ if and only if $a' = \bot$, and
  $\bot \in \ddata(b')$ if and only if $b' = \bot$.
  Therefore, (iii) holds.
\end{proof}

\section{Operational semantics}
\label{sec-ops}
We define a small-step operational semantics 
that, in the limit, reduces each closed program $M$ nondeterministically
to an element in $\ddata(\val{M})$ (Theorem~\ref{thm:data}). 
If $M$ has a regular type, the converse holds as well: 
For every $d\in\ddata(\val{M})$ there exists
a reduction sequence for $M$ computing $d$ in the limit 
(Theorem~\ref{thm:dataconv}).
If $M$ denotes a compact data, then the limit is obtained 
after finitely many reductions (Lemma~\ref{lem-claim}).
In this section, all programs and types are assumed to be closed.

\subsection{Reduction to weak head normal form}
A program is called a \emph{weak head normal form (w.h.n.f.)} if it
 begins with a constructor 
(including $\Amb$) or has the form $\lambda a. M$, and
called a \emph{deterministic weak head normal form (deterministic w.h.n.f.)} if it is 
a w.h.n.f.\ that does not begin with $\Amb$.
Note that a w.h.n.f.\ (deterministic or not) always has a non-bottom value.

In the first part of Table \ref{table:reduction},
we inductively define a 
small-step leftmost-outermost reduction 
relation $\ssp$ on 
programs.
\begin{table}
\fbox{
\begin{minipage}{\textwidth}
\noindent Rules of $\ssp$:  
\medbreak
\begin{quote}
\begin{enumerate}\setlength{\itemsep}{0.2cm}
\setlength{\itemindent}{0.4cm}
  \item[(s-i)] $(\lambda a.\,M)\ N \ssp M[N/a]$
\item[(s-ii)] \AxiomC{$M \ssp M'$}
            \UnaryInfC{$M\,N \ssp M'\,N$}
            \DisplayProof 
\item[(s-iii)]  $\strictapp{(\lambda a.\,M)}{N} \ssp M[N/a]$ \quad 
if $N$ is a w.h.n.f.

\item[(s-iv)] \AxiomC{$M \ssp M'$}
            \UnaryInfC{$\strictapp{M}{N} \ssp \strictapp{M'}{N}$}   
            \DisplayProof  \quad  
if $N$ is a w.h.n.f.
\item[(s-v)]  \AxiomC{$N \ssp N'$}
            \UnaryInfC{$\strictapp{M}{N} \ssp \strictapp{M}{N'}$}
            \DisplayProof 
\item[(s-vi)] $\rec\,M \ssp M\,(\rec\,M)$
\item[(s-vii)] $\caseof{C(\vec M)} \{\ldots;C(\vec b)\to N;\ldots\}\ssp  N[\vec M/\vec b]$\\
 \hspace*{4cm}($C$ ranges over constructors including $\Amb$)

\item[(s-viii)] \AxiomC{$M \ssp M'$}
             \UnaryInfC{$\caseof{M}\{\vec{Cl}\}\ssp 
               \caseof{M'}\{\vec{Cl}\}$}
            \DisplayProof 
\item[(s-ix)] $M \ssp \bot$ \quad if $M$ is $\bot$-like (see below)
\end{enumerate}
\end{quote}

\bigskip
\noindent Rules of $\newprintc$:
\begin{quote}
\begin{enumerate}
  \setlength{\itemsep}{0.2cm}
  \setlength{\itemindent}{0.4cm}
  \item[(c-i)] \AxiomC{$M \ssp M' $} 
  \UnaryInfC{$M \newprintc M'$}
  \DisplayProof 
  
  \item[(c-ii)] 
  \AxiomC{$M_1 \ssp M_1'$}  
  \UnaryInfC{$\Amb(M_1,M_2) \newprintc \Amb(M_1',M_2)$}
  \DisplayProof 
  \item[(c-ii')] 
  \AxiomC{$M_2 \ssp M_2'$}  
  \UnaryInfC{$\Amb(M_1,M_2) \newprintc \Amb(M_1,M_2')$}
  \DisplayProof 
  \item[(c-iii)] 
    $\Amb(M_1, M_2) \newprintc M_1$  \ 
  if $M_1$ is a w.h.n.f.
    
  \item[(c-iii')] 
    $\Amb(M_1, M_2) \newprintc M_2$  \ 
  if $M_2$ is a w.h.n.f.
  \end{enumerate}
\end{quote}

\bigskip
\noindent Rules of $\newprintp$:
  \begin{quote}
  \begin{enumerate}    \setlength{\itemsep}{0.2cm}
    \setlength{\itemindent}{0.4cm}
    \item[(p-i)]
    \AxiomC{$M \newprintc M' $} 
    \UnaryInfC{$M \newprintp M'$}
    \DisplayProof 
    \item[(p-ii)]
    \AxiomC{$M_i \newprintp M_i'$ $(i = 1,\ldots, k)$} 
    \UnaryInfC{$C(M_1,\ldots,M_k) \newprintp C(M_1',\ldots,,M_k')$}
    \DisplayProof 
        ($C\in\datacons$)      \\    %
    \item[(p-iii)]
          $\lambda a.\,M \newprintp \lambda a.\,M$ 
  \end{enumerate}
\end{quote}      

\bigskip
$\bot$-like programs are such that their syntactic forms immediately imply that
they denote $\bot$, more precisely they are of the form  $\bot$,
$C(\vec M)\,N$, $\strictapp{C(\vec M)}{N}$, 
and $\caseof{M}\,\{\ldots\}$ where $M$ is a lambda-abstraction or
of the form $C(\vec M)$ such that there is no clause in $\{\ldots\}$ 
which is of the form $C(\vec a) \to N$. 
W.h.n.f.s 
are never $\bot$-like, and the only typable $\bot$-like
program is $\bot$.

\end{minipage}
}
\caption{
Operational semantics of programs
\label{table:reduction}}
\end{table}

The following lemma establishes fundamental properties of the reduction relation $\ssp$.
Part~(5) provides the crucial link to the denotational semantics and lies at the core
of the Adequacy Theorems.

\begin{lem}\label{lem:ssp}
\begin{enumerate}
\item\label{lem:ssp:sred}
$\ssp$ preserves types, syntactically and semantically (subject reduction):\\ 
If $M \ssp M'$ and $(\vdash)\, M:\rho$, then $(\vdash)\, M':\rho$.
\item\label{lem:ssp:pres}
$\ssp$ preserves the denotational semantics: 
If $M \ssp M'$, then $\val{M} = \val{M'}$. 
\item\label{lem:ssp:det}
$\ssp$ is deterministic: 
For every program $M$, $M \ssp M'$ for at most one $M'$.
\item\label{lem:ssp:norm}
$M$ is a $\ssp$-normal form iff $M$ is a  
w.h.n.f.  
\item\label{lem:ssp:ade}
Adequacy Lemma\label{lem:ade}:
  If $\val{M} \ne \bot$, then there is a 
w.h.n.f.~$V$ 
s.t.\ $M \ssp^* V$. 
\end{enumerate}
\begin{proof}
Item (\ref{lem:ssp:sred}) is standard for syntactic typing.
For semantic typing it follows from~(\ref{lem:ssp:pres}). 

(\ref{lem:ssp:pres}), (\ref{lem:ssp:det}), (\ref{lem:ssp:norm}) are easy.

We sketch the proof of~(\ref{lem:ssp:ade}), which is similar
to the proof of Theorem~11 in \cite{Berger10}.

To every $a\in D_0$ one assigns a set $\cl{a}$ of closed programs,  
by recursion on $\rk(a)$: 
\begin{eqnarray*}
\cl{\bot} &=& \hbox{the set of all closed programs}\\
\cl{C(a_1,\ldots,a_k)} &=& 
 \{M \mid \exists M_1,\ldots,M_k,\,M\ssp^* C(M_1,\ldots,M_k) \land\\
 &&\quad\quad\quad\quad\bigwedge_{i \leq k}M_i\in\cl{a_i}) \}\\
\cl{\Fun(f)} &=& \{M\mid \exists x, M',\,(M\ssp^*\lambda x.\,M' \land\\
&&\quad\quad\quad\quad\forall b\in D_0\,(\rk(b) <\rk(\Fun(f))\to\\
&& \quad\quad\quad\quad\quad\forall N\in\cl{b}\,(M'[N/x]\in\cl{f(b)})))\} 
\end{eqnarray*}

One can show:
\begin{enumerate}[(a)]
\item\label{ade-lem-mon}
If $a\dle b$, then $\cl{a}\supseteq\cl{b}$, for all $a,b\in D_0$. \\
(induction on the maximum of $\rk(a)$ and $\rk(b)$)
\item\label{ade-lem-whnf}
If $a\in D_0$ is different from $\bot$,
then for every closed program $M$:

$M\in\cl{a}$ iff  $M\ssp^*M'$ for some
w.h.n.f.\ $M'$ with $M'\in\cl{a}$.\\
(easy induction on $\rk(a)$ using (\ref{lem:ssp:norm}))
\item\label{ade-lem-denot}
If $a \dle \val{M}$, then $M\in\cl{a}$,
for all closed programs $M$ and $a\in D_0$.\\
(one shows by induction on programs $M$, 
and using \ref{ade-lem-mon} and \ref{ade-lem-whnf},
the more general statement:
If $a \dle \valu{M}{\eta}$ where $a\in D_0$, then
$M\theta\in\cl{a}$, provided $\eta(x)\in D_0$ and $\theta(x)$ is closed 
with $\theta(x)\in\cl{\eta(x)}$ for all free variables $x$ of $M$)
\end{enumerate} 
Now, assume $\val{M}\neq\bot$ where $M$ is a closed program. 
Then $\bot\neq a\dle\val{M}$ for some $a\in D_0$.
By \ref{ade-lem-denot}, $M\in\cl{a}$. 
By \ref{ade-lem-whnf}, $M\ssp^* V$ for some program $V$ in w.h.n.f.
\end{proof}
\end{lem}

\subsection{Making choices}
Next, we define the reduction relation $\newprintc$ (`c' for 'choice') 
that asynchronously reduces the arguments of $\Amb$, or chooses one
that is in deterministic w.h.n.f.
\ (middle part of Table~\ref{table:reduction}). 
This reduction does not preserve the denotational semantics and is not deterministic. Therefore, 
properties corresponding to
Lemma~\ref{lem:ssp}~(\ref{lem:ssp:pres}), (\ref{lem:ssp:det})
do not hold. As for (\ref{lem:ssp:norm}), we have
\begin{lem}\label{lem:newprintc}
$M$ is a $\newprintc$-normal form iff $M$ is a  
deterministic w.h.n.f.
\end{lem}
\begin{proof}
Immediate from 
the definition and Lemma \ref{lem:ssp}(\ref{lem:ssp:norm}).
\end{proof}
Finally, we define the reduction relation $\newprintp$ 
(third part of Table~\ref{table:reduction})
which reduces the arguments of the constructor $\Pair$ in parallel 
and ensures the reduction of every (closed) program 
(a property that can be easily proven by structural induction).
For example, $\Nil\newprintp \Nil$ follows from (p-ii).
This organization of the operational semantics secures adequacy w.r.t.\ 
the denotational semantics in Section \ref{sec:adequacy}.
For a practical implementation, one might prefer a sequential version 
of (p-ii) and remove (p-iii), as done in the definition of 
the function \verb|ed| (`extract data') in Appendix~\ref{sub-program-data}.
\begin{lem}\label{lem-newprintp}
If $M \newprintp M'$ and $(\vdash)\,M:\rho$, then $(\vdash)\,M':\rho'$ 
for some type $\rho'$ obtained from $\rho$ by deleting some occurrences 
of the type constructor $\Am$.
Hence, if $M$ has a regular type, then so does $M'$.
\end{lem}
\begin{proof}
We prove the first statement by induction on the definition of $M \newprintp M'$.
Since the proof is essentially the same for syntactical and semantic typing,
we only consider the former. 
Rule (p-i) imports the relation $\newprintc$, so we first prove 
a corresponding statement for $\newprintc$, by induction on the definition of
$M \newprintc M'$: For the rules (c-i), (c-ii), and (c-ii')
the statement holds by Lemma~\ref{lem:ssp}~(\ref{lem:ssp:sred}). 
The rules (c-iii) and (c-iii') delete  
an $\Amb$ from a program which corresponds to a deletion of an
$\Am$ from its type.
This completes the proof for the rule (p-i).
The other rules for $\newprintp$ just apply $\newprintp$ in parallel to subprograms,
therefore the induction hypothesis directly applies.

The second statement follows 
from the first, since, clearly, deleting $\Am$ preserves regularity.
\end{proof}
In Theorem~\ref{cor:ddatabot} we will prove a property corresponding 
to Lemma~\ref{lem:ssp}~(\ref{lem:ssp:ade}) for $\newprintp$. 
It requires fairness of computation in the following sense:
We call a $\newprintp$-reduction sequence 
\emph{unfair} if, intuitively, from some point on, one side of an 
$\Amb$ term is permanently reduced but not the other. 
More precisely, we inductively  define 
$M_0\newprintp M_1 \newprintp \ldots$ to be unfair if
\begin{itemize}
\item each $M_i$ is of the form $\Amb(L_i,R)$ (with fixed $R$) 
and $L_i \ssp L_{i+1}$, or  
\item each $M_i$ is of the form $\Amb(L,R_i)$ (with fixed $L$) and
$R_i \ssp R_{i+1}$, or   
\item each $M_i$ is of the form $C(N_{i,1},\ldots,N_{i,n})$ 
(with a fixed $n$-ary data constructor $C$) and $N_{1,k} \newprintp N_{2,k}\newprintp \ldots$
is unfair for some $k$, or
\item the tail of the sequence, $M_2\newprintp M_3\ldots$, is unfair.
\end{itemize}
A $\newprintp$-reduction sequence is \emph{fair} if it is not unfair.

A {\em computation} of $M$ is an infinite fair sequence 
$M = M_0 \newprintp M_1 \newprintp \ldots$  
Intuitively,  
computation proceeds as follows:
A program $L$ is head reduced by $\ssp$ to a 
w.h.n.f.\ $L'$, 
and
if $L'$ is a data constructor term, all arguments are reduced in parallel by (p-ii).
If $L'$ has the form $\Amb(M, N)$, 
two concurrent threads 
are invoked for the reductions of $M$ and $N$ in parallel, 
and the one reduced to a 
w.h.n.f.\ first is used. 
Fairness corresponds to the requirement 
that the `speed' of each thread is positive
which means, in particular, that no thread can block another.
Note that $\newprintc$ is not used for the reductions of 
$M$ and $N$ in (s-ii), (s-iv), (s-v) and (s-viii).
This means that  $\newprintc$ is applied only to the outermost redex.
Also, (c-ii) is 
defined through $\ssp$, not $\newprintc$,  and thus no thread creates new threads.
This ability to control the number of threads was not available
in an earlier version of this language~\cite{BergerCSL16}
(see also the discussion in Section~\ref{sub-related}).

\subsection{Computational adequacy:
  Matching denotational and operational semantics}\label{sec:adequacy}
We define 
${M}_{{D}} \in {D}$ by structural induction on programs $M$:
\begin{align*}
{C(M_1,\ldots,M_k)}_{D} &= C({M_1}_{D},\ldots, {M_k}_{D})
&  \hbox{($C\in\datacons$)}\\
{(\lambda a. M) }_{D} &= \val{\lambda a. M}   \\
{M}_{D} &= \bot  & \mbox{otherwise}
\end{align*}
Intuitively, $M_{D}$ is the part of $M$ that has been fully evaluated to a data.
Note that $M_D \neq\bot$ iff $M$ is a deterministic w.h.n.f.

Since clearly $M \newprintp N$ implies $M_{D} \sqsubseteq_{D} N_{D}$,
for every computation $ M_0 \newprintp M_1 \newprintp  \ldots$,  
  the sequence $((M_i)_{D})_{i \in \NN}$ is increasing and therefore
  has a least upper bound in $D$. %

\begin{thm}[Computational Adequacy: Soundness]
\label{thm:data}
For every computation  $M =  M_0 \newprintp M_1 \newprintp  \ldots$, 
$\sqcup_{i \in \NN} (M_i)_{D} \in \ddata(\val{M})$. 
\begin{proof}
In the proof we will use the principle of coinduction for the binary predicate 
$\ddata$ 
which was defined coinductively in (\ref{e-data}). 
In general, for a coinductive predicate
$Q(\vec x) \eqnu \Phi(Q)(\vec x)$ 
and any predicate $P$ of the same arity, one can prove
$P\subseteq Q$ by coinduction
by proving 
$P\subseteq \Phi(P)$.
This proof principle is formalized in IFP/CFP
(last rule in Table~\ref{table-proof-ifp} in Section~\ref{sec-cfp}).
Here, we apply it on the meta-level. 

  Set $P(a, d) \eqdef d = \sqcup_{i \in \NN} (M_i)_D$ for some
  computation
    $M_0 \newprintp M_1 \newprintp\ldots$ with $a = \val{M_0}$.
     We show $P\subseteq\ddata$, that is, 
$P(a, d) \to d \in \ddata(a)$, 
by coinduction.
    Therefore, we have to show $P(a, d) \to \Phi(P)(a, d)$ where
   $\Phi(P)(a,d)$ is obtained by replacing in the right-hand side of 
   (\ref{e-data}), the predicate $\ddata$ by $P$.
     Assume $P(a,d)$, witnessed by the computation, that is, 
  fair reduction sequence 
  $M = M_0 \newprintp M_1 \newprintp \ldots$ with $a = \val{M}$
  and $d = \sqcup_{i \in \NN} (M_i)_{D}$.
   We have to show that at least one of the following six conditions holds:
  \begin{enumerate}
  \item[(1)] $a = \Amb(a', b') \land a' \ne \bot \land P(a',d)$
  \item[(2)] $a = \Amb(a', b') \land b' \ne \bot \land P(b',d)$
  \item[(3)] $a = \Amb(\bot, \bot) \land d = \bot$
  \item[(4)] $ a =  C(\vec{a'}) \land d = C(\vec{d'})\land\bigwedge_i P(a'_i,d'_i)$
 for some data constructor $C$.     \item[(5)] $a = \Fun(f)  \land d = a$
  \item[(6)] $a = d = \bot$
  \end{enumerate}
  Any computation $M = M_0 \newprintp M_1 \newprintp \ldots$ belongs to one 
of the following categories:
  
  Case a: All reductions $M_i \newprintp M_{i+1}$ are (p-i) derived from (c-i):
  That is, $M_0 \ssp M_1 \ssp \ldots$ In this case,  $d = \bot$ and (6) holds by
  Lemma~\ref{lem:ssp} (\ref{lem:ssp:det}) and (\ref{lem:ssp:ade}).
 
  Case b: For some $n$, $M_i \ssp M_{i+1}$  ($i< n$) and
  $M_i \newprintc M_{i+1}$  ($n \leq i$) by (c-ii) and (c-ii'):
  In this case, $M_i = \Amb(L_i, R_i)$ for $i \geq n$.
  We have $L_i \ssp L_{i+1}$ and $R_i = R_{i+1}$ or 
  $L_i = L_{i+1}$ and $R_i \ssp R_{i+1}$.  
  By fairness, both happen infinitely often and therefore
  (3) holds by Lemma~\ref{lem:ssp}~(\ref{lem:ssp:det}), 
 (\ref{lem:ssp:pres}) and~(\ref{lem:ssp:ade}).

  Case c:
  $M_i \ssp M_{i+1}$ for $i< n$  and $M_i \newprintp M_{i+1}$ for $i \geq n$ by (p-iii):
  (5) holds by Lemma \ref{lem:ssp}(\ref{lem:ssp:pres}).

  Case d:
  $M_i \ssp M_{i+1}$ for $i< n$ and $M_i \newprintp M_{i+1}$ for $i \geq n$ by (p-ii):
  $M_{i}$ has the form 
$C(N_{n,1},\ldots,N_{n,k})$ 
for $i \geq n$ 
  and $N_{n,j} \newprintp N_{n+1,j} \newprintp \ldots$ are fair reductions.  In addition, 
  $a = C(\val{N_{i,1}},\ldots,\val{N_{i,k}})$ and 
  $d = C(\sqcup_{i \in \NN} ({N_{i,1}})_D,\ldots, \sqcup_{i \in \NN} (N_{i,k})_D)$.  Therefore, (4) holds.
  
  Case e:
  $M_i \ssp M_{i+1}$ for $i< n$, $M_n = \Amb(L,R)$, 
  $M_{i} \newprintc M_{i+1}$ for $n \leq i< m$ by (c-ii) and (c-ii'),
  $M_{m} \newprintc M_{m+1}$ by (c-iii) or (c-iii'):
  $a = \Amb(a',b')$ with $a' = \val{L}$ and 
  $b' = \val{R}$. $M_m = \Amb(L',R')$.
  If (c-iii) is used, $M_{m+1} = L'$ with 
  $a' = \val{L} = \val{L'} \neq \bot$.
  Since the reduction sequence 
  $M_{m+1} \newprintp M_{m+2} \newprintp \ldots$ is fair again,
  $P(a',d)$ and hence (1) holds.  
  Similarly, (2) holds for the case (c-iii') is used.
    
\end{proof}
    \end{thm}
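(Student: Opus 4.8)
The plan is to prove the statement by coinduction on the relation $\ddata$, which was defined as a greatest fixed point in (\ref{e-data}). Writing $\ddata = \nu\Phi$ for the monotone operator $\Phi$ read off from the right-hand side of (\ref{e-data}), the coinduction principle says that to establish $P \subseteq \ddata$ for any relation $P$ it suffices to verify the closure condition $P \subseteq \Phi(P)$. I would therefore define $P(a,d)$ to hold exactly when there is a computation, i.e.\ an infinite fair sequence $M_0 \newprintp M_1 \newprintp \ldots$ with $a = \val{M_0}$ and $d = \sqcup_{i\in\NN}(M_i)_D$. The theorem is then precisely $P \subseteq \ddata$, and everything reduces to checking that each pair in $P$ satisfies one of the six disjuncts of $\Phi(P)$ (obtained from (\ref{e-data}) by replacing $\ddata$ with $P$).

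The core of the argument is a case analysis on the shape of the computation witnessing $P(a,d)$. Since $\newprintp$ is layered over $\newprintc$ over $\ssp$, a computation must begin with a (possibly empty) block of head reductions via (c-i),(p-i) until the head term either diverges or reaches a weak head normal form. I would split as follows, using Lemma~\ref{lem:ssp} throughout: (i) the head reduction never terminates, so $\val{M_0}=\bot$ and $d=\bot$, giving the last disjunct; (ii) a $\lambda$-abstraction is reached, so $a = \Fun(f)$ and the sequence is stationary via (p-iii), yielding $d=a$ and the $\Fun$ disjunct; (iii) a data-constructor head $C(\vec N)$ with $C\in\datacons$ is reached and (p-ii) reduces the arguments in parallel, whence $a=C(\vec{a'})$, $d=C(\vec{d'})$ and each argument subsequence is itself a fair computation witnessing $P(a'_i,d'_i)$, giving the constructor disjunct; and (iv) an $\Amb(L,R)$ head is reached, which is the interesting case.

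For the $\Amb$ case I would exploit the fairness constraint together with the Adequacy Lemma (Lemma~\ref{lem:ssp}(4)). Because weak head normal forms are exactly the $\ssp$-normal forms (Lemma~\ref{lem:ssp}(3)), the parallel reductions (c-ii),(c-ii') can continue on a side only while that side has not yet reached a w.h.n.f. If neither side ever reaches a w.h.n.f., fairness forces both to be reduced infinitely often, so by adequacy both denote $\bot$; hence $a = \Amb(\bot,\bot)$ and $d=\bot$, matching disjunct (3). Otherwise some side, say $L$, reaches a w.h.n.f.; then fairness rules out reducing the other side forever (that sequence is unfair by definition), so a choice step (c-iii) must eventually fire, after which the computation continues from the chosen term $L'$ with $\val{L'} = \val{L} = a' \neq \bot$ by Lemma~\ref{lem:ssp}(2). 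The tail of the reduction is again a fair computation with the same limit $d$, so $P(a',d)$ holds and we obtain disjunct (1), with (2) following symmetrically via (c-iii').

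The hard part will be case (iv): one must argue carefully that fairness genuinely excludes the ``stuck'' scenario in which one side sits at a w.h.n.f.\ while the other is reduced indefinitely, and that the limit $d$ is preserved when passing to the tail after a choice is made. Both points hinge on the precise inductive definition of (un)fairness and on denotation-preservation of $\ssp$; once the coinductive set-up of the first paragraph is in place, the remaining cases (i)--(iii) are routine.
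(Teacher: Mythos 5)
Your proposal is correct and takes essentially the same approach as the paper's proof: the same coinductive setup with $P(a,d)$ witnessed by fair computations, the same reduction of the claim to $P\subseteq\Phi(P)$, and the same use of Lemma~\ref{lem:ssp} (determinism, preservation of denotation, adequacy) together with fairness to discharge the $\Amb$ case. The only difference is organizational: you split by the shape of the weak head normal form reached (merging the paper's Cases~b and~e into a single $\Amb$ case with two sub-cases, choice never fires versus choice eventually fires), while the paper splits by which reduction rules occur; the content coincides.
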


The converse of Theorem~\ref{thm:data} does not hold in general,
i.e.\ $d \in \ddata(\val{M})$ does not necessarily imply
$d = \sqcup_{i \in \NN} ((M_i)_D)$  
for some computation $M=M_0\newprintp M_1\newprintp \ldots$
For example, for $M \eqdef \rec\, \lambda\, a.\, \Amb(a,\bot)$
(for which $\val{M} = \Amb(\val{M},\bot)$)
one sees that $d \in \ddata(\val{M})$ 
for every $d \in D$ (by coinduction).
However, clearly $M \newprintp^* N$ iff
$N\in\{M,\Amb(M,\bot),\Amb(\Amb(M,\bot),\bot)\}$.
Hence, $N_D=\bot$ for all such $N$.
But $M$ has the type $\tfix{\alpha}{\Am(\alpha)}$ which is not
regular (see Section~\ref{sub-prog}). 
For programs of a regular type, 
the converse of Theorem~\ref{thm:data} does hold. 

\begin{thm}[Computational Adequacy: Completeness]
\label{thm:dataconv}
If $M$ has a regular type, 
then for every $d \in \ddata(\val{M})$, there is a computation 
$M = M_0 \newprintp M_1 \newprintp \ldots$ with
$d = \sqcup_{i \in \NN} ((M_i)_{D})$.
\end{thm}
The following lemma is the essence of the proof:
\begin{lem}\label{lem-claim}
    Let $e$ be a 
    compact
  element of $D$.
  If $M$ is a 
    program of regular type $\rho$,
  $d \in \ddata(\val{M})$ and $e \sqsubseteq d$, then there exists 
  $M'$ such that 
  $M \newprintp^* M'$, $d \in \ddata(\val{M'})$, and 
  $e \sqsubseteq M'_{D} \sqsubseteq d$.
\end{lem}
  
\begin{proof}
Induction on $\rk(e)$.
  It is easy to see that the regularity of $\rho$ entails that
  $\tval{\rho}{} = \tval{\Am^k(\sigma)}{}$ 
where $k \in\{0, 1\}$ and  $\sigma$ is neither a fixed point type nor of the form $\Am(\sigma')$.
  
  \emph{Case $e = \bot$}. Then the assertion holds with $M' = M$, 
  since clearly $M_{D} \sqsubseteq d$
  for all $d \in\ddata(\val{M})$ (induction on $M$). 
  
  \emph{Case $e = C(\vec{e'})$.}  Note that
$C \ne \Amb$ because $e \sqsubseteq d$ with $d \in \ddata(\val{M})$. We have 
$d = C(\vec{d'})$ with $d_i' \sqsubseteq e_i'$.
  
  If $k=0$, then $\rho$ %
is semantically equal to a type of the form 
  $\one$,  $\rho_1+ \rho_2$ or $\rho_1\times\rho_2$,
  and therefore $\val{M}$ has the form $C(\vec{a'})$. 
  By the Adequacy Lemma (Lemma~\ref{lem:ssp}~(\ref{lem:ssp:ade})), 
  $M \ssp^* C(\vec{M'})$ for some $\vec{M'}$
  and $d \in \ddata(\val{M}) = \ddata(C(\vec{\val{M'}}))$.
  Therefore, by the definition of $\ddata$, $d_i' \in \ddata(\val{M'_i})$.  
  Furthermore, by Lemma~\ref{lem:ssp}~(\ref{lem:ssp:pres}), $C(\vec{M'})$ has 
  the same denotation as $M$ and therefore, each $M_i'$ has a denotation
  in some regular type (for example, if $\rho=\rho_1\times\rho_2$, then 
  $M_1':\rho_1$). 
  Since the ranks of the $e_i'$ are smaller than that of $e$, 
  by   induction hypothesis, there exists $\vec{M''}$ such that 
  $M'_i \newprintp^* M''_i$, $d_i' \in \ddata(\val{M''_i})$ 
  and $e_i' \sqsubseteq ({M_i}'')_{D} \sqsubseteq d_i'$.
  Therefore, $C(\vec{M'}) \newprintp^* C(\vec{M''})$ by (p-ii),  
  $d \in \ddata(\val{C(\vec{M''})})$, and
  $e \sqsubseteq C(\vec{M''})_{D} \sqsubseteq d$.
  Since $M \newprintp^* C(\vec{M'})$, we are done.

  If $k = 1$,   then $\val{M}$ has the form $\Amb(a', b') $.
  Since $d \sqsupseteq e \ne \bot$, 
    $a' \ne \bot \land d \in \ddata(a')$
  or $b' \ne \bot \land d \in \ddata(b')$.
  By the Adequacy Lemma, $M \ssp^* \Amb(N_1, N_2)$.
  If   $\defined{\val{N_1}} \land d \in \ddata(\val{N_1})$ then 
  $N_1 \ssp^* K$ for some w.h.n.f.\ $K$ and therefore
  $M \newprintc^* K$ by applying (c-i), (c-ii), and (c-iii),
  and thus $M \newprintp^* K$ by (p-i).
  Note that   $K : \Am^{0}(\sigma)$ and
  $d \in \ddata(\val{K})$.  
Therefore, the case $k=0$ applies (which has been proven for all
regular types that do not begin with $\Am$) and 
  there exists $K'$ such that $K \newprintp^* K'$,
  $d \in \ddata(\val{K'})$, and $e \sqsubseteq K'_{D} \sqsubseteq d$.
  Since $M \newprintp^* K'$, we have the result.
  
  \emph{Case $e =\Fun(g)$}. 
  If $k=0$, then $d = \Fun(f)$ and therefore $\val{M} = \Fun(f)$.
  Furthermore, by the Adequacy Lemma, 
  $M \ssp^*M'$ for some $M'$ in w.h.n.f. Since $\val{M'} = \val{M} = \Fun(f)$,
  $M'$ is a $\lambda$-abstraction and hence $M'_{D} = \val{M'}$.
  It follows that $M'_{D} = d$.
  If $k = 1$   the same argument as in the case $e = C(\vec{e'})$ applies. 
\end{proof}

\begin{proof}[Proof of Theorem~\ref{thm:dataconv}]
Let $d\in\ddata(\val{M})$.  
Let $d_0 \sqsubseteq d_1 \sqsubseteq \ldots$ 
  be an infinite sequence of 
      compact
    approximations of
    $d$ such that $d = \sqcup_i d_i$.  
    We construct a sequence $(M_i)_{i\in \NN}$ 
    such that $d \in \ddata(\val{M_i})$ and $M_i$ has a regular type as follows.
    Let $M_0 = M$.  By applying Lemma~\ref{lem-claim} to 
    $d_i$, $d$ and $M_i$, we have $M_{i+1}$ such that $M_i \newprintp^* M_{i+1}$ 
    (hence, by Lemma~\ref{lem-newprintp},
    $M_{i+1}$ has a regular type as well),
    $d \in \ddata(\val{M_{i+1}})$, and 
    $d_i \sqsubseteq (M_{i+1})_{D} \sqsubseteq d$.
    By concatenating the reduction sequences, we have an infinite sequence
    $M = N_0 \newprintp N_1 \newprintp \ldots$ such that 
   $d = \sqcup_{i \in \NN} ((N_i)_{D})$.
\end{proof}

We say that a closed program $M$ is {\em productive} if every computation
$M = M_0 \newprintp M_1 \newprintp \ldots$ produces 
a deterministic w.h.n.f. (i.e., some $M_i$ is a deterministic w.h.n.f).
We consider this property 
a counterpart of termination for a nondeterministic infinite computation.
Therefore, in informal discussions we sometimes call productive programs \emph{terminating}.

As a consequence of 
the first Adequacy Theorem and Lemma~\ref{lem:ddatabot} we have: 
\begin{thm}\label{cor:ddatabot}
For a closed program $M$ of regular type, the following 
are equivalent.
  \begin{enumerate}
   \item\label{cor:ddatabot:prod}
   $M$ is productive.
   \item\label{cor:ddatabot:det}
   Some computation $M = M_0 \newprintp M_1 \newprintp \ldots$ 
   produces a deterministic w.h.n.f.
   \item\label{cor:ddatabot:bot}
$\val{M}$ is neither $\bot$ nor $\Amb(\bot,\bot)$.
\end{enumerate}
\begin{proof}
Clearly, every program has a fair $\newprintp$-reduction sequence. 
Therefore, 
(\ref{cor:ddatabot:prod}) implies (\ref{cor:ddatabot:det}). 
  Next, assume 
(\ref{cor:ddatabot:det}). 
Then, by Theorem~\ref{thm:data}, $\ddata(\val{M})$ must contain a
  non-bottom element. By Lemma~\ref{lem:ddatabot}, (\ref{cor:ddatabot:bot}) holds.
  Finally, if (\ref{cor:ddatabot:bot}) holds, then by Lemma~\ref{lem:ddatabot}, 
  $\bot\not\in\ddata(\val{M})$.
  To show that $M$ is productive, consider a computation 
  $M = M_0 \newprintp M_1 \newprintp \ldots$
  By Theorem~\ref{thm:data}, 
  $\sqcup_{i \in \NN} ((M_i)_{D}) \in\ddata(\val{M})$,
  hence $\sqcup_{i \in \NN} ((M_i)_{D})\neq\bot$.
  It follows that for some $i$, $(M_i)_{D}\neq\bot$ which, 
  as noted earlier, is equivalent to $M_i$ being a deterministic w.h.n.f.
\end{proof}
  \end{thm}
The theorem 
does not hold without  the 
regularity
condition.
For example,
the term $M = \Amb(\Amb(\Nil,\Nil), \Amb(\bot, \bot))$,
which has the irregular type $\Am(\Am(\one))$,
can be reduced to $M_1 = \Amb(\bot, \bot)$
and then
repeats $M_1$ forever,  whereas it can also be reduced to $\Nil$.
Hence does (2) hold, while (1) and (3) don't.
Note that, by Lemma~\ref{lem-typing-subst},
the typing judgment $\vdash M:\Am(\Am(\one))$ is not derivable,
since $\Am(\Am(\one))$ is not regular.
McCarthy's $\amb$ operator is bottom-avoiding
in that when it can terminate, it always terminates.
Theorem~\ref{cor:ddatabot}  guarantees a similar property for our 
globally angelic choice operator $\Amb$.

\section{CFP (Concurrent Fixed Point Logic)} 
\label{sec-cfp}

In this section we study four formal systems, 
$\IFP$, $\RIFP$, $\CFP$, and $\RCFP$:
%(Sections~\ref{sub-IFP} - \ref{sub-RCFP}).
%
$\IFP$ (Intuitionistic Fixed Point Logic)
is an intuitionistic first-order logic with strictly positive 
inductive and coinductive definitions from the proofs of which programs 
can be extracted.
$\RIFP$ (Realizability for $\IFP$) extends $\IFP$ to enable the 
formalization of types and programs and their denotational semantics,
as well as realizability,
and hence the formal verification of the extracted programs.
Both systems were introduced in~\cite{IFP}.
$\CFP$ is obtained by adding to $\IFP$ two propositional
operators, $\rt{A}{B}$ and $\Set(B)$, that facilitate the extraction of 
nondeterministic and concurrent programs. 
$\RCFP$ is the extension of $\RIFP$ by 
type and program constructs for concurrency.
While $\RIFP$ is intuitionistic, $\RCFP$ is based on classical logic, 
which is necessary for proving realizability of a concurrent form
of the law of excluded middle.

\subsection{IFP}
\label{sub-IFP}
\paragraph{Language.}
The system $\IFP$ is defined relative to a
many-sorted first-order language. 
IFP expressions consist of formulas, predicates, and operators. 
$\IFP$ formulas 
have the form 
$A \land B$,  $A \lor B$, $A \to B$,  
$\forall x\, A$, $\exists x\, A$, 
$s = t$ ($s$, $t$ terms of the same sort), 
$P(\vec t)$ (for a predicate $P$ and terms $\vec t$ of fitting arities).
Predicates are either predicate constants (as given by the first-order language),
or predicate variables (denoted $X,Y,\ldots$), 
or comprehensions $\lambda\vec x\,A$ (where $A$ is a formula and $\vec x$ 
is a tuple of first-order variables), 
or fixed points $\mu(\Phi)$ and $\nu(\Phi)$
(least fixed point a.k.a.\ inductive predicate 
and greatest fixed point a.k.a.\ coinductive predicate) where $\Phi$ is a
strictly positive (s.p.) operator. 
Operators are of the form
$\lambda X\,Q$ where $X$ is a predicate variable
and $Q$ is a predicate
that has 
the same arity as $X$. 
$\lambda X\,Q$ is s.p.\ if 
every free occurrence of $X$ in $Q$ 
is at a strictly positive position,
that is, at a position that is not in
the left part of an implication (see the corresponding definition for types
in Section~\ref{Sec:2.1}).
Every term has a fixed sort and every predicate variable has a 
fixed arity which is a tuple of sorts. We usually 
suppress sorts and arities, notationally. 
\emph{Notation}: $P(\vec t)$ will also be written $\vec t \in P$, and
if $\Phi$ is 
$\lambda X\,Q$, then
$\Phi(P)$ stands for $Q[P/X]$.
Definitions (on the meta-level) 
of the form 
$P \eqdef \munu(\Phi)$ 
($\munu\in\{\mu,\nu\}$) 
where $\Phi = \lambda X\,\lambda \vec x\,A$,
will usually be written $P(\vec x) \eqmunu A[P/X]$. 
We write $P \subseteq Q$ for 
$\forall \vec{x}\  (P(\vec{x}) \to Q(\vec{x}))$,
$P\equiv Q$ for $(P \subseteq Q)\land(Q \subseteq P)$, 
  $\forall x \in P\ A$ for $\forall x\  (P(x) \to  A)$, and
  $\exists x \in P\ A$ for $\exists  x\  (P(x) \land  A)$.
$\neg A \eqdef A \to \False$ where $\False\eqdef\mu(\lambda X\, X)$
and $X$ has empty arity (i.e.\ $X$ is a propositional variable). 
We identify $(\lambda\vec x\,A)(\vec t)$ with $A[\vec t/\vec x]$ where
$[\vec t/\vec x]$ means capture-avoiding substitution.
Formulas are identified with predicates of empty arity. 
Hence, every statement about predicates is also a statement about formulas.
In particular, $A\equiv B$
means that $A$ and $B$ are equivalent if $A$ and $B$ are formulas.

\begin{example}[Inductively defined predicates]
\label{ex-ind}
In our examples and case study, we work with
an instance of $\IFP$ (and also its extension $\CFP$) 
that contains a sort for real numbers,
whose language includes
constants, operations and relations such as 
$0,1,+,-,*, <, |\cdot|, /$.
 In this instance, one can express the predicate 
$\NN(x)$ that $x$ is a natural number inductively as
$$\NN(x) \eqmu  x = 0 \lor \NN(x-1) $$
which  is shorthand for 
$\NN \eqdef \mu(\lambda X\, \lambda x\, (x = 0 \lor X(x-1)))$.
\end{example}

\paragraph{Proofs.}
The proof rules 
of $\IFP$ are the 
usual natural deduction rules for 
intuitionistic first-order logic with equality 
plus rules for induction and 
coinduction, as shown in Table~\ref{table-proof-ifp},
where in a sequent $\Gamma\vdash A$ it is assumed that 
$\Gamma$ is a finite set of $\IFP$ formulas and $A$ is an $\IFP$ formula.
In the last four rules (in~\cite{IFP} called closure, induction, coclosure, coinduction),
$\Phi$ is a s.p.\ operator.
The induction rule has a strong and a half strong
variant where the premise is weakened to
$\Phi(P\cap\mu(\Phi))\subseteq P$ respectively $\Phi(P)\cap\mu(\Phi)\subseteq P$.
Similarly, the coinduction rule has a strong and a half strong
variant where the premise is weakened to
$P\subseteq\Phi(P\cup\nu(\Phi))$ respectively $P\subseteq\Phi(P)\cup\nu(\Phi)$.
These variants are logically redundant, since they 
can be derived from the original versions. However, they can be given
more efficient realizers than those that would be obtained by extraction from 
their derivations (see~\cite{IFP}). 

\begin{table}
\fbox{\small
\begin{minipage}{\textwidth}
\begin{center}
$\Gamma, A \vdash A$
\hspace{3em} 
$\Gamma \vdash A$
\quad ($A\in\ax$)
\\[1em]
$\Gamma \vdash t=t$
\hspace{3em} 
\AxiomC{$\Gamma\vdash A[s/x]$}
\AxiomC{$\Gamma\vdash s=t$}
             \BinaryInfC{$\Gamma \vdash A[t/x]$}
            \DisplayProof 
\\[1em]
\AxiomC{$\Gamma\vdash A$}
\AxiomC{$\Gamma\vdash B$}
             \BinaryInfC{$\Gamma \vdash A \land B$}
            \DisplayProof 
\hspace{2em} 
\AxiomC{$\Gamma \vdash A \land B$}
       \UnaryInfC{$\Gamma \vdash A$}
            \DisplayProof 
\hspace{2em} 
\AxiomC{$\Gamma \vdash A \land B$}
       \UnaryInfC{$\Gamma \vdash B$}
            \DisplayProof 
\\[1em]
\AxiomC{$\Gamma\vdash A$}
             \UnaryInfC{$\Gamma \vdash A\lor B$}
            \DisplayProof 
\hspace{2em} 
\AxiomC{$\Gamma\vdash B$}
             \UnaryInfC{$\Gamma \vdash A\lor B$}
            \DisplayProof %
\hspace{2em} 
\AxiomC{$\Gamma\vdash A \lor B$}\AxiomC{$\Gamma, A\vdash C$}\AxiomC{$\Gamma, B\vdash C$}             \TrinaryInfC{$\Gamma \vdash C$}
            \DisplayProof \ \ \ \ 
\\[1em]
\AxiomC{$\Gamma, A\vdash B$}
             \UnaryInfC{$\Gamma \vdash A\to B$}
            \DisplayProof 
\hspace{3em} 
\AxiomC{$\Gamma\vdash A \to B$}  
\AxiomC{$\Gamma\vdash A$}
             \BinaryInfC{$\Gamma \vdash B$}
            \DisplayProof \ \ \ \ 
\\[1em]
\AxiomC{$\Gamma \vdash A$}
             \UnaryInfC{$\Gamma \vdash \forall x\,A$}
            \DisplayProof
($x$ not free in $\Gamma$)
\hspace{2em}  
\AxiomC{$\Gamma \vdash \forall x\,A$}
             \UnaryInfC{$\Gamma \vdash A[t/x]$}
            \DisplayProof 
\\[0.5em]
\AxiomC{$\Gamma \vdash A[t/x]$}
             \UnaryInfC{$\Gamma \vdash \exists x\,A$}
            \DisplayProof
\hspace{2em}  
\AxiomC{$\Gamma \vdash \exists x\,A$}
\AxiomC{$\Gamma, A \vdash B$}
             \BinaryInfC{$\Gamma \vdash B$}
            \DisplayProof 
($x$ not free in $\Gamma,B$) 
\\[1em]
$\Gamma \vdash \Phi(\mu(\Phi))\subseteq \mu(\Phi)$
\hspace{3em} 
\AxiomC{$\Gamma \vdash \Phi(P)\subseteq P$}
             \UnaryInfC{$\Gamma \vdash \mu(\Phi)\subseteq P$}
            \DisplayProof 
\\[1em]
$\Gamma\vdash\nu(\Phi) \subseteq \Phi(\nu(\Phi))$
\hspace{3em} 
\AxiomC{$\Gamma \vdash P \subseteq \Phi(P)$}
             \UnaryInfC{$\Gamma \vdash P \subseteq \nu(\Phi)$}
            \DisplayProof 
\end{center}
\end{minipage}
}
\caption{Derivation rules of IFP. \label{table-proof-ifp}}
\end{table}

\paragraph{Axioms.}
$\IFP$
is parametric in 
a set $\ax$ of \emph{axioms}, which have to be 
\emph{non-computational 
(nc)} 
formulas, i.e., 
closed
formulas
built from atomic formulas by the propositional operators $\land$, $\to$,
the quantifiers $\forall$, $\exists$, and least and greatest fixed points.
Disjunction and other logical operators introduced later are excluded.
Axioms should be chosen such that they are true in an intended 
Tarskian model.
Since Tarskian semantics admits classical logic, this means that
a fair amount of classical logic is available through axioms.
For example, for each nc-formula $A(\vec x)$, stability, 
$\forall \vec x\,(\neg\neg A(\vec x) \to A(\vec x))$
can be postulated as an axiom.
The significance of the restriction to nc-formulas is that these are 
identical to their (formalized) realizability interpretation given below.
In particular, Tarskian and realizability semantics coincide for axioms 
in $\ax$.

The systems $\RIFP$, $\CFP$ and $\RCFP$ introduced in the following 
will all be extensions of $\IFP$, in particular they contain the same
set $\ax$ of axioms.
In our examples and our case study (Section~\ref{sec-gray}) we work with an 
instance of $\ax$ consisting of axioms for the usual arithmetic operations 
and the ordering on the real numbers. Since the use of these axioms in a 
proof has no bearing on the extracted program, their exact choice does not 
matter. It suffices if they are (classically) true in the real numbers.

\begin{rem}
\label{rem-nc}
It might look strange that nc-formulas must not contain disjunction 
but may contain the existential quantifier.
The reason lies in the realizability interpretation (Table~\ref{table-proof-ifp}): 
While for an existential formula no witness for the existential quantifier is required,
the realization of a disjunction requires a witness determining which of the disjuncts
is realized. As a consequence, the usual encoding of $A \lor B$ as
$\exists x \,.\, (x = 0 \to A) \land (x \neq 0 \to B)$ does not work,
since to prove the elimination rule,
$(A\lor B) \to (A\to C) \to (B\to C) \to C$, 
w.r.t.\  this encoding,
one needs a case analysis
on the formula $x=0$, or induction on natural numbers.
But this requires the encoding to be relativized to the natural numbers:
$\exists x \,.\, \NN(x) \land (x = 0 \to A) \land (x \neq 0 \to B)$. 
This is 
a formula containing disjunction in the definition of the inductive predicate $\NN$ 
(Example~\ref{ex-ind}).
\end{rem}

\subsection{RIFP (realizability for IFP)}
\label{sub-RIFP}
From an IFP proof of a formula $A$, one can extract a program $M$ that is a realization of the
computational content of $A$.
Realizability is formalized in an extension of $\IFP$, called $\RIFP$.
To give an informal overview,
program extraction is done by
\begin{enumerate}
\item\label{reali-type} 
defining, for each formula $A$, a subdomain $\tau(A)$  
and
a predicate $\rea(A)$ on 
the Scott domain $D$ (see Section~\ref{sub-denot} for the definition of $D$)  
specifying which elements of $\tau(A)$  
realize $A$.
We use types and programs as terms of RIFP to denote subdomains and elements of $D$, respectively.
\item\label{reali-prog} 
showing that 
from a proof of $A$ one can extract a program $M$ such that
the typing rules prove $M : \tau(A)$ and $\RIFP$ proves $\rea(A)(M)$
(and hence the program's denotation 
belongs to the subdomain $\tau(A)$ and satisfies $\rea(A)$).

\end{enumerate}

$\RIFP$ has additional sorts, constants,
and axioms.
$\RIFP$ has the new sorts $\delta$ 
for the domain $D$ 
and $\subd$ for the set of subdomains of $D$,
a binary relation symbol $:$ for the typing relation
(hence, for closed $M$ and $\rho$, $M:\rho$ means $\val{M}\in\tval{\rho}{}$,
in accordance with the notation introduced in Section~\ref{sub-denot})
as well as constants for the type and program constructs in Section~\ref{sec-ang}, 
excluding $\Am$, $\Amb$ and $\strictapp{}{}$. 
It also has axioms describing the denotational semantics of $\RIFP$ programs and types,
which can be found in~\cite{IFP}.
As a consequence, the typing rules of Table~\ref{fig-typing}, excluding those concerning
$\amb$ and $\Am$, are provable in $\IFP$.
In addition, RIFP has special predicate variables and type variables, corresponding to IFP 
predicate variables, as well as axioms connecting them, which we describe below.

To avoid `computational garbage' we distinguish between formulas with (nontrivial)
computational content and those with trivial computational content.
The latter are called \emph{Harrop formulas} and are defined as those $\IFP$ formulas
which contain at strictly positive positions neither free predicate variables nor 
disjunctions ($\lor$). 
Their
trivial computational content is represented by the
program $\Nil$.
A formula is \emph{non-Harrop} if it is not Harrop.
The definition of the Harrop property extends to predicates 
in the obvious way.

\begin{table}
\fbox{
\begin{minipage}{\textwidth}  
\begin{align*}
  \tau(P(\vec t)) &= \tau(P)\\
  \tau(A \lor B) &= \tau(A) + \tau(B)\\
  \tau(A \land B) &= \tau(A) \times \tau(B) &\hbox{($A,B$ non-Harrop)}\\
                    &= \tau(A)  &\hbox{($B$ Harrop, $A$ non-Harrop)}\\             
                    &= \tau(B)  &\hbox{($A$ Harrop, $B$ non-Harrop)}\\             
                 &= \one  &\hbox{($A,B$ Harrop)}\\             
  \tau(A \to B) &= \ftyp{\tau(A)}{\tau(B)}  &\hbox{($A,B$ non-Harrop)}\\
                &= \tau(B)  &\hbox{(otherwise)}\\
  \tau(\diamond x\,A) &= 
    \tau(A) &\hbox{($\diamond \in\{\forall,\exists\}$)}\\[.5em]
  \tau(X) &= \alpha_X & \hspace{-1cm}
           \hbox{($X$ a predicate variable, $\alpha_X$ a fresh type variable)}\\
  \tau(P) &= \one &\hbox{($P$ a predicate constant)}\\
  \tau(\lambda \vec x\,A) &= \tau(A)\\
  \tau(\diamond (\lambda X\,P)) &= \tfix{\alpha_X}{\tau(P)}
                    &\hbox{($\diamond \in\{\mu,\nu\}$, $\diamond (\lambda X\,P)$ non-Harrop)}\\
           &= \one &\hbox{($\diamond \in\{\mu,\nu\}$, $\diamond (\lambda X\,P)$ Harrop)}
  \end{align*}
\end{minipage}
}
\caption{Types of $\IFP$ expressions. \label{table-type}}
\end{table}
Table~\ref{table-type} defines the type $\tau(A)$ of an $\IFP$ formula. 
Simultaneously, a type $\tau(P)$ is defined for every $\IFP$ predicate $P$.
For a predicate variable, $\tau(X)$ is a fresh type variable 
$\alpha_X$ representing the unknown type of the unknown predicate $X$.
One easily sees that $\tau(A) = \one$ iff $A$ is a Harrop formula,
and $\tau(A)$ is a regular type for every formula $A$ (Harrop or non-Harrop).
The regularity of $\tau(A)$ will continue to hold for formulas in the extension, $\CFP$,
of $\IFP$ (Lemma~\ref{lem-strict}).

The realizability predicate
$\rea(A)$ is defined 
in Table \ref{table-realizability}, 
by structural recursion on the IFP formula $A$.
We often write $\ire{a}{A}$ for $\rea(A)(a)$ (`$a$ realizes $A$')
and $\re\,A$ for $\exists\, a\ \ire{a}{A}$ (`$A$ is realizable').
Simultaneously with $\rea(A)$, we define for every IFP predicate $P$ an 
$\RIFP$-predicate $\rea(P)$ with an extra argument for (potential) realizers.
Since Harrop formulas have trivial computational content, 
it only matters whether they are  realizable or not. 
Therefore, we define for a Harrop formula $A$, an $\RIFP$-formula
 $\reah(A)$ that represents the realizability of $A$. For a predicate variable, 
$\rea(X)$ is a fresh predicate variable $\reali{X}$
representing the unknown computational content of the unknown predicate $X$.

\begin{table}
\label{table-reali-ifp}
\fbox{\small
\begin{minipage}{\textwidth}
$\rea(A)$ for Harrop formulas $A$:
\begin{align*}
\rea(A) &= \lambda a\,(a = \Nil \land \reah(A))
\hspace*{17.7em}
\\
\end{align*}

$\rea(A)$ for non-Harrop
formulas $A$:
\begin{align*}
\rea(P(\vec t)) &= \lambda a\,(\rea(P)(\vec t,a))\\
                                        \rea(A\lor B)   &=\lambda c\,(\ex{a}(c=\inl{a}\land\ire{a}{A})\lor
                              \ex{b}(c=\inr{b}\land\ire{b}{B}))\\
\rea(A\land B)  &=\left\{ \begin{array}{ll}
   \lambda c\,(\exists a,b\,(c = \Pair(a,b) \land \ire{a}{A}\land \ire{b}{B}))
                            &\hbox{($A,B$ non-Harrop)}\\
              \lambda a\,(\ire{a}{A} \land \reah(B)) 
                            &\hbox{($B$ Harrop)}\\
              \lambda b\,(\reah(A) \land \ire{b}{B})
                            &\hbox{($A$ Harrop)}
                          \end{array} \right.\\ 
\rea(A\to B)    &= \left\{ \begin{array}{ll}
  \lambda c\,(c:\ftyp{\tau(A)}{\tau(B)} \land  
          \all{a}(\ire{a}{A}\to\ire{(c\,a)}{B})) 
                     &\hbox{($A$ non-Harrop)}\\ 
          \lambda b\,(b:\tau(B) \land (\reah(A) \to \ire{b}{B}))  
                     &\hbox{($A$ Harrop)}
                           \end{array}\right.\\
\rea(\allex x\,A)  &=\lambda a\,(\allex x\,(\ire{a}{A})) 
  \qquad \hbox{($\allex\in\{\forall,\exists\}$)}\\
\end{align*} 

$\rea(P)$ for non-Harrop predicates $P$:
\begin{align*}
\rea(X) &= \reali{X} \\
\rea(\lambda \vec x\,A) &= \lambda (\vec x,a)\,(\ire{a}{A})  \\
\rea(\munu(\lambda X\,P)) &= 
\munu(\lambda\reali{X}\,\rea(P)
[\tfix{\alpha_X}{\tau(P)}/\alpha_X])
 \qquad \hbox{($\munu\in\{\mu,\nu\}$)} 
\hspace*{9em}
\\
\end{align*}
$\reah(A)$ for Harrop formulas $A$:
\begin{align*}
  \reah(P(\vec t)) &= \reah(P)(\vec t) \\ %
\reah(A\land B)  &=
      \reah(A)\land \reah(B) \\
\reah(A\to B)    &= \re\,A \to\reah(B)\\
\reah(\allex x\,A)  &=\allex x\,\reah(A)
  \quad \hbox{($\allex\in\{\forall,\exists\}$)}
\hspace*{18.6em}
\\
\end{align*}
$\reah(P)$ for Harrop predicates  $P$:
\begin{align*}
\reah(P) &= P\quad \hbox{($P$ a predicate constant)}\\
\reah(\lambda \vec x\,A) &= \lambda \vec x\,\reah(A) 
\\
\reah(\munu(\lambda X\,P)) &= \munu(\lambda X\,\reah_X(P))
  \qquad \hbox{($\munu\in\{\mu,\nu\}$)}
\hspace*{16em}
\\
\end{align*}
\begin{itemize}
\item Recall that $\ire{a}{A}$ stands for $\rea(A)(a)$ and 
$\re\,A$ stands for $\exists a\,\rea(A)(a)$.
\item To each $\IFP$ predicate variable $X$ there are 
assigned a fresh type variable $\alpha_X$ and a fresh $\RIFP$ predicate 
variable $\reali{X}$ with one extra argument for domain elements. 
\item $\reah_X(P) \eqdef\reah(P[\pcv{X}/X])[X/\pcv{X}]$
where $\pcv{X}$ is a fresh predicate constant assigned to the (non-Harrop) 
predicate variable $X$. 
This is motivated by the fact that $\munu(\lambda X\,P)$ is Harrop
iff $P[\pcv{X}/X]$ is. The idea is that $\reah_X(P)$ is the same as 
$\reah(P)$ but considering $X$ as a (Harrop) predicate constant.
\end{itemize}
\end{minipage}
}
\caption{Realizability interpretation of $\IFP$}
\label{table-realizability}
\end{table}
The main difference of our interpretation to the 
usual realizability interpretation of intuitionistic number theory lies in the
interpretation of quantifiers. While in number theory variables range over
natural numbers, which have concrete computationally meaningful representations,
we make no general assumption of this kind,
since it is our goal to extract programs from proofs in abstract mathematics.
This is the reason why we interpret quantifiers \emph{uniformly}, that is, 
a realizer of a universal statement must be independent
of the quantified variable and a realizer of an existential statement does not contain a
witness.
A similar uniform interpretation of quantifiers can be found in the
Minlog system.
The usual definition of realizability of quantifiers in intuitionistic number theory 
can be recovered by relativization to the inductively defined predicate $\NN$ in 
Example~\ref{ex-ind}, i.e., by writing
$\forall x\,(\NN(x) \to A)$.

\begin{example}[Natural numbers]
The type $\tau(\NN)$ assigned to the predicate $\NN$ 
(recall that $\NN(x) \eqmu  x = 0 \lor \NN(x-1)$)
is the 
type of unary lazy natural numbers,
$\nat\eqdef  \tfix{\alpha}{1+\alpha}$, introduced in Section~\ref{sec-ang}. 
Realizability for $\NN$ works out as
\[
\ire{a}{\NN(x)} \eqmu (a = \Left \land x = 0)  
\lor \exists b\,(a = \Right(b) \land \ire{b}{\NN(x-1))}\,.
\]
Therefore, the formulas $\NN(0)$, $\NN(1)$, $\NN(2)$, \ldots 
are realized by 
the domain elements $\Left$ ($= \Left(\Nil)$),
$\Right(\Left)$, $\Right(\Right(\Left))$, \ldots,
which means that if $x$ is a natural number, 
then the (unique) realizer of $\NN(x)$ is the unary 
(domain) representation of $x$ introduced in Section~\ref{sec-ang}.
Other ways of characterizing natural numbers may induce different
(e.g.~binary) representations.  
\end{example}
 
\begin{example}[Functions]
As an example of an extraction of a program with function type,
consider the formula 
expressing that the sum of two natural numbers is a natural number,
\begin{equation}
\label{eq:intro1}
\forall x, y\ (\NN(x) \to \NN(y) \to \NN(x+y)).
\end{equation}
It has type $\ftyp{\nat}{\ftyp{\nat}{\nat}}$ and
is realized by a function $f$ that, given realizers of $\NN(x)$ and $\NN(y)$, 
returns a realizer of $\NN(x+y)$, hence $f$ performs addition of unary numbers.
\end{example}

\begin{example}[Non-terminating realizer - this example will be used in Section~\ref{sec-gray}]
\label{ex-d}
Let
$$
\D(x) \eqdef  x\neq 0 \to (x\leq 0 \lor x\geq 0)\,.
$$  
Then 
$\tau(\D) = \bool$ where $\bool = \one + \one$, and $\ire{a}{\D(x)}$ 
is equivalent to
$$
a: \bool \land (x \neq 0 \to (a = \Left \land x \leq 0) \lor 
(a = \Right \land x \geq 0)).
$$
Therefore, $\D(x)$ is realized by $\Left$ if $x < 0$ and by $\Right$ if $x > 0$.
If $x=0$, any element of $\bool$ realizes $\D(x)$,
in particular $\bot$. 
Hence, nonterminating programs of type $\bool$,
which denote $\bot$ by 
Lemma \ref{lem:ssp}~(\ref{lem:ssp:ade}),
realize 
$\D(0)$.
In contrast, \emph{strict} formulas (defined in Section~\ref{sub-CFP}) 
are never realized 
by a nonterminating program, as will be shown in Lemma~\ref{lem-strict}~(2)
in Section~\ref{sub-RCFP}.
\end{example}

\subsection{CFP}
\label{sub-CFP}

\paragraph{Language.}
$\CFP$ extends $\IFP$ by two propositional operators,
$\rt{A}{B}$ for restriction, and $\Set(B)$ for concurrency.
Logically, 
these operators 
are equivalent to $A \to B$ and $B$, respectively because
the logical rules in Table \ref{table-infrule} are valid in 
a Tarskian semantics provided we identify $\rt{A}{B}$ with $A \to B$ and 
$\Set(B)$ with $B$.  Their importance relies exclusively on their realizability
interpretations, which are meaningful only when the realizers of $B$ 
are neither $\bot$ nor of the form $\Amb(a,b)$.
Therefore, we require in $\rt{A}{B}$ and $\Set(B)$ 
the formula $B$ to be \emph{strict} in the following inductively defined sense\footnote{The notions ``strict'' in \cite{CFPesop} and ``productive'' in \cite{BergerSpreen23} have the same purpose and imply our notion of strictness.}: 
\begin{itemize}
  \item[-] Harrop formulas and disjunctions are strict 
(the notions of a Harrop formula and a s.p.\ position are extended to $\CFP$ below). 
\item[-] A non-Harrop conjunction is strict if 
\begin{enumerate}
\item[-] at least one of the conjuncts is strict and {\dnh} (see below),
\item[-] or both conjuncts are {\dnh},
\item[-] or it is a conjunction of a Harrop formula and a strict formula.
\end{enumerate}
\item[-] A non-Harrop implication is strict if the premise is {\dnh}.
\item[-] A formula of the form $\diamond x\,A$ ($\diamond\in\{\forall,\exists\}$) or $\munu(\lambda X\lambda\vec x\,A)$ ($\munu\in\{\mu,\nu\}$) is
 strict if $A$ is strict.
 \item[-] Formulas of other forms ($\rt{A}{B}$,  $\Set(B)$, $X(\vec{t})$) are not strict.
\end{itemize} 
The notion of a Harrop formula, referred to above, is extended to
$\CFP$ by disallowing at strictly positive positions 
not only disjunction and predicate variables, 
but also restriction, $\rt{A}{B}$, and concurrency, $\Set(B)$.
Here, a position in a $\CFP$ expression
counts as strictly positive if for any occurrence of a subformula of the form 
$A \to B$  %
 or $\rt{A}{B}$, that position is not in $A$. 
The operators used for least and greatest fixed point constructions 
are subject to this extended notion of strict positivity as well. 

An expression is \emph{{\dnh}} if it
contains a disjunction or a restriction or concurrency at a s.p. position.
Clearly, a {\dnh} expression is non-Harrop.

A $\CFP$-expression $E$ is called
\emph{well-formed} if for all formulas of the form $\rt{A}{B}$ or
$\Set(B)$ occurring in $E$, the formula $B$ is strict and all
operators occurring in $E$ are strictly positive.
Note that every $\IFP$ expression is also a well-formed $\CFP$ expression.
\begin{lem}
\label{lem-wf}
The following holds for $\CFP$ expressions.
\begin{enumerate}
\item[(a)] Harrop expressions are closed under substitution
of arbitrary terms and predicates (i.e.\ if $E$ is Harrop, so are
$E[t/x]$ and $E[P/X]$ for every term $t$ and every predicate $P$ of fitting
type resp.\ arity).
\item[(b)]  {\Dnh} expressions are closed under substitution
of arbitrary terms and predicates.
\item[(c)]  Strict expressions are closed under substitution
of arbitrary terms and predicates.
\item[(d)] Well-formed expressions are closed under
subexpressions
 (i.e.\ if an expression is well-formed,
so are all of its subexpressions),
as well as substitution of arbitrary terms, and well-formed predicates.
\end{enumerate}
\end{lem}
\begin{proof}
For all parts, closure under substitution by terms is trivial and hence ignored
in the following.
\medskip

(a) and (b) are immediate.
\medskip

(c) is shown by structural induction:
The case of a Harrop expression is solved by (a).

Suppose $A \land B$ is non-Harrop.
If one of the conjuncts is strict and {\dnh}, we use the i.h.\ and (b).
If both conjuncts are {\dnh}, we use (b).
If, say, $A$ is Harrop and $B$ is strict, we use (a) for $A$ and the i.h.\ for $B$.

Suppose $A \to B$ is non-Harrop and $A$ is {\dnh}.
If $B[P/X]$ is Harrop, then $A[P/X]\to B[P/X]$ is Harrop and
hence strict by the first clause.
Otherwise, that formula is a non-Harrop implication and its premise, $A[P/X]$,
is {\dnh} by (b). Hence, it is strict by the third clause.

The remaining cases are straightforward.
\medskip

As for (d), closure under %
subexpressions 
is immediate.

Regarding closure under substitution of 
well-formed
predicates, only restriction and
concurrency need to be looked at. But these cases are solved by (c) since strictness
is closed under substitution of predicates.
\end{proof}

\emph{In the following (i.e.~for the rest of this paper), 
we assume all occurring $\CFP$ expressions to be well-formed,
and if we write ``\,$\CFP$-formula'' we mean ``well-formed $\CFP$-formula'' etc.}

\paragraph{Proofs.}
The inference rules of $\CFP$ are those of $\IFP$ 
(Table~\ref{table-proof-ifp}), 
extended with the rules 
in Table~\ref{table-infrule} 
where all occurring formulas and contexts are assumed to be %
$\CFP$-formulas.
Since the rules in Table~\ref{table-infrule} do not change the assumptions 
of a sequent, we display them with formulas instead of sequents.
Hence, each premise or conclusion $A$ stands for a sequent $\Gamma\vdash A$
with the same $\Gamma$ in the premises and the conclusion of each rule.

\begin{table}
\medbreak
l\noindent
\fbox{\small
\begin{minipage}{\textwidth}

\[
\infer[\hbox{\begin{tabular}{l}($A, B_0, B_1$ Harrop)\\
Rest-intro\end{tabular}
}]{
        \rt{A}{(B_0 \vee B_1)}
}{
A \to (B_0 \vee B_1) \ \ \     \neg A \to B_0 \wedge B_1
}
\]

\smallskip

\[
\begin{array}{ll}
\infer[\hbox{Rest-bind}]{
      \rt{A}{B'}
}{
 \rt{A}{B}\ \ \          B \to (\rt{A}{B'})
}
\ \ \ \ \ \ \ \ & 
\infer[\hbox{($B$ strict) Rest-return}]{   \rt{A}{B}
}{
  B
}  \\\\
  \infer[\hbox{Rest-antimon}]{
    \rt{A'}{B}
    }{
      A' \to A \ \ \ \rt{A}{B}  
}&
  \infer[\hbox{Rest-mp}]{
    B
}{
\rt{A}{B} \ \ \    A
}
\end{array}
\]

\smallskip

\[
\begin{array}{ll}
  \infer[\hbox{($B$ strict) Rest-efq}]{
  \rt{\False}{B}
}{
}
\ \ \ \ \ \ \ \ &
\infer[\hbox{Rest-stab}]{
    \rt{\neg\neg A}{B}
    }{
    \rt{A}{B}
}
\end{array}
\]

\smallskip

\[
  \infer[\hbox{Conc-lem}]{
  \Set(B)
}{
\rt{A}{B}   \ \ \ \     \rt{\neg A}{B}
}
\qquad
  \infer[\hbox{($B$ strict) Conc-return}]{\ 
  \Set(B)
}{
B
}
\]

\smallskip

\[
  \infer[\hbox{($B$ strict) Conc-mp}]{\
\Set(B)
}{
  A\to B\ \ \  \Set(A) 
}\ \ \ \ \ 
\]
\begin{itemize}
\item[] Assumption contexts are omitted since they do not change.
\end{itemize}
\end{minipage}
}
\medbreak
\caption{Inference rules for $\rt{A}{B}$ and $\Set{B}$ \label{table-infrule}.}
\end{table}

\subsection{RCFP (realizability for CFP)}
\label{sub-RCFP}
Realizability for $\CFP$ is expressed in $\RCFP$, an extension of $\RIFP$ by the 
type constructor $\Am$, the amb operator $\Amb$
 and the strict application operator $\strictapp{}{}$.
Note that $\RCFP$ is not an extension of $\CFP$ since %
the propositional operators 
$\rt{A}{B}$ and $\Set(B)$ are not included.
Typing and realizability interpretation of $\CFP$-formulas are the 
extension of Table \ref{table-type} 
and Table \ref{table-realizability} with those in Table \ref{table-realizability-cfp}.
\begin{table}
\medbreak
\noindent
\fbox{\small
\begin{minipage}{\textwidth}
\begin{align*}
\tau(\rt{A}{B}) &= \tau(B) \\ \tau(\Set(B)) &= \Am(\tau(B))\ \end{align*}
\begin{align*}
  \rea(\rt{A}{B}) &= \lambda b\,( b\! :\! \tau(B) \land  
                                (\re\, A \to \defined{b}) \land
                               (\defined{b} \to b\,\re\,B)) 
                                         \\
\rea(\Set(B)) &= \lambda c\, \ex{a,b}\, 
      (c = \Amb(a, b) \land a,b:\tau(B) \land (\defined{a} \lor \defined{b})\ \land \\
              &\hspace{5em} (\defined{a} \to a\, \re\, B) \land 
                             (\defined{b} \to b\, \re\, B))
\end{align*}
\end{minipage}
}
\medbreak
\caption{Typing and realizability for $\rt{A}{B}$ and $\Set{B}$\label{table-realizability-cfp}.}
\end{table}
$\RCFP$ is assumed to have a rich enough first-order language
to express all the required domain-theoretic concepts and
enough axioms to prove their properties.
This includes
\begin{enumerate}
\item[-] the axioms of $\RIFP$ (\cite{IFP}, Section~3.4),
\item[-] axioms for strict application
($b\neq\bot \to\strictapp{a}{b}=a\,b$, $\strictapp{a}{\bot}=\bot$) 
\item[-] axioms for the type operator 
$\Am$ ($c:\Am(\rho) \toot (\exists a,b:\rho\,(c=\Amb(a,b)) \lor c=\bot)$),
\item[-] a predicate characterising compactness,
\item[-]the function $\rk$ with axioms for
its properties (Section~\ref{sub-denot}).
\end{enumerate}
Note that, by Axiom (v) of $\RIFP$ (extensionality), $\eta$-conversion is
provable (in addition to $\beta$-conversion).
Also, permutative conversion laws for nested case-expressions are provable,
according to~\cite{IFP}, Lemma~12.
We do not need to be too specific about the precise axioms of $\RCFP$,
since $\RCFP$ proofs will not be subject to proof-theoretic manipulation
or analysis.
All that matters is that formulas provable in $\RCFP$
are valid in the domain-theoretic model.

In $\RCFP$ all typing rules of Table~\ref{fig-typing} are provable.
The proof rules of $\RCFP$ are those of $\RIFP$ (extended to $\RCFP$ formulas)
plus the law of excluded middle,
$A \lor \neg A$.

Note that in the name `$\RCFP$' the `C' should be interpreted as `classical' 
(rather than `concurrent') since $\RCFP$ is based on classical logic
but does not have nonstandard constructs for concurrency
(recall that, denotationally, the constructor $\Amb$ is just a pairing 
operator and $\Am$ constructs a type containing such pairs).

In the following, all lemmas and theorems concerning the denotational semantics
of programs and types, including realizability, can be formalized in $\RCFP$.
If the statement is schematic %
in a CFP expression 
(for example, Lemma~\ref{lem-strict} below), the proof is formalizable
for every instance.
\begin{lem}
  \label{lem-strict}
  For every 
$\CFP$-formula $A$:
\begin{enumerate}
   \item\label{lem-strict-bot}
If $A$ is strict, then $\bot$ and domain elements of the form 
     $\Amb(a,b)$ do not realize $A$. 
     Furthermore, $\tau(A)$ is 
    determined.   
   \item\label{lem-strict-reg}
$\tau(A)$ is a regular type.
   \item\label{lem-strict-amb}
  $\Amb(\bot, \bot)$ is not a realizer of $A$,
   and if $\Amb(a, b)$ realizes $A$, then neither $a$ nor $b$ is
   of the form $\Amb(u,v)$. 
  \end{enumerate}
  \end{lem}
  \begin{proof} 
(\ref{lem-strict-bot})
is easily proved by structural induction on formulas.

(\ref{lem-strict-reg})
is easily proved by structural induction on formulas, using the
    fact, proven in 
(\ref{lem-strict-bot}),
that the type of a strict formula is determined.
For least and greatest fixedpoints one 
also needs the following facts (a) and (b), 
both of which can be easily 
proven by induction on $A$:
(a) if $A$ is s.p.\ in $X$, %
then $\tau(A)$ is s.p.\ in $\alpha_X$, %
(b) if $\tau(A)=\alpha_X$ and $\pcv{X}$ is a predicate constant, 
then $A[\pcv{X}/X]$ is Harrop.

To prove (\ref {lem-strict-amb}), assume that $\Amb(a,b)$ realizes $A$.
Then, by the definition of realizability, $A$ must be of the form $\Set(B)$
and $a$, $b$ cannot both be $\bot$. Furthermore, by well-formedness, 
$B$ is strict and therefore neither $a$ nor $b$ can be of the form 
$\Amb(u,v)$, by (\ref{lem-strict-bot}).
\end{proof}

\begin{rem}
\label{rem-harrop-char}
The characterization of Harrop formulas as those formulas whose type
equals $\one$,
which is valid for $\IFP$, does no longer hold for $\CFP$ since
if $B$ is a Harrop formula, then $\tau(\rt{A}{B}) = \tau(B)=\one$,
but $\rt{A}{B}$ is never Harrop.  However, it is still the case that
Harrop formulas have type $\one$.
\end{rem}

The following lemma says that the result from \cite{IFP} regarding 
the typability
of realizers carries over to $\CFP$.
We use the notation
\[\adummy{\rho} \eqdef \lambda (\vec x,a)\,(a:\rho),\]
so that $Q\subseteq\adummy{\rho}$   unfolds to 
$\forall (\vec x,a)\,(Q(\vec x,a) \to a:\rho)$.
\begin{lem}
\label{lem-realizers-typed}
If $P$ is a $\CFP$ predicate, then 
$\RCFP$ proves $\rea(P)\subseteq\adummy{\tau(P)}$ from the 
assumptions
$\reali{X}\subseteq\adummy{\alpha_X}$ for every free predicate variable 
$X$ in $P$. 
In particular, if $P$ is a formula $A$, then $\ire{a}{A}$ implies $a:\tau(A)$
under these assumptions.
\end{lem}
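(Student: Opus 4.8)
The plan is to prove the statement by structural induction on the $\CFP$ predicate $P$, carried out simultaneously with the analogous claim for formulas (that $\ire{a}{A}$ entails $a:\tau(A)$), since $\tau$ and $\rea$ are defined by a single simultaneous recursion over predicates and formulas in Tables~\ref{table-type},~\ref{table-realizability}, and~\ref{table-realizability-cfp}. Throughout I would work in $\RCFP$ under the standing hypotheses $\reali{X}\subseteq\adummy{\alpha_X}$ for the free predicate variables $X$, and I would use Lemma~\ref{lem-strict}(1) to know that every type occurring is regular, so all the type constructors involved are well-defined. For the base and Harrop cases the argument is immediate: if $P$ is a predicate variable $X$ the claim $\reali{X}\subseteq\adummy{\alpha_X}$ is exactly a standing hypothesis; if $P$ is Harrop its only realizer is $\Nil$, which has type $\one=\tau(P)$; and a comprehension $\lambda\vec x\,A$ reduces directly to the formula case for $A$. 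The purely $\IFP$ connectives $\land,\lor,\to,\forall,\exists$ are then handled exactly as in \cite{IFP}: each clause of $\rea$ in Table~\ref{table-realizability} produces a realizer built with the constructor that matches the corresponding clause of $\tau$, and the induction hypotheses for the immediate subformulas supply the typings of the components.

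The genuinely new work concerns the two $\CFP$ operators, and both follow directly from the way their realizability clauses are written in Table~\ref{table-realizability-cfp}. For restriction, $\rea(\rt{A}{B})$ is by definition $\lambda b\,(b:\tau(B)\land\ldots)$, whose first conjunct already asserts $b:\tau(B)=\tau(\rt{A}{B})$, so there is nothing further to do. For concurrency, any realizer of $\Set(B)$ has by definition the form $c=\Amb(a,b)$ with $a,b:\tau(B)$; applying the $\RCFP$ axiom for the type operator $\Am$, namely $c:\Am(\rho)\toot(\exists a,b:\rho\,(c=\Amb(a,b))\lor c=\bot)$, with $\rho:=\tau(B)$ yields $c:\Am(\tau(B))=\tau(\Set(B))$.

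The step requiring real care is the fixed-point case $P=\munu(\lambda X\,P_0)$, where $\rea(P)=\munu(\Psi)$ with $\Psi=\lambda\reali{X}\,\rea(P_0)[\rho/\alpha_X]$ and $\rho=\tfix{\alpha_X}{\tau(P_0)}$. Here I would combine the induction/coinduction rules with the equirecursive typing of $\tfix{}{}$, which gives $\tau(P_0)[\rho/\alpha_X]=\rho$ precisely because regularity makes $\rho$ the \emph{unique} fixed point of the transformer $\alpha_X\mapsto\tau(P_0)$. For $\mu$ one checks that $\adummy{\rho}$ is a pre-fixed point of $\Psi$: instantiating the induction hypothesis for $P_0$ with $\alpha_X:=\rho$ and $\reali{X}:=\adummy{\rho}$ (so the hypothesis $\reali{X}\subseteq\adummy{\alpha_X}$ becomes the trivial $\adummy{\rho}\subseteq\adummy{\rho}$) gives $\Psi(\adummy{\rho})\subseteq\adummy{\tau(P_0)[\rho/\alpha_X]}=\adummy{\rho}$, and the closure/induction rule then yields $\mu(\Psi)\subseteq\adummy{\rho}$. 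The hard part will be the coinductive case, since $\adummy{\rho}$ is only a pre-fixed point and not a post-fixed point of $\Psi$, so coinduction does not bound $\nu(\Psi)$ \emph{from above} directly; to conclude $\nu(\Psi)\subseteq\adummy{\rho}$ one must exploit the uniqueness of the recursive type $\rho$ itself, exactly as in the corresponding lemma of \cite{IFP}. Since the two new operators are non-recursive and contribute strictly positive positions only inside $B$, they do not interact with this fixed-point machinery, which is why the $\IFP$ argument transfers without change.
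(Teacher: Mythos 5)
Your proposal is correct and takes essentially the same route as the paper's own proof: structural induction on $P$, with the $\IFP$ cases imported from \cite{IFP}, the predicate-variable case discharged by the standing assumption $\reali{X}\subseteq\adummy{\alpha_X}$, and the two new operators handled by reading the typing information directly off their realizability clauses in Table~\ref{table-realizability-cfp} (together with the $\RCFP$ axiom for $\Am$). The paper's proof is in fact terser than yours—it defers all fixed-point cases, including the coinductive one you rightly flag as the delicate step requiring the uniqueness of the recursive type, to \cite{IFP}—so your added detail on the $\mu$ case and your deferral of the $\nu$ case are fully consistent with it.
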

\begin{proof}
The proof is by structural induction on $P$ and,
in the largest parts, carries over from \cite{IFP}.
For the case that $P$ is a predicate variable $X$, the assumption
$\reali{X}\subseteq\adummy{\alpha_X}$ is used. 
Looking at the definitions of realizability for $\rt{A}{B}$ and $\Set(B)$, 
one sees that they preserve the type correctness of realizers.
\end{proof}

\begin{lem}
\label{lem-prod-notbot}
For a program $M$ that realizes a $\CFP$ formula $A$, 
              $M$ is productive iff $\val{M} \neq \bot$.
\end{lem}
\begin{proof}
This follows from 
Lemma~\ref{lem-strict}~(\ref{lem-strict-reg}) and~(\ref{lem-strict-amb}), 
Lemma~\ref{lem-realizers-typed}, and Theorem~\ref{cor:ddatabot}.  
\end{proof}

  \begin{rem}
    \label{rem-partial}
    For a closed program $M$, we explain 
    the operational meaning of $\rea(\rt{A}{B})(M)$:
    By combining the definition of $\rea(\rt{A}{B})$ 
  with 
Lemma~\ref{lem-prod-notbot}, 
we see that 
  $\rea(\rt{A}{B})(M)$ says two things: 
  (1) if $A$ is realizable, then $M$ is productive, and
  (2) if $M$ is productive then $M$ realizes $B$.
  Recall that `productive' is the counterpart of `terminating' for concurrent 
  infinite computation. Therefore, one can consider (2) the 
 `partial correctness'  of $M$ with respect to the specification $B$.
  
  (1) and (2) together imply that if $A$ is realizable, then
 $M$ realizes $B$, a property one can call `conditional correctness' of $M$,
 and which is what $\rea(A \to B)(M)$ means if $A$ is Harrop
 (see Example~\ref{ex-d}).
 However, $\rea(\rt{A}{B})(M)$ says more than that.
 It says that even if $A$ is not realizable, 
 all the defined (i.e., non-bottom) values obtained by computing $M$ are correct.
 This is what we need for concurrent computation as we explained 
 in the introduction.

  To highlight the difference between restriction and implication
in a more concrete situation, 
consider $\rt{A}{(A\lor B)}$ vs.\ $A \to (A \lor B)$
where $A$ is Harrop. Clearly $\Left$ realizes $A \to (A \lor B)$,
but $\Left$ does not realize $\rt{A}{(A\lor B)}$ 
unless $A$ is realizable.
\end{rem}

\begin{rem}
  \label{rem-mult}
Next, let us discuss $\rea(\Set(B))(M)$ from an operational point of view:
By the 
Adequacy Lemma (\ref{lem:ssp}(\ref{lem:ssp:ade})), 
this formula
means that    
$M$ reduces with $\ssp^*$ to $\Amb(N, K)$ 
such that        
both terms $N$ and $K$ are partially correct realizers of $B$ 
and at least one of them is productive.  
    Therefore, by executing 
both 
concurrently and 
taking the one that produces some
value (i.e., is reduced to a deterministic w.h.n.f.), one has a realizer of $B$.
Hence, if $\rea(\Set(B))(M)$, 
then there is a program $M'$ such that $M\newprintp^* M'$  and 
$M'$ realizes $B$. 
This conclusion is a part of the Program Extraction Theorem (Theorem \ref{thm-pe}).
\end{rem}

\begin{rem}
\label{rem-reali-denot}
In the literature, realizability is usually \emph{operational},
i.e.\ realizers are syntactic objects, such as programs or processes 
(or codes thereof) with an operational semantics
(see e.g.\ Kleene's original realizability~\cite{Kleene45}, 
or Krivine's classical realizability~\cite{Krivine03}).
In contrast, in our setting realizability is \emph{denotational}
since realizers can be arbitrary (even uncomputable) 
elements of the Scott domain $D$. 

The connection with the operational semantics of programs arises indirectly. 
Namely, when the denotational realizer extracted from a proof is representable 
by a program term, the Computational Adequacy Theorem allows us to transfer 
denotational correctness to operational correctness. 

The Program Extraction Theorem~\ref{thm-pe} states the realizability 
of a formula by a limit of an infinite sequence of data that can, in general, 
not be represented by a single program.
Therefore, it is necessary to define realizability denotationally, 
for arbitrary domain elements. 
Moreover, the denotational approach 
makes results about realizability more abstract, 
general, and robust, since an extension of the programming language 
will not affect them. 
It also simplifies specification and correctness proof of 
extracted programs~(Section~\ref{sec-pe}), 
since these can be formalized in the system $\RCFP$ whose soundness only 
refers to the denotational semantics.
\end{rem}

\section{Extraction of concurrent programs from CFP proofs}
\label{sec-pe}
We show that the realizability interpretation 
of $\CFP$ is sound in 
the sense that from every $\CFP$ derivation one can extract 
a program realizing the proven formula (Theorem~\ref{thm-soundnessI}). 
In Section~\ref{sub-partial} we show the realizability of the new proof rules and hence the soundness of the extracted programs. 

In Section~\ref{subsec:soundness} we show that,
{for formulas satisfying a certain syntactic condition, }
the data of an extracted program $M$ (i.e., the elements of $\ddata(\val{M})$) realize the formula obtained by deleting %
$\Set$ and replacing restriction by implication.

Recall that, 
by the Computational Adequacy Theorems \ref{thm:data} and ~\ref{thm:dataconv}, 
the elements of $\ddata(\val{M})$ are exactly the data computed by
the operational semantics of $M$. 

 Most proofs in this section take place in the system $\RCFP$
(Section~\ref{sub-RCFP})
which is based on classical logic. Classical logic is needed 
since for 
the rules (Rest-intro), (Rest-stab), and (Conc-lem), 
the verification that the extracted 
programs are realizers uses the law of excluded middle.
Note the relation `$d\in\ddata(a)$', which was defined on the meta-level 
in Section~\ref{sub-denot}, can be formalized in $\RCFP$.
This also holds for the rank function and the predicates $\Data$ and $\regD$ 
defined in Section~\ref{subsec:soundness}.
Since we always argue within $\RCFP$ we will refrain
from using semantic brackets from now on. 
For example, we will
write $\ddata(M)$ instead of $\ddata(\val{M})$,  
and the above statement about admissible formulas  
stands for the $\RCFP$ formula
$\rea(A)(M) \to \ddata(M) \subseteq \rea(A^-)$
where $A^-$ is obtained from $A$ by deleting $\Set$ and
replacing restriction by implication.
In fact, Theorem~\ref{thm-faithfulness} shows (for closed $A$) the more 
general formula $\forall a\,(\rea(A)(a) \to \ddata(a) \subseteq \rea(A^-))$.

\subsection{Soundness of extracted programs}
\label{sub-partial}
\label{sub-conc}   

The realizers of the proof rules for restriction and concurrency
in Table~\ref{table-infrule}
are depicted in Table~\ref{table-infrule-sound}.
Proofs of the correctness of the realizers are given in
Lemma~\ref{lem-restrict}.
We use the 
following typable programs:
\begin{eqnarray*}
\leftright
&\eqdef& 
  \lambda b.\,  
\caseof{b}  \{\Left(\_) \to \Left; \Right(\_) \to \Right\}
:
\ftyp{(\rho+\sigma)}{(1+1)}\\
\mapamb
&\eqdef& 
\lambda f.\,\lambda c.\ 
\caseof{c}\,\{\Amb(a,b)\to\Amb(\strictapp{f}{a},\strictapp{f}{b})\}
:
\ftyp{(\ftyp{\rho}{\sigma})}{\ftyp{\Am(\rho)}{\Am(\sigma)}}\\
\seq 
&\eqdef& 
\lambda a.\,\lambda b.\,\strictapp{(\lambda c.\ b)}{a}
:
\ftyp{\rho}{\ftyp{\sigma}{\sigma}}
\quad \hbox{(used infix)}
\end{eqnarray*}
Here, $\rho$, $\sigma$ range over regular types, and for $\mapamb$
it is required in addition that $\rho$ and $\sigma$ are determined.

\begin{table}
\medbreak
\noindent
\fbox{
\begin{minipage}{\textwidth}
\medskip
\[
\infer[{\hbox{\begin{tabular}{l}Rest-intro
($A, B_0, B_1$ Harrop)\end{tabular}}
}]{
        \ire{(\leftright\, b)}{\rt{A}{(B_0 \vee B_1)}}
}{
\ire{b}{(A \to (B_0 \vee B_1))} \ \ \     \reah({\neg A \to B_0 \wedge B_1})
}
\]
\vspace{0.2em}
\[
\begin{array}{ll}
\infer[{ \hbox{Rest-return\ }}]{   \ire{a}{\rt{A}{B}}
}{
  \ire{a}{B}
} 
&
\infer[{
        \hbox{Rest-bind}
        \hbox{\begin{tabular}{l}
                $c=\strictapp{f}{a}$ if $B$ non-Harrop\\
                $c=a\,\seq\,f$ if $B$ Harrop
             \end{tabular}}
       }
]{\ire{c}{\rt{A}{B'}}
}{
\ire{a}{ \rt{A}{B}}\ \ \          \ire{f}{( B \to (\rt{A}{B'}))}
}
\end{array}
\]
\vspace{0.2em}
\[
\begin{array}{ll}
  \infer[{ \hbox{Rest-antimon}}]{
    \ire{a}{\rt{A'}{B}}
    }{
      \ire{}{(A' \to A)} \ \ \ \ire{a}{\rt{A}{B}  }
}\ \ \ \ \ \ \ &
  \infer[{ \hbox{Rest-mp}}]{
    \ire{b}{B}
}{
\ire{b}{\rt{A}{B}} \ \ \    \ire{}{A}
}
\end{array}
\]
\[
\begin{array}{ll}
  \infer[{ \hbox{Rest-efq}}]{
  \ire{\bot}{\rt{\False}{B}}
}{
}
\ \ \ \ \ \ \ \ &
\infer[{\hbox{Rest-stab}}]{
    \ire{b}{\rt{\neg\neg A}{B}}
    }{
    \ire{b}{\rt{A}{B}}
}

\end{array}
\]
\vspace{0.2em}
\[
  \infer[{ \hbox{Conc-lem}}]{
  \ire{\Amb(a,b)}{\Set(B)}
}{ 
\ire{a}{\rt{A}{B}}   \ \ \ \     \ire{b}{\rt{\neg A}{B}}
}
\qquad
  \infer[{ \hbox{Conc-return}}]{\ 
  \ire{\Amb(a,\bot)}{\Set(B)}
}{
\ire{a}{B}
}
\]
\vspace{0.2em}
\[
  \infer[{
            \hbox{Conc-mp}
            \hbox{\begin{tabular}{l}
               $d=\mapamb\ f\ c$ if $A$ non-Harrop\\
               $d=\Amb(f, \bot)$ if $A$ Harrop
                  \end{tabular}}
         }
]
{\
\ire{d}{\Set(B)}
}{
  \ire{f}{(A\to B)}\ \ \  \ire{c}{\Set(A) }
}\ \ \ \ \ 
\]
\smallskip
\end{minipage}
}
\medbreak
\caption{Realizers of the inference rules for $\rt{A}{B}$ and $\Set{B}$ with non-Harrop conclusions
\label{table-infrule-sound}.}
\end{table}

\begin{lem}
\label{lem-restrict}
The rules for restriction and concurrency are realizable, provably in $\RCFP$.
\end{lem}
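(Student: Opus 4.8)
The plan is to unfold the realizability predicates $\rea(\rt{A}{B})$ and $\rea(\Set(B))$ from Table~\ref{table-realizability-cfp} and then to check, rule by rule, that the program displayed in the conclusion in Table~\ref{table-infrule-sound} satisfies all conjuncts of the corresponding realizability predicate whenever the programs (or facts) in the premises realize theirs. Concretely, to show $\ire{b}{\rt{A}{B}}$ one must verify (i) $b:\tau(B)$, (ii) if $A$ is realizable then $\defined{b}$, and (iii) if $\defined{b}$ then $\ire{b}{B}$; and to show $\ire{c}{\Set(B)}$ one must exhibit $c=\Amb(a,b)$ with $a,b:\tau(B)$, with at least one of $a,b$ defined, and with each defined component realizing $B$. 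Type correctness (clause (i) and the typing side conditions) is handled uniformly by Lemma~\ref{lem-realizers-typed} together with the typing rules, so the real work lies in the definedness and correctness clauses.

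First I would dispatch the intuitionistic rules Rest-return, Rest-mp, Rest-antimon, Rest-efq, Rest-bind, Conc-return and Conc-mp, where no classical reasoning is needed. The two recurring tools are Lemma~\ref{lem-strict}(2), which guarantees that a realizer of a strict formula is never $\bot$ (this yields the definedness clause for Conc-return and the ``positive'' directions in Rest-bind and Conc-mp), and the $\RCFP$ axioms for strict application, $\defined{a}\to\strictapp{f}{a}=f\,a$ and $\strictapp{f}{\bot}=\bot$. The latter is the crux of Rest-bind (realizer $\strictapp{f}{a}$) and of Conc-mp (realizer $\mapamb\,f\,c$, whose denotation is $\Amb(\strictapp{f}{a'},\strictapp{f}{b'})$ when $c=\Amb(a',b')$): strictness of application is exactly what lets me run clause (iii) backwards, i.e.\ deduce from $\defined{(\strictapp{f}{a})}$ that $\defined{a}$ and hence, via the partial-correctness clause of the premise, that $a$ realizes its specification before $f$ is applied. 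Rest-efq is immediate because $\False$ has no realizer and the chosen realizer is $\bot$, so clauses (ii) and (iii) hold vacuously; Rest-antimon uses that realizability of $A'\to A$ propagates realizability of $A'$ to $A$.

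The interesting rules are Rest-intro, Rest-stab and Conc-lem, where the law of excluded middle of $\RCFP$ enters. The key observation I would isolate first is that $\neg A$ is always a Harrop formula (an implication is Harrop iff its conclusion is, and $\False$ is Harrop), so by Table~\ref{table-realizability} its realizability is $\reah(\neg A)=(\text{$A$ realizable})\to\reah(\False)$, which is just the negation of ``$A$ is realizable''; consequently ``$\neg A$ is realizable'' and ``$A$ is realizable'' are complementary, and ``$\neg\neg A$ is realizable'' is classically equivalent to ``$A$ is realizable''. Using $\mathbf{LEM}$ to split on whether $A$ is realizable, Conc-lem is then immediate: in the realizable case the premise $\ire{a}{\rt{A}{B}}$ forces $\defined{a}$, and in the non-realizable case $\neg A$ is realizable, so the premise $\ire{b}{\rt{\neg A}{B}}$ forces $\defined{b}$; either way the disjunction $\defined{a}\lor\defined{b}$ required by $\rea(\Set(B))$ holds, while the two partial-correctness clauses come directly from the premises. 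Rest-stab reduces, via double-negation elimination on ``$A$ realizable'', to the definedness clause of the premise $\ire{b}{\rt{A}{B}}$. For Rest-intro the realizer is $\leftright\,b$: if $A$ is realizable the first premise already places $b$, hence $\leftright\,b$, in the correct injection, and if $A$ is not realizable then $\neg A$ is realizable and the second premise $\reah(\neg A\to B_0\land B_1)$ yields both $\reah(B_0)$ and $\reah(B_1)$, so whichever of $\Left$ or $\Right$ the term $\leftright\,b$ evaluates to is a correct realizer of $B_0\lor B_1$.

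I expect the main obstacle to be the careful bookkeeping around strict application in Rest-bind and Conc-mp, namely establishing both directions of the definedness clause simultaneously: the forward direction needs strictness to see that applying $f$ to a defined, correct argument yields a defined result, while the backward direction needs the axiom $\strictapp{f}{\bot}=\bot$ to recover definedness of the argument from definedness of the output. A secondary care point is making the $\mathbf{LEM}$ case split precise at the level of the predicate ``$A$ is realizable'' and correctly identifying the Harrop realizability of $\neg A$ and $\neg\neg A$; once that equivalence is in place, the classical rules are short. The instances with Harrop conclusions are analogous and, as noted in the text, omitted.
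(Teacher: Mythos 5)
Your proposal is correct and follows essentially the same route as the paper's own proof: unfold the realizability predicates of Table~\ref{table-realizability-cfp}, verify each rule's displayed realizer clause by clause, use the strict-application axioms ($\strictapp{f}{\bot}=\bot$, $\defined{a}\to\strictapp{f}{a}=f\,a$) together with strictness of $B$ (Lemma~\ref{lem-strict}(2)) for Rest-bind, Conc-return and Conc-mp, and invoke the law of excluded middle on ``$A$ is realizable'' (noting $\re(\neg A)\equiv\neg\re(A)$ since $\neg A$ is Harrop) for Rest-intro, Rest-stab and Conc-lem. The paper argues rule by rule in exactly this way, deferring the Harrop-conclusion cases just as you do.
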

\begin{proof}  %
Table~\ref{table-infrule-sound} gives an overview of the realizers.
The notation in the form of rules displays
the properties of the realizers in a way that highlights the connection with
the corresponding $\CFP$-rule.
For example 
\medskip

\noindent\emph{Rest-intro}:
  \ \raisebox{-0.2cm}{$\infer[\hbox{($A, B_0, B_1$ Harrop)}]{
          \ire{(\leftright\,b)}{\rt{A}{(B_0 \vee B_1)}}
  }{
  \ire{b}{(A \to (B_0 \vee B_1))} \ \ \     \reah({\neg A \to B_0 \wedge B_1}) 
  }
  $}.\medbreak

\noindent
stands for the $\RCFP$-formula
\[\forall b\,.\,\ire{b}{(A \to (B_0 \vee B_1))} \land
                 \reah({\neg A \to B_0 \wedge B_1})
          \to \ire{(\leftright\,b)}{\rt{A}{(B_0 \vee B_1)}} \]
where $A, B_0, B_1$ range over Harrop formulas.          
To prove it, we assume
\begin{center}
\begin{tabular}{lcl}
$\ire{b}{(A \to (B_0 \vee B_1))}$ &i.e.& 
  $b:\tau(B_0\lor B_1) \land(\reah(A) \to \ire{b}{(B_0 \vee B_1)})$,\\
  $\reah(\neg A \to B_0 \wedge B_1)$ &i.e.& 
   $\neg \reah(A) \to \reah(B_0) \wedge \reah(B_1)$.
\end{tabular}
\end{center}
  We claim that $\rt{A}{(B_0 \vee B_1)}$ 
  is realized by   $\leftright\,b$.
  First, note that the image of $\leftright$ is $\{\bot,\Left,\Right\}$ 
  which is a subset of 
\[\tau(B_0\lor B_1) = (\one+\one)_{\bot} = \{\bot,\Left(\bot),\Right(\bot),\Left,\Right\},\]
  so $\leftright\,b : \tau(B_0\lor B_1)$.
  Assume %
  $\reah(A)$. 
  Then $b$ realizes $B_0 \lor B_1$. 
  Hence $b\in\{\Left,\Right\}$ and therefore $\leftright\,b = b \ne \bot$.
  Now assume $\defined{\leftright\,b}$.
  We do a (classical) case analysis on whether $\reah(A)$ holds.
  If $\reah(A)$, then $\ire{b}{(B_0 \vee B_1)}$.
  Hence $b\in\{\Left,\Right\}$ and therefore $\leftright\,b = b$,
  so $\ire{(\leftright\,b)}{(B_0 \lor B_1)}$.
  If $\neg \reah(A)$, then $\reah(B_0)$ and $\reah(B_1)$. Hence,
  $\Left$ and $\Right$ both realize $B_0 \lor B_1$.
  Since $\defined{\leftright\,b}$,
  $\leftright\, b \in \{\Left,\Right\}$ and 
  therefore $\ire{(\leftright\,b)}{(B_0 \lor B_1)}$
  \footnote{Note that $\leftright\,b$ cannot be replaced by $b$ since if $\neg\reah(A)$, $\reah(B_0)$, $\reah(B_1)$ hold and $b = \Left(\bot)$, then both premises of Rest-intro in Table~\ref{table-infrule-sound} hold, but $b$ does not realize $\rt{A}{(B_0 \vee B_1)}$.}.

  \medbreak\noindent
  \emph{Rest-return}:\ \raisebox{-0.2cm}{$\infer{  
   \ire{a}{\rt{A}{B}}
  }{
    \ire{a}{B}
  }$}. \medbreak
  
  Since $B$ is strict, $\ire{a}{B}$ implies $a \ne \bot$.
  Therefore, clearly $\ire{a}{\rt{A}{B}}$.
  
  \medbreak
  \noindent
  \emph{Rest-bind}:\ \raisebox{-0.2cm}{
  \infer{
        \ire{(\strictapp{f}{a})}{\rt{A}{B'}}
  }{
  \ire{a}{ \rt{A}{B}}\ \ \          \ire{f}{( B \to (\rt{A}{B'}))}
  }
  } 
\medbreak
  
  If $B$ is non-Harrop, then 
  we have $\forall c\,(\ire{c}{B} \to \ire{(f\,c)}{\rt{A}{B'})})$.
  If $\re\,A$ then $\defined{a}$ and  $\ire{a}{B}$, and therefore $f\, a$ 
  realizes $\rt{A}{B'}$.
  Therefore $\defined{f\, a}$ because $\re\,A$.
  Note that $\strictapp{f}{a} = f\, a$ because $\defined{a}$.
  If $\defined{\strictapp{f}{a}}$, then $\defined{a}$.
  Since $\defined{a}$ and $\ire{a}{\rt{A}{B}}$,  we have $\ire{a}{B}$.
  Therefore,  $\strictapp{f}{a} = f\, a$ realizes $\rt{A}{B'}$.
  If $B$ is Harrop,  
    then
  $a\,\seq\,f$ 
 realizes $\rt{A}{B'}$ with a similar argument.
  
  \medbreak
  \noindent \emph{Rest-antimon}: \ \raisebox{-0.2cm}{
  $\infer{
    \ire{a}{\rt{A'}{B}}
      }{
        \ire{}{(A' \to A)} \ \ \ \ire{a}{\rt{A}{B}  }}$}. 
\medbreak

  Clearly, $\ire{a}{\rt{A'}{B}}$
  since $\re\,A'$ implies $\re\,A$.
  
  \medbreak
  \noindent \emph{Rest-mp}: \ \raisebox{-0.2cm}{
  $  \infer{
      \ire{b}{B}
  }{
  \ire{b}{\rt{A}{B}} \ \ \    \ire{}{A}
  }
  $}. 
\medbreak
  
  Clear from the definition of $\ire{b}{\rt{A}{B}}$.
  
  \medbreak
  \noindent \emph{Rest-efq}: \ {
  $\ire{\bot}{\rt{\False}{B}} $} for strict $B$.\medbreak
  
  Clear.
  
  \medbreak
  \noindent \emph{Rest-stab}: \ \raisebox{-0.2cm}{
  $
  \infer{
      \ire{b}{\rt{\neg\neg A}{B}}
      }{
      \ire{b}{\rt{A}{B}}}
  $}.  \medbreak
  
We use classical logic.
Clearly, $\re\,(\neg\neg A)$ is equivalent to  
$\neg\neg (\re\, A)$, and hence, by classical logic, equivalent to $\re\, A$.
Therefore, premise and conclusion 
are equivalent.

  \medbreak
  \noindent \emph{Conc-lem}: \ \raisebox{-0.2cm}{
  $
    \infer{
    \ire{\Amb(a,b)}{\Set(B)}
  }{
  \ire{a}{\rt{A}{B}}   \ \ \ \     \ire{b}{\rt{\neg A}{B}}
  }$}. 
\medbreak
  
  By classical logic
  $\re\,A$, or $\neg(\re\,A)$ i.e.\ $\re\,(\neg A)$. In the first case
  $a \ne \bot$  and in the second case $b \ne \bot$. Further, if $a \ne \bot$,
  then $a$ is a realizer of 
  $B$ since  $a$ realizes $\rt{A}{B}$. Similarly for $b$.
  
  \medbreak
  \noindent \emph{Conc-return}: \ \raisebox{-0.2cm}{
  $  \infer{\   \ire{\Amb(a,\bot)}{\Set(B)}
  }{
  \ire{a}{B}
  }$}. \medbreak
  
  Clear.
  
  \medbreak
  \noindent \emph{Conc-mp}: \ \raisebox{-0.2cm}{
  $  \infer{\
  \ire{(\mapamb\,f\,c)}{\Set(B)}
  }{
    \ire{f}{(A\to B)}\ \ \  \ire{c}{\Set(A) }
  }
  $}
$A$ non-Harrop
\medbreak
  
  We show that $\mapamb\,f\,c$ realizes $\Set(B)$.
  If $\ire{\Amb(a, b)}{\Set(A)}$, then 
  $\defined{a} \lor \defined{b}$.  If $\defined{a}$, then
  $\strictapp{f}{a} = f\ a$ and $\ire{a}{A}$, therefore
  $\ire{(\strictapp{f}{a})}{B}$.  Since $B$ is strict, we have
  $\defined{\strictapp{f}{a}}$.   In the same way, 
  if $\defined{b}$ then $\defined{\strictapp{f}{b}}$.
  Therefore, we have $\defined{\strictapp{f}{a}} \lor \defined{\strictapp{f}{b}}$.  If $\defined{\strictapp{f}{a}}$, then $\defined{a}$ and thus
  $\ire{(\strictapp{f}{a})}{B}$ as we have observed.
  If  $A$ is Harrop,  then, since clearly $\re{A}$,
  it is realized by $\Amb(f,\bot)$.
  \end{proof}
\begin{thm}[Soundness Theorem I]
\label{thm-soundnessI}
From a 
$\CFP$ proof 
of a 
formula $A$ 
one can extract a 
program $M$ 
such that 
$\vdash M : \tau(A)$
and
$\RCFP$ 
proves $\ire{M}{A}$. 

More generally, let $\Gamma$ be a set of Harrop formulas and
$\Delta$ a set of non-Harrop formulas.
Then, from a 
$\CFP$ 
proof of 
a formula $A$ 
from the assumptions $\Gamma,\Delta$,
one can extract a program $M$ with $\FV(M) \subseteq \vec u$ 
such that $\vec u : \tau(\Delta) \vdash M: \tau(A)$ 
and $\ire{M}{A}$ is 
provable in 
$\RCFP$
the assumptions $\reah(\Gamma)$,
$\ire{\vec u}{\Delta}$. 
\end{thm}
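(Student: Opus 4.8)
The plan is to argue by induction on the given $\CFP(\ax)$ derivation, establishing typability and realizability simultaneously. To each derivation of $\Gamma,\Delta\vdash A$ I would assign, by recursion on the derivation, an extracted program $M$, and prove in one induction both that $\vec u:\tau(\Delta)\vdash M:\tau(A)$ is derivable by the rules of Table~\ref{fig-typing} and that $\ire{M}{A}$ is provable in $\RCFP(\ax)$ from $\reah(\Gamma)$ and $\ire{\vec u}{\Delta}$. Here $\vec u$ collects the program variables assigned to the non-Harrop assumptions in $\Delta$, while a Harrop assumption contributes no realizer variable and enters only through the corresponding hypothesis in $\reah(\Gamma)$. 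The first, closed, assertion is the special instance $\Gamma=\Delta=\emptyset$; the verification takes place in $\RCFP$, and it specialises to $\RIFP$ on sub-derivations that avoid the three classical rules discussed below.

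First I would dispatch the base cases and the rules inherited from $\IFP$. Assumption nodes are realized by the associated program variable in the non-Harrop case and trivially in the Harrop case; axioms are nc-formulas, so $\reah$ of an axiom coincides with the axiom itself and its realizability reduces to truth, which is available as a hypothesis in $\RCFP(\ax)$; the equality rules are immediate. For the introduction and elimination rules of $\land,\lor,\to,\forall,\exists$ and for the (co)closure and (co)induction rules, I expect the extracted programs (pairing, injections, $\lambda$-abstraction, application, and $\rec$ with the monotonicity witnesses) and the verification of their realizability to carry over essentially verbatim from \cite{IFP}, since the realizability clauses of Table~\ref{table-realizability} for the standard connectives are unchanged and none of those arguments inspects the internal shape of the realizability predicates of the immediate subformulas. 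They therefore remain valid when formulas range over $\CFP$ and the new operators appear as subformulas. Typability of these realizers would be supplied by Lemma~\ref{lem-realizers-typed}, and the substitution steps (in $\to$-elimination and in fixed-point unfolding) by Lemma~\ref{lem-typing-subst}.

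The genuinely new content is confined to the rules of Table~\ref{table-infrule} for restriction and concurrency, and for these most of the work is already done. The extracted realizers would be exactly those displayed in Table~\ref{table-infrule-sound}, with the Harrop sub-cases handled as indicated there; their realizability in $\RCFP$ is Lemma~\ref{lem-restrict}; and their typability follows from Lemma~\ref{lem-realizers-typed} applied to $\rt{A}{B}$ and $\Set(B)$ via the clauses $\tau(\rt{A}{B})=\tau(B)$ and $\tau(\Set(B))=\Am(\tau(B))$. It is precisely here, in (Rest-intro), (Rest-stab) and (Conc-lem), that the law of excluded middle of $\RCFP$ is consumed, which is why the statement is phrased over $\RCFP$ rather than $\RIFP$ in the presence of these operators. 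Assembling these cases with the inherited ones would complete the induction.

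The main obstacle I anticipate is not a single deep step but the consistent bookkeeping of the Harrop/non-Harrop distinction, which simultaneously changes the type $\tau(A)$ of a formula and the shape of its realizer (for instance a Harrop conjunct or premise is erased from the type and realized trivially). One has to track this uniformly across all rules so that the program produced by the recursion is exactly the one whose correctness is verified, and so that the typing derivation of Table~\ref{fig-typing} matches the recursively chosen realizer; the case analyses built into Table~\ref{table-infrule-sound} and Lemma~\ref{lem-restrict} are organised precisely to make these two recursions agree. A secondary point needing care is that the monotonicity realizers used in the (co)induction cases must be built for $\CFP$ operators; since strict positivity is extended to $\rt{A}{B}$ and $\Set(B)$ by declaring their strictly positive positions to be those of $B$, I expect the $\IFP$ construction of these witnesses to transfer without change.
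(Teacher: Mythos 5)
Your overall route coincides with the paper's: induction on $\CFP$-derivations, with the inherited $\IFP$ rules discharged by appeal to~\cite{IFP} (together with the correct observation that those arguments are insensitive to the new operators occurring inside subformulas), the restriction and concurrency rules discharged by the realizers of Table~\ref{table-infrule-sound} and Lemma~\ref{lem-restrict}, and classical logic consumed exactly at (Rest-intro), (Rest-stab) and (Conc-lem). On the realizability half of the conclusion your proposal is therefore essentially the paper's proof.

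There is, however, a genuine gap in how you discharge the typing half. Twice you claim that typability of the extracted programs ``follows from Lemma~\ref{lem-realizers-typed}'', but that lemma proves the \emph{semantic} statement $\rea(P)\subseteq\adummy{\tau(P)}$ inside $\RCFP$: there `$:$' is the binary relation symbol of $\RIFP$/$\RCFP$, interpreted as membership of a domain element in a subdomain of $D$. What Theorem~\ref{thm-soundnessI} asserts is different in kind: that the sequent $\vec u : \tau(\Delta) \vdash M : \tau(A)$ is \emph{derivable in the typing calculus} of Table~\ref{fig-typing}. Semantic typing of the denotation of $M$ (which is anyway a consequence of realizability) does not produce a syntactic typing derivation for the term $M$, so this step fails as stated. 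The typing part of the induction must instead be carried out directly on the extracted terms: for the new rules one uses the fact that the programs of Table~\ref{table-infrule-sound} are assembled from the syntactically typed combinators $\leftright : (\alpha+\beta)\to\one+\one$, $\mapamb : (\alpha\to\beta)\to\Am(\alpha)\to\Am(\beta)$ and $\seq : \alpha\to\beta\to\beta$, so that the required derivations exist by the rules of Table~\ref{fig-typing} together with Lemma~\ref{lem-typing-subst} (polymorphism and cut); for the $\IFP$ rules one inherits the \emph{syntactic} typability statement from the soundness theorem of~\cite{IFP}, not from Lemma~\ref{lem-realizers-typed}. This is precisely how the paper argues: the type correctness of the programs extracted for (Rest-intro), (Rest-bind) and (Conc-mp) is obtained from the displayed typings of $\leftright$, $\mapamb$ and $\seq$. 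With this repair the remainder of your proposal goes through unchanged.
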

\begin{proof} %
Since the proof has the same architecture as the proof of the
corresponding Theorem~2 in~\cite{IFP},
we only give a sketch, highlighting the main changes and additions:
One first proves soundness for a modified system $\CFP'$
which is identical to $\CFP$ except that the induction and
coinduction rules have an additional monotonicity premise, 
\begin{center}
\AxiomC{$\Mon(\Phi)$}
\AxiomC{$\Gamma \vdash \Phi(P)\subseteq P$}
             \BinaryInfC{$\Gamma \vdash \mu(\Phi)\subseteq P$}
            \DisplayProof 
\hspace{1em}
\AxiomC{$\Mon(\Phi)$}
\AxiomC{$\Gamma \vdash P \subseteq \Phi(P)$}
             \BinaryInfC{$\Gamma \vdash P \subseteq \nu(\Phi)$}
            \DisplayProof 
\end{center}
where
$\Mon(\Phi)\eqdef X\subseteq Y\to\Phi(X)\subseteq\Phi(Y)$
with fresh predicate variables $X, Y$.
The proof of soundness for $\CFP'$ (corresponding to Theorem~3
in~\cite{IFP}), is by induction on derivations.
The realizability of the rules of $\IFP$, including the modified
induction and coinduction rules, is shown exactly as in~\cite{IFP}.
The rules for restriction and concurrency are taken care of
in Lemma~\ref{lem-restrict}. 
For example, regarding the rule Rest-intro, if (by induction hypothesis),
$\RCFP$ proves $\ire{M}{(A \to (B_0 \vee B_1))}$ and 
$\reah({\neg A \to B_0 \wedge B_1})$, then, by Lemma~\ref{lem-restrict},
$\RCFP$ proves $\ire{(\leftright\,M)}{\rt{A}{(B_0 \vee B_1)}}$.

The second step is to transform, by a straightforward structural recursion,
every $\CFP$ proof into a $\CFP'$ proof. To transform induction and coinduction,
one needs to provide a $\CFP'$ proof of $\Mon(\Phi)$ for every
s.p.\ operator $\Phi$, which can be done by structural recursion
on $\CFP$ expressions, as in~\cite{IFP}, Lemma~19. The concurrency operator
and restriction are dealt with by Rest-bind, Rest-return, and Conc-mp,
the other cases are as in~\cite{IFP}.

The type correctness of the extracted programs for
Rest-intro, Rest-bind and Conc-mp follows from the typings
of $\leftright$, $\mapamb$, and $\seq$; for the $\IFP$ rules
it follows as in~\cite{IFP} (the typing rules in~\cite{IFP} are formulated
without the regularity restriction, but they are only needed for
regular contexts)

We conclude the sketch by making explicit the program extraction 
process contained in this proof by defining a program extraction
function $\ep{\cdot}$ operating on proofs, using an extension of the
(hopefully self-explanatory) term
notation for $\CFP$ in Section~4.2 of~\cite{IFP}:
\[\ep{d^A} \eqdef \epp{\pt{d^A}}\]
where $\pt{\cdot}$ is the transformation
of $\CFP$ proofs into $\CFP'$ proofs sketched in the
proof of Theorem~\ref{thm-soundnessI}, 
and $\epp{\cdot}$ is the program extraction procedure
for $\CFP'$ proofs.
The transformation $\pt{d^A}$ replaces recursively every 
subderivation of the form $\indu{e^{\Phi(P)\subseteq P}}$ by 
$\induprime{\pt{e^{\Phi(P)\subseteq P}},\monproof{\Phi}^{\monprop{}{\Phi}}}$
where $\monproof{\Phi}^{\monprop{}{\Phi}}$ is the proof of monotonicity 
of $\Phi$.
The function $\ep{\cdot}$ is defined by structural recursion on $d$.
Since, for proofs ending with an $\IFP$ rule, its defining equations 
are exactly as in~\cite{IFP}, we only show the equations for
the new $\CFP$ rules, for the case where the proven formula is non-Harrop:
\begin{eqnarray*}
\epp{\resti{d^{A\to(B_0\lor B_1)}}{e^{\neg A\to(B_0\land B_1)}}^{\rt{A}{(B_0\lor B_1)}}}
              &=& \leftright\,\epp{d} \\
\epp{\restret{d^{B}}{A}^{\rt{A}{B}}} &=& \epp{d} \\
\epp{\restb{d^{\rt{A}{B}}}{e^{B \to \rt{A}{B'}}}^{\rt{A}{B'}}} &=& 
\left\{
\begin{array}{ll}
\epp{d}\,\seq\,\epp{e} &\hbox{$B$ Harrop} \\
\strictapp{\epp{e}}{\epp{d}} &\hbox{otherwise}
\end{array}
\right. \\
\epp{\resta{d^{A'\to A}}{e^{\rt{A}{B}}}^{\rt{A'}{B}}} &=& \epp{e} \\
\epp{\restmp{d^{\rt{A}{B}}}{e^A}^{B}} &=& \epp{d} \\
\epp{\restefq{B}^{\rt{\False}{B}}} &=& \bot \\
\epp{\rests{d^{\rt{A}{B}}}^{\rt{\neg\neg A}{B}}} &=& \epp{d} \\
\epp{\concl{d^{\rt{A}{B}}}{e^{\rt{\neg A}{B}}}^{\Set(B)}}
      &=& \Amb(\epp{d},\epp{e}) \\
\epp{\concret{d^{B}}^{\Set(B)}} &=& \Amb(\epp{d},\bot) \\
\epp{\concmp{d^{A\to B}}{e^{\Set(A)}}^{\Set(B)}} &=& 
\left\{
\begin{array}{ll}
\Amb(\epp{d},\bot) &\hbox{$A$ Harrop} \\
\mapamb\,\epp{e}\,\epp{d} &\hbox{otherwise}
\end{array}
\right. 
\end{eqnarray*}
\end{proof}
\begin{rem}
\label{rem-harrop-essential}
From the proof of the soundness of the rule Rest-intro in 
Lemma~\ref{lem-restrict} one can see that the restriction of that rule to
Harrop formulas is essential. Therefore, the concept of Harropness 
is not only a matter of optimizing program extraction, 
but also necessary for obtaining soundness in the first place.
\end{rem}

\begin{lem}\label{class-orelim-sleep}
In $\CFP$, one can prove all instances of the following $\CFP$ formulas:
\begin{enumerate}
  \item[(1)] $(B\to B') \to \rt{A}{B} \to \rt{A}{B'}$  (Rest-mon)\medbreak
  \item[(2)] ${\rt{A_0}{B} \to \rt{A_1}{B} \to \neg\neg(A_0\lor A_1) \to \Set(B)}$
  (Class-orelim)\medbreak
  \item[(3)]
${\rt{A_0}{B_0} \to \rt{A_1}{B_1} \to \neg\neg(A_0\lor A_1) \to \Set(B_0\lor B_1)}$\medbreak
\item[(4)]
$A  \to {\rt{\neg A}{B}} $\medbreak
\item[(5)]
$(A \lor B) \to {\rt{\neg A}{B}} $\medbreak
\item[(6)]
$\rt{A_0}{(A \lor B)} \to {\rt{A_0\land\neg A}{B}} $\medbreak
\end{enumerate}

The extracted programs and their types, in the case where
$B,B',B_0,B_1,$ and $A$ are non-Harrop, are given below. For readability, we present the extracted programs in equivalent simplified form, obtained by applying the axioms of $\RCFP$.
In what follows, we refer to these simplified forms simply as the extracted programs.

\begin{enumerate}
  \item[(1)]
  $  \lambda f.\,\lambda b.\, \strictapp{f}{b}\ : \ 
  \ftyp{(\ftyp{\tau(B)}{\tau(B')})}{\ftyp{\tau(B)}{\tau(B')}}$, \smallskip

  \item[(2)]
  $\lambda a.\, \lambda b.\, \Amb(a,b)\ : \ 
  \ftyp{\tau(B)}{\ftyp{\tau(B)}{\Am(\tau(B))}}$, \smallskip
  
 \item[(3)]
 $  \lambda a.\, \lambda b.\, \Amb(\strictapp{\Left}{a},\strictapp{\Right}{b})\  : \ 
 \ftyp{\tau(B_0)}{\ftyp{\tau(B_1)}{\Am(\tau(B_0) + \tau(B_1))}}$,\smallskip
   
  \item[(4)]
  $\lambda a.\, \bot\ : \ 
  \ftyp{\tau(A)}{\tau(B)}$, \smallskip
    
  \item[(5)]
  $\lambda a.\, \caseof{a} \{\Left(\_) \to \bot;\Right(b) \to b\}\ : \ 
  \ftyp{(\tau(A) + \tau(B))}{\tau(B)}$, \smallskip
    
    \item[(6)] The same as (5).
  \end{enumerate}
\end{lem}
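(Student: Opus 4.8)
The plan is to derive each of the six formulas inside $\CFP$ using the rules of Table~\ref{table-infrule} together with the ordinary $\IFP$ rules, and then to read off the extracted realizers from these derivations via the Soundness Theorem (Theorem~\ref{thm-soundnessI}), checking that they agree---up to the operational equalities $\strictapp{f}{a}\simeq f\,a$ for strict $f$ and $\eta$-equivalence---with the programs listed in the statement. Throughout I would silently use that $B$, $B'$, $B_0\lor B_1$, etc.\ are strict, so that all the restriction and concurrency formulas appearing are well-formed. I will repeatedly exploit the realizer-preserving character of (Rest-antimon), (Rest-stab) and (Rest-return): each leaves the underlying realizer untouched, changing only the side formula.

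I would begin with the three routine items. For (1) (Rest-mon), from $B\to B'$ and (Rest-return) ($B'\to\rt{A}{B'}$) I obtain $B\to\rt{A}{B'}$ by composition, and then (Rest-bind) applied to $\rt{A}{B}$ yields $\rt{A}{B'}$; the extracted term is $\strictapp{f}{b}$ as claimed. For (4), assuming $A$ I derive $\neg A\to\False$ (i.e.\ $\neg\neg A$) intuitionistically, take $\rt{\False}{B}$ from (Rest-efq), and weaken its condition by (Rest-antimon) along $\neg A\to\False$ to obtain $\rt{\neg A}{B}$; since (Rest-efq) is realized by $\bot$ and (Rest-antimon) preserves realizers, the program is $\lambda a.\,\bot$. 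For (5), I perform a $\lor$-elimination on $A\lor B$: the $\Left$-branch is handled by (4) (realizer $\bot$), the $\Right$-branch by (Rest-return) (realizer the given realizer of $B$), producing exactly $\lambda a.\,\caseof{a}\{\Left(\_)\to\bot;\Right(b)\to b\}$.

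The characteristic item is (2) (Class-orelim). The key observation is the purely intuitionistic implication $\neg A_0\to\neg\neg A_1$, which follows from the hypothesis $\neg\neg(A_0\lor A_1)$ (assuming $\neg A_0$ and $\neg A_1$ yields $\neg(A_0\lor A_1)$, contradicting the double negation). I then apply (Rest-stab) to $\rt{A_1}{B}$ to obtain $\rt{\neg\neg A_1}{B}$, feed $\neg A_0\to\neg\neg A_1$ into (Rest-antimon) to get $\rt{\neg A_0}{B}$, and finally pair this with $\rt{A_0}{B}$ through (Conc-lem) to conclude $\Set(B)$. Because $\neg\neg(A_0\lor A_1)$ is a Harrop formula (its disjunction occurs only in the antecedent of an implication, hence at no strictly positive position), it carries no computational content and contributes no $\lambda$-abstraction; together with the realizer-preservation of (Rest-stab) and (Rest-antimon) this explains why the extracted program collapses to $\lambda a.\,\lambda b.\,\Amb(a,b)$.

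The remaining two items reduce to the ones already proved. For (3) I lift the hypotheses along the injections $B_0\to(B_0\lor B_1)$ and $B_1\to(B_0\lor B_1)$ using part (1) (Rest-mon)---which turns $\rt{A_0}{B_0}$ into $\rt{A_0}{(B_0\lor B_1)}$ by replacing the realizer $a$ with $\strictapp{\Left}{a}$, and symmetrically on the right---and then apply (2) with $B:=B_0\lor B_1$, giving $\Amb(\strictapp{\Left}{a},\strictapp{\Right}{b})$. For (6) I first weaken the condition by (Rest-antimon) along $(A_0\land\neg A)\to A_0$ to reach $\rt{A_0\land\neg A}{(A\lor B)}$, and then bind through $A\lor B$ by (Rest-bind): the required $(A\lor B)\to\rt{A_0\land\neg A}{B}$ is obtained from (5) followed by (Rest-antimon) along $(A_0\land\neg A)\to\neg A$, whose realizer is exactly that of (5); after the operational simplification $\strictapp{g}{c}\simeq g\,c$ and $\eta$ this gives the program of (5). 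The main obstacle I anticipate is precisely the bookkeeping in (2) and (6): confirming that all genuinely classical reasoning is confined to (Conc-lem) (where the law of excluded middle is consumed) and to the double-negation hypothesis, that every restriction and concurrency formula produced along the way is well-formed (strict $B$-part), and that the several realizer-preserving steps really do leave the stated programs unchanged up to the noted operational equalities.
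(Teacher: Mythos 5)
Your proposal is correct and follows essentially the same route as the paper's own proof: (1) via (Rest-return)+(Rest-bind), (2) via (Rest-stab), (Rest-antimon) along $\neg A_0\to\neg\neg A_1$, and (Conc-lem), (3) by lifting with (Rest-mon) and applying (2), (4) via (Rest-efq)+(Rest-antimon), (5) by $\lor$-elimination, and (6) via (Rest-antimon) and (Rest-bind) through $(A\lor B)\to\rt{\neg A}{B}$. You merely supply more detail than the paper (the intuitionistic derivation of $\neg A_0\to\neg\neg A_1$, the Harrop status of $\neg\neg(A_0\lor A_1)$, and the realizer bookkeeping), all of which checks out.
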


\begin{proof}
(1) This is an immediate consequence of (Rest-bind) and (Rest-return).
 The extracted program would, literally, be
$\lambda f.\,\lambda b.\, \strictapp{(\lambda a.\,f\,a)}{b}$, but,
applying $\eta$-reduction, it can be simplified as shown.

(2)  Assume $\rt{A_0}{B}$, $\rt{A_1}{B}$, and $\neg\neg(A_0\lor A_1)$.
By the second assumption and the rule (Rest-stab) we have $\rt{\neg\neg A_1}{B}$.
Since, by the third assumption, we have $\neg A_0 \to \neg\neg A_1$, 
$\rt{\neg A_0}{B}$ follows by (Rest-antimon). 
Together with the first assumption and
the rule (Conc-lem), we get $\Set(B)$.

  (3) Since  $B_i \to B_0\lor B_1$ ($i=0,1$), %
   this is an immediate consequence of
 (Rest-mon) and (Class-orelim), i.e.\ (1) and (2).

(4) Immediate, from (Rest-efq) and (Rest-antimon).

(5)
Suppose that $A \lor B$.  If $A$, then $\rt{\neg A}{B}$ by \HT{(4).}
 If $B$, then $\rt{\neg A}{B}$ by (Rest-return).

(6)
Suppose $\rt{A_0}{(A \lor B)}$. 
By (Rest-antimon), $\rt{A_0 \land \neg A}{(A \lor B)}$.
By (Rest-antimon) and $(A \lor B) \to \rt{\neg A}{B}$, we have 
$(A \lor B) \to \rt{A_0 \land \neg A}{B}$.
Therefore, by (Rest-bind), $\rt{A_0 \land \neg A}{B}$.
\end{proof}

\begin{example}
  \label{ex-dprime}
Continuing Example~\ref{ex-d}, we modify $\D(x)$ to 
$$\D'(x) \eqdef  \rt{x\neq 0}{(x\leq 0 \lor x\geq 0)}\,.$$
A realizer of $\D'(x)$, which has type $\bool$, may or may not terminate
(non-termination only occurs when $x = 0$).
However, in case of termination,  the result is guaranteed to 
realize $x\le 0 \lor x\ge 0$.
Note that, a realizer of $\D(x)$ also has type $\bool$ and may or may not terminate,
but there is no guarantee that
it realizes $x\le 0 \lor x\ge 0$ when it does terminate.
Nevertheless, 
$\D \subseteq \D'$ follows from (Rest-intro) 
(since $\D(x)$ is $x\ne 0\to x\le 0 \lor x\ge 0$, and 
$\neg x\ne 0\to x\le 0 \land x\ge 0$ is provable using stability of equality) 
and is realized by $\leftright$.
$\D' \subseteq \D$ holds by the rule Rest-mp.
\end{example}
\begin{example}
\label{example-ConsSD}
This 
builds on Examples~\ref{ex-d} and~\ref{ex-dprime}
and will be used in 
Section~\ref{sec-gray}.
Let $\tent(x) = 1-2|x|$ and consider the predicates
$\E(x) \eqdef  \D(x) \land \D(\tent(x))$ and
\[
\ConSD(x)  \eqdef  \Set((x\leq 0 \lor x\geq 0) \lor |x| \leq 1/2).
\]
We show $\E \subseteq \ConSD$:
From $\E(x)$ and Example~\ref{ex-dprime} we get $\D'(x)$ and $\D'(\tent(x))$
which unfolds to $\rt{x\neq 0}{(x\le 0  \lor x \ge 0 )}$ and 
$\rt{|x|\neq 1/2}{(|x| \geq 1/2  \lor |x| \leq 1/2 )}$. 
By Lemma \ref{class-orelim-sleep}~(6),
$\rt{|x| < 1/2}{(|x| \leq 1/2)}$. 
Since $\neg\neg((x \ne 0) \lor |x| < 1/2)$, we have $\ConSD(x)$ 
by  Lemma \ref{class-orelim-sleep}~(3).
 Moreover, $\tau(\E) = \bool\times \bool$ and 
$\tau(\ConSD) = \Am(\tri)$  where $\tri \eqdef \bool + \one$.
The extracted realizer of $\E \subseteq \ConSD$ is
\begin{align*}
&\conSD \eqdef \lambda c.\, 
   \caseof{c} \{ \\
& \hspace*{7em} \Pair(a, b) \to  
                  \Amb(\strictapp{\Left}{(\leftright\, a)},\\
& \hspace*{15.7em}        \Right {\downarrow} 
                         (\caseof{b} \{\Left(\_) \to \bot;\\
& \hspace*{23.9em}     \Right(\_) \to \Nil\}))\}
\end{align*}
   of type $\tau(\E \subseteq \ConSD) =  \bool\times \bool \to \Am(\tri)$~\footnote{Extracted literally, the inner case term is $\caseof{\leftright\,b} \{\Left(\_) \to \bot; \Right(c) \to c\}$, which simplifies to $\caseof{b} \{\Left(\_) \to \bot; \Right(\_) \to \Nil\}$.}   .
Explanation of this program:
$a$ is $\Left$ or $\Right$ depending on whether $x \leq 0$ or $x \geq 0$
but may also be $\bot$ if $x = 0$.
$b$ is $\Left$ or $\Right$ depending on whether 
$|x| \geq 1/2$ or $|x| \leq 1/2$ 
but may also be $\bot$ if $|x| = 1/2$.  Since $x = 0$ and $x = 1/2$ do not
happen simultaneously, by evaluating $a$ and $b$ concurrently, we obtain one of them
from which we can determine one of the cases $x \leq 0$, $x \geq 0$, or $|x| \leq 1/2$.
\end{example}

\subsection{Soundness of extracted data}
\label{subsec:soundness}
In $\CFP$, we have a second Soundness Theorem 
which ensures the correctness
of all computations, i.e.\ fair reduction paths, 
of an extracted program $M$. 
More precisely, we show in Theorem~\ref{thm-soundnessII} that 
if  $M$ realizes %
 an admissible (definition below)  $\CFP$-formula $A$ 
without free predicate variables, 
then all $d\in\ddata(M)$ realize the formula $A^-$ 
(recall from the introduction to Section~\ref{sec-pe} that $A^-$ is 
obtained from $A$ by deleting all concurrency operators $\Set$ 
and replacing every restriction by the corresponding implication).
Hence, by Computational Adequacy, all computations of $M$ 
converge to a realizer of $A^-$ (Theorem~\ref{thm-pe}).
Since $A^-$ is an $\IFP$ formula, this shows that 
the realizability interpretation of $\CFP$ is faithful to
that of
$\IFP$.
The proof of 
this result
requires some preparation.

Since the domain $D$ is algebraic (i.e.\ every element of $D$ is the 
directed supremum of compact elements), 
subdomains of $D$ can be characterized as follows. 
\begin{lem}
\label{lem-subdom}
A subset $X\subseteq D$ is a subdomain of $D$ iff the following two conditions hold:
\begin{enumerate}
\item[(i)] 
For all $a\in D$, $a$ is in $X$ iff all compact approximations of $a$ are in $X$.
\item[(ii)] If $a_0,b_0\in X$ are compact and consistent (in $D$), 
then $a_0\sqcup b_0\in X$.
\end{enumerate}
\end{lem}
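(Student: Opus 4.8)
The plan is to prove the two implications separately, relying throughout on the fact (recalled just before the lemma) that $D$ is a bounded-complete algebraic cpo, i.e.\ a Scott domain. Concretely I shall use three standard consequences of this: every $a\in D$ is the directed supremum of the compact elements below it; every bounded subset of $D$ has a supremum; and the (partial) binary join operation is Scott-continuous where it is defined. Recall also that a \emph{subdomain} means a subset that is downward closed and closed under suprema of bounded subsets, and that \emph{consistent} means possessing an upper bound in $D$.

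For the forward direction I assume $X$ is a subdomain and derive (i) and (ii). Condition (ii) is immediate: if $a_0,b_0\in X$ are compact and consistent, then $\{a_0,b_0\}$ is a bounded subset of $X$, so $a_0\sqcup b_0=\sqcup\{a_0,b_0\}\in X$ by closure under bounded suprema. For (i), the ``only if'' half is downward closure (each compact approximation of $a\in X$ lies $\sqsubseteq a$, hence in $X$), and the ``if'' half uses algebraicity: the compact approximations of $a$ form a directed set, bounded by $a$, whose supremum is $a$; if they all lie in $X$ then $a\in X$ by closure under bounded suprema.

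For the converse I assume (i) and (ii) and must recover both closure properties. Downward closure follows from (i) alone: if $a\in X$ and $b\sqsubseteq a$, then every compact approximation of $b$ is also a compact approximation of $a$, hence in $X$ by the ``only if'' half of (i) applied to $a$, so $b\in X$ by the ``if'' half applied to $b$. For closure under bounded suprema, let $S\subseteq X$ be bounded with supremum $s$; by (i) it suffices to show that every compact $c\sqsubseteq s$ lies in $X$, and this is the heart of the argument.

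The main obstacle is that (ii) yields closure only under joins of \emph{compact} elements, whereas the members of $S$ need not be compact; bridging this gap is where algebraicity and continuity of join do the work. I first write $s$ as the directed supremum of the finite joins $\sqcup F$ over finite $F\subseteq S$ (each existing by bounded completeness), so compactness of $c$ gives $c\sqsubseteq\sqcup F$ for some finite $F=\{s_1,\ldots,s_n\}$. Then, expressing each $s_i$ as the directed supremum of its compact approximations and invoking Scott-continuity of binary join, I rewrite $\sqcup F$ as the directed supremum of the elements $d_1\sqcup\cdots\sqcup d_n$ with each $d_i$ compact and $d_i\sqsubseteq s_i$; compactness of $c$ again yields $c\sqsubseteq d_1\sqcup\cdots\sqcup d_n$ for one such choice. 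Each $d_i$ lies in $X$ by (i) (being a compact approximation of $s_i\in X$), the $d_i$ are pairwise consistent since they share the upper bound $\sqcup F$, and finite joins of compact consistent elements are again compact; hence iterating (ii) gives $d_1\sqcup\cdots\sqcup d_n\in X$, and downward closure (already established) gives $c\in X$. A final application of (i) yields $s\in X$. The only delicate points are the two appeals to compactness of $c$ against directed suprema and the continuity of the partial join on $D$, both of which are routine for Scott domains.
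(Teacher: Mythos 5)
The paper omits the proof of this lemma entirely (``The easy proof is omitted''), so there is no argument of the paper's to compare against; your proof is correct and complete, supplying exactly what the paper leaves out. The one genuinely non-routine step---reducing a bounded supremum of possibly non-compact elements of $X$ to finite joins of compact approximations via algebraicity, Scott-continuity of the partial join, and the fact that joins of consistent compact elements are compact---is precisely where the work lies, and you handle it correctly (the only caveat being the standard convention that ``bounded subsets'' means \emph{nonempty} bounded subsets, since otherwise $X=\emptyset$ would satisfy (i) and (ii) while failing to contain $\bot=\bigsqcup\emptyset$).
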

\begin{proof}
The easy proof is omitted.
\end{proof}
We note that Lemma~\ref{lem-subdom}, as well as Lemmas~\ref{lem-regular},
\ref{lem-data-nonempty},
and \ref{lem-data-compact} below
can be formalized in $\RCFP$.

We define coinductively two subsets of $D$. The first, $\Data$, disallows
the constructor $\Amb$ altogether, the second, $\regD$, disallows immediate
nestings of $\Amb$ and can be seen as a semantic counterpart to the syntactic
regularity property of types. 
We call elements of $\regD$ \emph{regular}.
\begin{eqnarray*}
\Data &\eqnu& 
  (
   \{\Nil\} \cup 
   (\Data\times \Data) \cup 
   (\Data+\Data) \cup 
   (\ftyp{D}{D}))_{\bot} \\
\regD &\eqnu& 
  (
   \{\Nil\} \cup 
   (\regD\times \regD) \cup 
   (\regD+\regD) \cup 
   (\ftyp{D}{D}) \cup\\
 &&\hspace{0.5em}  \Amb(\regD\setminus\Amb(D,D),\regD\setminus\Amb(D,D))
  )_{\bot} 
\end{eqnarray*}
Clearly, $\Data\subseteq\regD\subseteq D$.

\begin{lem}
\label{lem-regular}
\begin{enumerate}
\item\label{lem-regular-sub}
$\Data$ and $\regD$ are subdomains of $D$. %
\item\label{lem-regular-data}
$\rho\subseteq\Data$ for every regular type $\rho$ without $\Am$
under the assumption that $\alpha\subseteq\Data$ for every 
free type variable of $\rho$.\footnote{Note that this is an informal rendering of the statement ``For every regular type $\rho$ without $\Am$, the formula 
$\forall \vec\alpha\,.\,\vec\alpha\subseteq \Data \to \rho\subseteq\Data$
is provable in $\RCFP$,
where $\vec\alpha$ is a list of the free type variables of $\rho$,
and $\rho\subseteq\Data$ is shorthand for $\forall d\,.\,d:\rho\to \Data(d)$.''
Other statements of this kind should be read in a similar way.
}

\item\label{lem-regular-reg}
$\rho\subseteq\regD$ for every regular type $\rho$
under the assumption that $\alpha\subseteq\regD$ for every 
free type variable of $\rho$.
\item\label{lem-regular-rea}
If all free predicate variables of the %
predicate $P$ are strictly positive, then 
$\rea(P)\subseteq\adummy{\regD}$, 
under the assumption that 
$\reali{X}\subseteq\adummy{\regD}$ and 
$\alpha_X\subseteq\regD$ for every 
free predicate variable $X$ of $P$.
\end{enumerate}
\end{lem}

\begin{proof}
The proof of 
Part~(\ref{lem-regular-sub}) 
is easy using the characterization of subdomains 
in Lemma~\ref{lem-subdom}. We skip details.

Part~(\ref{lem-regular-data}) 
is proven by structural induction on $\rho$.
For a type variable, this holds by assumption. 
For $\one, \rho\times\sigma,\rho+\sigma, \ftyp{\rho}{\sigma}$ it suffices to observe that
$\bot$, $\Nil$ and $\Fun(f)$ are data (for arbitrary $f\in[D\to D]$),
and that the constructors $\Pair,\Left,\Right$ preserve the property $\Data$.
To show that $\tfix{\alpha}{\rho}\subseteq\Data$ it suffices to show that
$\rho\subseteq \Data$ under the assumption that $\alpha\equiv\Data$. But this holds by 
the structural induction hypothesis.

Part~(\ref{lem-regular-reg}) 
is proven again by structural induction on $\rho$,
using the fact that for a determined type $\rho$ the non-bottom elements of 
$\tval{\rho}{\zeta}$ are not of the form  $\Amb(a, b)$.

We prove 
Part~(\ref{lem-regular-rea}) 
first with a strengthening of the assumption $\alpha_X\subseteq\regD$
to $\alpha_X\equiv\regD$ 
(keeping the assumption $\reali{X}\subseteq\adummy{\regD}$ unchanged).  
Under this strengthened assumption, $\reali{X}\subseteq\adummy{\alpha_X}$ and therefore
$\rea(P)\subseteq\adummy{\tau(P)}$, by Lemma~\ref{lem-realizers-typed}.
But $\tau(P)\subseteq \regD$, by 
Part~(\ref{lem-regular-reg}). 
To prove 
Part~(\ref{lem-regular-rea}) 
in general, we observe that, since all free predicate variables
are s.p. in $P$,  $\rea(P)$ depends monotonically on $\alpha_X$.
Therefore, if $\alpha_X\subseteq\regD$, 
the inclusion $\rea(P)\subseteq \adummy{\regD}$ continues to hold.
This proves 
Part~(\ref{lem-regular-rea}) 
provided the assumption $\alpha_X\equiv\regD$ is consistent, 
that is, a type $\alpha_X$ satisfying it actually does exist.
But, by 
Part~(\ref{lem-regular-sub}), 
$\regD$ is a subdomain of $D$ and hence a possible
value for $\alpha_X$.
\end{proof}

\begin{lem}\label{lem-data-nonempty}
\begin{enumerate}
\item\label{lem-data-nonempty-ne} 
$\ddata(a)$ is nonempty for all $a\in D$.
\item\label{lem-data-nonempty-sing} 
If $a\in\Data$, then $\ddata(a) = \{a\}$.
\item\label{lem-data-nonempty-data} 
If $a\in\regD$, then $\ddata(a) \subseteq \Data$.
\end{enumerate}
\end{lem}
\begin{proof}

(\ref{lem-data-nonempty-ne}) 
For every nonempty finite increasing sequence 
$\vec a,a_n = a_0,\ldots,a_{n-1},a_n$ of compact
elements $a_i\in D$ ($i\le n$) we define $\cd{\vec a,a_n}\in D$ by recursion on 
the rank of $a_n$: 
\begin{itemize}
\item[] If $a_n\in\{\bot,\Nil,\Fun(f)\}$, then $\cd{\vec a,a_n} \eqdef a_n$.
\item[] If $a_n=\Amb(b_n,c_n)$, then let 
$\vec b = b_0,\ldots,b_{n-1}$ and $\vec c = c_0,\ldots,c_{n-1}$
such that $a_i = \Amb(b_i,c_i)$ if $a_i\neq\bot$, and $b_i=c_i=\bot$ %
 otherwise.
Now set 
\[ \cd{\vec a,a_n} \eqdef \left\{
\begin{array}{ll}
   \bot            & \hbox{if }b_n=c_n=\bot\\ 
    \cd{\vec b,b_n} & \hbox{if $b_n\neq\bot$ and, for all $i<n$, $b_i=\bot$ implies $c_i=\bot$}\\
    
   \cd{\vec c,c_n} & \hbox{otherwise}
\end{array}
\right.
\]
\item[] If $a_n=\Left(b_n)$, then let 
$\vec b = b_0,\ldots,b_{n-1}$ such that $a_i = \Left(b_i)$ if $a_i\neq\bot$, 
and $b_i=\bot$, otherwise.
Set $\cd{\vec a,a_n} \eqdef \Left(\cd{\vec b,b_n})$.
The cases that $a_n$ begins with $\Right$ or $\Pair$ are similar.
\end{itemize}
It is easy to see that 
if $a_n\dle a_{n+1}$, then $\cd{\vec a,a_n}\dle\cd{\vec a,a_n,a_{n+1}}$.

Now let $a\in D$. Since $D$ is algebraic with a countable base, 
there exists an increasing sequence of compact elements, $(a_n)_{n\in\NN}$,  
that has $a$ as its supremum. 
We show that for
$d \eqdef \bigsqcup\{\cd{a_0,...a_{n-1},a_n}\mid n\in\NN\}$, 
we have $d\in\ddata(a)$. 
To this end, we define the relation $P(a,d)$ %
as `$a$ is the supremum of an increasing sequence 
$(a_n)_{n\in\NN}$ of compact elements such that 
$d = \bigsqcup\{\cd{a_0,\ldots,a_{n-1},a_n}\mid n\in\NN\}$'
and prove:
\paragraph{Claim} 
If $P(a,d)$ then $d\in\ddata(a)$, for all $d,a\in D$.

We prove the Claim by coinduction: Assume 
$P(a,d)$. %
We have to show that the right-hand side of the definition of 
$\ddata$ in Section \ref{sub-denot}
holds if $\ddata$ is replaced by $P$.
Let $a$ be the supremum of the increasing sequence $(a_n)_{n\in\NN}$ 
of compact elements.

\emph{Case $a=\Amb(b,c)$}.
Then, for some $m$, $a_k=\bot$ for all $k<m$ and $a_k=\Amb(b_k,c_k)$
for all $k\ge m$. Set $b_k=c_k=\bot$ for $k<m$.
If $b=c=\bot$ then $\cd{a_0,\ldots,a_{k-1},a_k}=\bot$
 for all $k$ 
and therefore $d=\bot$ which is correct.
If $b$ and $c$ are not both $\bot$, then there is a least $n\ge m$ such that 
$a_n =\Amb(b_n,c_n)$ and
$b_n,c_n$ are not both $\bot$. 
If $b_n\neq\bot$, then 
$\cd{a_0,\ldots,a_{k-1},a_k}= \cd{b_0,\ldots,b_{k-1},b_k}$ for all $k\ge n$.
Since for $k<n$, $\cd{a_0,\ldots,a_{k-1},a_k}= \bot=\cd{b_0,\ldots,b_{k-1},b_k}$,
$P(b, d)$ holds
and we are done. 
If $b_n=\bot$, then $c_n\neq\bot$ and, with a similar argument,  
$\cd{a_0,\ldots,a_{k-1},a_k}= \cd{c_0,\ldots,c_{k-1},c_k}$ for all $k \geq n$
which implies that $P(c,d)$ holds and we are done again.

The other cases are easy.

\medskip

(\ref{lem-data-nonempty-sing})
To show that $a\in\Data$ implies $\{a\}\subseteq\ddata(a)$, 
one sets $P(a,b) \eqdef a=b\in\Data$ and shows 
$P\subseteq\ddata$ by coinduction.

To show that  $a\in\Data$ implies $\ddata(a)\subseteq \{a\}$, 
one proves that for all compact $a_0\in D$,
if $a\in\Data$ and $d\in\ddata(a)$, then $a_0\dle a$ iff $a_0\dle d$. 
The proof can be done by induction
on $\rk(a_0)$. 

\medskip

(\ref{lem-data-nonempty-data})
One shows $\forall d\,(\exists a\in\regD\,(d\in\ddata(a)) \to d\in\Data)$
by coinduction.
\end{proof}
\begin{rem}
The proof of (\ref{lem-data-nonempty-ne})  
is constructive since for every compact approximating sequence 
of $a$, a compact approximating sequence of some element in $\ddata(a)$ is constructed. 
In particular, if $a$ is computable, then $\ddata(a)$ contains a computable element. 
However, there can be no computable function $f:D\to D$ such that $f(a)\in\ddata(a)$ 
for all $a\in D$, since such a function cannot even be monotone: 
We would necessarily have $f(\Amb(0,\bot))=0$ and $f(\Amb(\bot,1))=1$, hence
$f$ would map the consistent inputs $\Amb(0,\bot)$ and $\Amb(\bot,1)$ 
(`consistent' meaning `having a supremum') to the inconsistent outputs $0$ and $1$,
which is impossible for a monotone function.
\end{rem}
\begin{lem}
\label{lem-data-compact}
\begin{enumerate}
\item\label{lem-data-compact-elem}
If $a_0$ is compact (in $D$), 
then all elements of $\ddata(a_0)$ are compact.
\item\label{lem-data-compact-approx}
If $a_0$ is a compact approximation of $a$,
then for every $d_0\in\ddata(a_0)$ there exists 
some $d\in\ddata(a)$ such that $d_0\dle d$.
\item\label{lem-data-compact-reg}
If $a$ is regular and $d\in\ddata(a)$, 
then for every compact approximation $d_0$ of 
$d$ there exists a compact approximation $a_0$ of $a$ 
such that $d_0\in\ddata(a_0)$.
\item\label{lem-data-compact-sup}
If $a,b,c$ are compact such that $c = a\sqcup b$,
then for every $w\in\ddata(c)$, $w\in\ddata(a)\cup\ddata(b)$ or 
$w = u \sqcup v$ for some $u\in\ddata(a)$ and $v\in\ddata(b)$.
\end{enumerate}
\end{lem}

\begin{proof}
(\ref{lem-data-compact-elem})
Easy induction on $\rk(a_0)$.
\medskip

(\ref{lem-data-compact-approx})
Induction on $\rk(a_0)$.
If $a_0\in\{\bot,\Amb(\bot,\bot)\}$, then $d_0=\bot$ 
and we can take $d$ to be any element of 
$\ddata(a)$ (which is nonempty, as shown in 
Lemma~\ref{lem-data-nonempty}~(\ref{lem-data-nonempty-ne})).
If $a_0=\Amb(b_0,c_0)$ where, w.l.o.g., $b_0\neq\bot$ and $d_0\in\ddata(b_0)$, then,
since $a=\Amb(b,c)$ with $b_0\dle b$, by i.h.\ there is some $d\in\ddata(b)$ 
(hence also $d\in\ddata(a)$) with $d_0\dle d$.
The other cases are easy.
\medskip

(\ref{lem-data-compact-reg})
Induction on $\rk(d_0)$. 
The most complicated case is $a = \Amb(b,c)$ where $b\neq\bot$ and $d\in\ddata(b)$. 
Since $a$ is regular, $b$ starts with one of the 
constructors $\Nil$, $\Fun$, $\Left$, $\Right$, $\Pair$.
W.l.o.g.\ assume $b=\Pair(b^1,b^2)$. Then $d=\Pair(d^1,d^2)$ with $d^i\in\ddata(b^i)$
and $d_0=\Pair(d^1_0,d^2_0)$ with $d^i_0\dle d^i$. 
By i.h.\ there are compact approximations
$b^i_0$ of $b^i$ such that $d^i_0\in\ddata(b^i_0)$. 
Hence $\Pair(d^1_0,d^2_0)\in\ddata(\Pair(b^1_0,b^2_0))$. 
Hence, $a_0 \eqdef \Amb(\Pair(b^1_0,b^2_0),\bot)$ is a compact approximation of $a$
with $d_0\in\ddata(a_0)$.
\medskip

(\ref{lem-data-compact-sup})
Induction on $\rk(c)$.
We assume that $a,b,c,w$ are all different from $\bot$, otherwise, the solution is easy.

Case $c=\Amb(c^1,c^2)$. 
Then $a=\Amb(a^1,a^2)$ and $b=\Amb(b^1,b^2)$ with $c^i = a^i\sqcup b^i$,
and, w.l.o.g., $c^1\neq\bot$ and $w\in\ddata(c^1)$.
By i.h., there are two cases: 
1. $w\in\ddata(a^1)\cup\ddata(b^1)$, say $w\in\ddata(a^1)$. 
Then $a^1\neq\bot$ and $w\in\ddata(a)$.
2. $w = u \sqcup v$ for some $u\in\ddata(a^1)$ and $v\in\ddata(b^1)$.
If $a^1=\bot$ or $b^1=\bot$, say $a^1=\bot$, then $u=\bot$ and $w=v$ and $w\in\ddata(b)$.
If $a^1$ and $b^1$ are both different from $\bot$, then $u\in\ddata(a)$ and $v\in\ddata(b)$ 
and we are done as well.

The other cases are easy.
\end{proof}

We now provide the technical definition needed for 
the Faithfulness Theorem~\ref{thm-faithfulness}:

A CFP formula is a \emph{functional implication} if it is of the form 
$A \to B$ where $A$ and $B$ are both non-Harrop. 

A CFP expression is \emph{quasi-closed} if it contains no free predicate variables.

A $\CFP$ expression is \emph{admissible} if
\begin{enumerate}
\item[(i)] it contains predicate variables, the concurrency operator,
and restriction at s.p.~positions only,
\item[(ii)] every occurring restriction $\rt{A}{B}$ has a Harrop premise $A$,
\item[(iii)] every occurrence of a functional implication
is part of a quasi-closed subexpression 
without $\Set$ and restriction.\footnote{The definition of admissibility in~\cite{CFPesop} is similar but disallows restrictions.}
\end{enumerate}

For example, consider the expressions
\begin{enumerate}
\item[]$\Set(\mu(\lambda X\,\lambda x\,(x=0 \lor \forall y\,(\NN(y)\to X(f(x,y))))))$
\item[] $\mu(\lambda X\,\lambda x\,(\Set(x=0 \lor \forall y\,(\NN(y)\to X(f(x,y))))))$
\end{enumerate}
both of which contain the functional implication $\NN(y)\to X(f(x,y))$.
The first is admissible, but the second isn't since the minimal
quasi-closed subexpression containing that functional implication is
$\lambda X\,\lambda x\,\Set(x=0\lor\forall y\,(\NN(y)\to X(f(x,y))))$
which does contain $\Set$.

Further examples of admissible expressions are the predicate
$\ConSD$ from Example~\ref{example-ConsSD}, 
$\lambda x\,\Set((x\leq 0 \lor x\geq 0) \lor |x| \leq 1/2)$,
as well as the coinductive predicate $\myC_2$ from Section~\ref{sec-gray},
$\nu(\lambda X\,\lambda x\,(|x|\le 1 \land \Set(\exists\, d (\SD(d) \land X(2x-d)))))$.

Clearly, an admissible Harrop formula is quasi-closed and conteins neither
the concurrency operator nor restrictions.

In the proof of the Faithfulness Theorem below, we will frequently use the 
fact that $\RCFP$ derivations are closed under substitution of predicates and 
respect extensional equality of predicates (recall from Section~\ref{sub-IFP}
that $P \equiv Q$ stands for $P \subseteq Q \land Q \subseteq P$):
\begin{lem}
\label{lem-respect}
\begin{enumerate}
\item\label{lem-respect-subst}
If $\RCFP$ proves $\Gamma\vdash A$, then $\RCFP$ proves $\Gamma[P/X] \vdash A[P/X]$.
\item\label{lem-respect-equiv}
$\RCFP$ proves $P\equiv Q \vdash A[P/X] \equiv A[Q/X]$.
\item\label{lem-respect-hyp}
$\RCFP$ proves $\Gamma\vdash A[P/X]$ if and only 
if $\RCFP$ proves $\Gamma, P\equiv X\vdash A$, provided $X$ is not free in $P,\Gamma$.
\end{enumerate}
\end{lem}
\begin{proof}
(\ref{lem-respect-subst})
and 
(\ref{lem-respect-equiv})
 can be easily proven by induction on $A$.
(\ref{lem-respect-hyp})
is an immediate consequence of (\ref{lem-respect-subst}) and (\ref{lem-respect-equiv}).
\end{proof}
\begin{thm}[Faithfulness]
\label{thm-faithfulness}
If $a\in D$ realizes
a well-formed quasi-closed admissible 
$\CFP$ formula $A$, then all $d\in\ddata(a)$ realize $A^-$.

Formally, $\RCFP$ proves 
$\forall\vec x\,\forall a\,(\rea(A)(a) \to \ddata(a)\subseteq \rea(A^-))$
where $\vec x$ are the free object variables of $A$.
\end{thm}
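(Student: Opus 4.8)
The plan is to prove a strengthening of the statement by a single induction on the structure of the predicate/formula, carried out simultaneously for all predicates and generalized to handle free predicate variables with \emph{separate source and target interpretations}. Concretely, for a predicate $P$ whose free predicate variables $X_1,\dots,X_n$ all occur strictly positively, I fix target predicates $Y_1,\dots,Y_n$ with $Y_i\subseteq\adummy{\alpha_{X_i}}$ satisfying the \emph{linking hypotheses} $\forall\vec x\,\forall a\,(\reali{X_i}(\vec x,a)\to\forall d\,(d\in\ddata(a)\to Y_i(\vec x,d)))$, and prove $\forall\vec x\,\forall a\,(\rea(P)(\vec x,a)\to\forall d\,(d\in\ddata(a)\to\rea(P^-)_{[\reali{X_i}:=Y_i]}(\vec x,d)))$, where $\rea(P^-)_{[\reali{X_i}:=Y_i]}$ is $\rea(P^-)$ with each $\reali{X_i}$ replaced by $Y_i$. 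For a quasi-closed $A$ there are no free predicate variables and this is exactly Theorem~\ref{thm-faithfulness}. The admissibility conditions are precisely what tames the genuinely dangerous case, a functional implication $A\to B$: its realizer is some $\Fun(f)$ and $\ddata(\Fun(f))=\{\Fun(f)\}$, so one cannot descend into it. By admissibility each functional implication sits inside a quasi-closed subexpression $E$ free of $\Set$ and restriction; then $E^-=E$ and $\tau(E)$ is a closed type without $\Am$, so $\tau(E)\subseteq\Data$ by Lemma~\ref{lem-regular}(2); hence every realizer $a$ of $E$ lies in $\Data$, so $\ddata(a)=\{a\}$ by Lemma~\ref{lem-data-nonempty}(2) and $a$ realizes $E^-=E$. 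I therefore treat each maximal clean quasi-closed subexpression atomically as a base case, so the inductive cases I actually visit never meet a functional implication and every free predicate variable they encounter is strictly positive (making $\rea$ monotone in it, as in Lemma~\ref{lem-regular}(4)).

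The connective and operator cases are routine. For $\Set(B)$ a realizer is $\Amb(a',b')$ with $a',b':\tau(B)$, at least one component defined and each defined component realizing $B$; since $(\Set(B))^-=B^-$ and, by the coinductive clauses~(\ref{e-data}), every $d\in\ddata(\Amb(a',b'))$ lies in $\ddata(a')$ with $a'\ne\bot$ or in $\ddata(b')$ with $b'\ne\bot$, the induction hypothesis on $B$ gives $\ire{d}{B^-}$. For a restriction $\rt{A_0}{B}$, where $A_0$ is Harrop by admissibility, I argue classically on $\reah(A_0)$: if it holds, the realizer is defined and realizes $B$, and the hypothesis on $B$ applies; if it fails, the only remaining obligation is $d:\tau(B^-)$, which I obtain from an auxiliary type-level statement $\forall a\,(a:\rho\to\ddata(a)\subseteq\rho^-)$ (with $\rho^-$ the type obtained by deleting all $\Am$, so that $\tau(A^-)=\tau(A)^-$). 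This auxiliary statement is proved by a parallel induction on $\rho$ using Lemmas~\ref{lem-regular}, \ref{lem-data-nonempty}(2),(3) and the compactness facts of Lemma~\ref{lem-data-compact}; it is needed only for the types occurring in the traversal, where function types appear solely inside $\Am$-free clean subexpressions and so cause no trouble. Disjunction, conjunction and the quantifiers follow by observing that $\ddata$ commutes with the data constructors $\Left,\Right,\Pair$ and applying the induction hypotheses componentwise.

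The crux is the fixed-point cases, discharged with the (co)induction rules of Table~\ref{table-proof-ifp}. For $\nu(\Phi)$ with $\Phi=\lambda X\,\lambda\vec x\,A$ we have $\rea(\nu(\Phi))=\nu(\Psi)$ and $\rea((\nu(\Phi))^-)=\nu(\Psi^-)$, with $\Psi,\Psi^-$ the realizability operators of the bodies $A,A^-$. Setting $R(\vec x,d):=\exists a\,(\nu(\Psi)(\vec x,a)\land d\in\ddata(a))$, the goal reduces to $R\subseteq\nu(\Psi^-)$, which I prove by coinduction, i.e.\ by showing $R\subseteq\Psi^-(R)$. Given $R(\vec x,d)$ with witness $a$, coclosure gives $\rea(A)[\nu(\Psi)/\reali{X}](\vec x,a)$, and I apply the body's induction hypothesis with $X$ linked by source interpretation $\nu(\Psi)$ and target $R$; the linking hypothesis for this choice is exactly the definition of $R$. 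This yields $\rea(A^-)[R/\reali{X}](\vec x,d)$, i.e.\ $\Psi^-(R)(\vec x,d)$, closing the coinduction. The inductive case $\mu(\Phi)$ is dual: one proves $\mu(\Psi)\subseteq Q$ for $Q(\vec x,a):=\forall d\,(d\in\ddata(a)\to\mu(\Psi^-)(\vec x,d))$ by the induction rule, using the body's induction hypothesis with $X$ linked by source interpretation $Q$ and target $\mu(\Psi^-)$ (again the linking hypothesis is the definition of $Q$) and re-folding with $\mu$-closure.

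The main obstacle is setting up this generalized statement so that both the fixed-point steps and the ``vacuous'' branches close. One must simultaneously track (i) the realizability content, through source/target linking hypotheses engineered so that the (co)induction invariant coincides with the definition of $R$ (resp.\ $Q$), and (ii) the purely type-theoretic content via the auxiliary $\rho$-induction, since in the false-premise branches of restriction and of Harrop-premised implication the sole obligation $d:\tau(A^-)$ is not available from any realizability hypothesis. Verifying that the admissibility hypotheses are inherited correctly by exactly the subexpressions the induction visits---so that no functional implication is ever entered and every restriction met has a Harrop premise---is the remaining bookkeeping that makes the argument go through.
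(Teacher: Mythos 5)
Your plan is not a different route --- it is, point for point, the paper's own proof. Your generalized induction with source/target ``linking hypotheses'' for strictly positive predicate variables is the paper's Claim~3 (whose hypothesis $\reali{X}_1\subseteq\reali{X}_2'$ is exactly your linking condition); your auxiliary type-level statement is its Claim~2; your treatment of functional implications via quasi-closed, $\Set$- and restriction-free subexpressions, $\Data$, and Lemma~\ref{lem-data-nonempty}(2) is identical; and your $\mu$/$\nu$ cases perform exactly the paper's instantiations (induction with invariant $\{(\vec x,a) : \forall d\in\ddata(a)\,\mu(\Psi^-)(\vec x,d)\}$, coinduction with $\ddata(\nu(\Psi))$, closed by the (co)closure axioms and monotonicity).

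There is, however, a genuine gap: you drop the regularity bookkeeping, and without it the fixed-point case of your auxiliary type-level induction fails. The paper's primed operation is $P'(\vec x,a)\eqdef a\in\regD\land\forall d\in\ddata(a)\,P(\vec x,d)$ --- note the conjunct $a\in\regD$ --- and its Claim~1 shows that $\alpha'$ is a \emph{subdomain} whenever $\alpha$ is. This is indispensable: in the case $\rho=\tfix{\alpha}{\sigma}$ of the type induction one must instantiate the type variable $\alpha$ with the invariant set, and type variables range over subdomains of $D$ (Table~\ref{fig-semantics-types}). Your invariant, $S_T\eqdef\{a\in D : \ddata(a)\subseteq T\}$ with no $\regD$ conjunct, is provably \emph{not} a subdomain: for the element $a$ satisfying $a=\Amb(a,\bot)$ (the denotation of $\rec\,\lambda a.\,\Amb(a,\bot)$), every compact approximation $a_0\dle a$ has $\ddata(a_0)=\{\bot\}$ and hence lies in $S_T$, while $\ddata(a)=D\not\subseteq T$, so $a\notin S_T$, violating condition (i) of Lemma~\ref{lem-subdom}. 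The repair is exactly the paper's: build $a\in\regD$ into the primed sets and prove subdomain-ness via Lemma~\ref{lem-data-compact}(2),(3) --- part (3) of which has regularity as a hypothesis, which is why $\regD$ must be threaded through all the induction hypotheses rather than added locally. A second, related omission: your single type variable $\alpha_{X_i}$ shared by source and target ($Y_i\subseteq\adummy{\alpha_{X_i}}$) is incoherent in the fixed-point cases, where source realizers are typed by $\tfix{\alpha_X}{\tau(Q)}$ but target ones by $\tfix{\alpha_X}{\tau(Q^-)}$; the paper duplicates the type variables into $(\alpha_X)_1,(\alpha_X)_2$ with the linking hypothesis $(\alpha_X)_1\subseteq(\alpha_X)_2'$, and this type-level linking is needed not only in your ``false-premise'' branches but also to discharge the typing side condition on the target predicate in the $\nu$ case (e.g.\ $\ddata(\nu(\Psi))\subseteq\adummy{\tfix{\alpha_X}{\tau(Q^-)}}$). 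With these two additions your sketch becomes precisely the argument in the paper.
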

\begin{proof}
The proof is accomplished through a series of definitions and claims.

\emph{An admissible type}
is a type that contains type variables 
and $\Am$ only at s.p.\ positions and every function type is
part of a closed subtype without $\Am$.

Clearly, if $P$ is
admissible, 
then $\tau(P)$ is
(easy structural induction on $P$).

We call an $\RCFP$ predicate $P$ whose last argument place 
ranges over $D$ \emph{regular}
if $P(\vec x,a)$ implies that $a$ is regular.
More precisely, ``$P$ is regular'' stands for the formula 
$\forall\vec x\forall a\,(P(\vec x,a) \to \regD(a))$.

For any $\RCFP$ predicate $P$ whose last argument place ranges over $D$
we define a regular predicate $P'$ of the same arity by
\[P'(\vec x,a) \eqdef                   a\in\regD \land 
                  \forall d\in\ddata(a)\, P(\vec x,d)\,.\]

We also define  
\[\ddata(P)(\vec x,d) \eqdef \exists a\in D\,(P(\vec x,a) \land d\in\ddata(a)).\]
Clearly, if $P$ is regular, then $P\subseteq Q'$ iff $\ddata(P)\subseteq Q$.

For the special case of a subdomain $\alpha$, considered as the unary predicate 
$\lambda a\,.\,a:\alpha$, 
$\ddata(\alpha) \equiv \bigcup\{\ddata(a)\mid a:\alpha\}$, 
and if $\alpha$ is regular, then
$\alpha\subseteq\beta'$ iff $\ddata(\alpha)\subseteq\beta$.

\paragraph{Claim 1.}
If $\alpha$ is a subdomain of $D$, 
then so is $\alpha'$.

\paragraph{Proof of Claim~1.}
Let $\alpha$ be a subdomain of $D$. 
We show that $\alpha'$ satisfies the 
characterizing properties (i) and (ii) of subdomains from Lemma~\ref{lem-subdom}.

(i) Let $a\in\alpha'$ and $a_0$ a compact approximation of $a$. We have to show that
$a_0\in\alpha'$. Hence assume $d_0\in\ddata(a_0)$. 
Then, by 
Lemma~\ref{lem-data-compact}~(\ref{lem-data-compact-approx}), 
there exists some $d\in\ddata(a)$ such that $d_0\dle d$.
Since $a\in\alpha'$, $d\in\alpha$. Hence $d_0\in\alpha$, since, being a subdomain, 
$\alpha$ is downward closed.
Conversely, assume all compact approximations of $a$ are in $\alpha'$ 
and let $d\in\ddata(a)$.
We have to show that $d\in\alpha$. Since $\alpha$ is a subdomain, 
it suffices to show that all
compact approximations of $d$ are in $\alpha$. 
Hence let $d_0$ be a compact approximation of $d$.
Then, by 
Lemma~\ref{lem-data-compact}~(\ref{lem-data-compact-reg}), 
(by 
Lemma~\ref{lem-regular}~(\ref{lem-regular-sub}), 
$a$ is regular since all its compact approximations are), 
there is  some compact approximation $a_0$ of $a$ 
such that $d_0\in\ddata(a_0)$. Since, by assumption, $a_0\in\alpha'$, 
it follows $d_0\in\alpha$.

(ii) Let $a_0,b_0$ be compact and consistent elements of $\alpha'$. 
We have to show that 
$c_0 \eqdef a_0\sqcup b_0 \in \alpha'$. Hence let $d_0\in\ddata(c_0)$.
By 
Lemma~\ref{lem-data-compact}~(\ref{lem-data-compact-sup}), 
$d_0\in\ddata(a_0)\cup\ddata(b_0)$, 
in which case $d_0\in\alpha$ since $a_0,b_0\in\alpha'$,
or there are $a_1\in\ddata(a_0)$ and 
$b_1\in\ddata(b_0)$ such that $d_0 = a_1 \sqcup b_1$, in which case, 
since $a_1,b_1\in\alpha$ and $\alpha$ is a subdomain, $d_0\in\alpha$ as well.
This completes the proof of Claim~1.

For any type $\rho$ and $i=1,2$ let $\variant{\rho}{i}$ be obtained 
from $\rho$ by replacing 
each free type variable $\alpha$ by the fresh type variable $\alpha_i$.
Further, let $\rho^-$ be obtained from $\rho$ by deleting all occurrences of $\Am$.

\paragraph{Claim 2.}
For every
admissible
type $\rho$,
$\variant{\rho}{1}\subseteq \variant{\rho^-}{2}'$ 
under the assumption that 
$\alpha_i\subseteq\regD$ ($i=1,2$) 
and
$\alpha_1\subseteq\alpha_2'$
for every free type variable $\alpha$ of $\rho$.

\paragraph{Proof of Claim 2.}
Induction on $\rho$. 

If $\rho$ is closed and doesn't contain $\Am$ 
(by admissibility, this includes the case that $\rho$ is a function type),
then $\variant{\rho}{1} = \variant{\rho^-}{2} = \rho$ and, 
by 
Lemma~\ref{lem-regular}~(\ref{lem-regular-data})
$\rho\subseteq\Data$.
Hence 
$\ddata(\rho)\subseteq\rho$, by 
Lemma~\ref{lem-data-nonempty}~(\ref{lem-data-nonempty-sing}),  
and we are done.

Now assume that the above case does not apply.

If $\rho$ is a type variable $\alpha$, then 
$\variant{\rho}{1} = \alpha_1 \subseteq \alpha_2' = \variant{\rho^-}{2}'$.

If $\rho = \Am(\sigma)$, then $\rho^- = \sigma^-$ and, by the i.h.,\ 
$\variant{\sigma}{1} \subseteq \variant{\sigma^-}{2}'$. 
Let $a:\variant{\rho}{1}$. Then either $a=\bot$, 
in which case $\ddata(a) = \{\bot\} \subseteq\variant{\sigma^-}{2}$,
or else $a=\Amb(b,c)$ with $b,c:\variant{\sigma}{1}$, in which case 
$\ddata(a) \subseteq \ddata(b) \cup \ddata(c) \cup\{\bot\} \subseteq \variant{\sigma^-}{2}$.

If $\rho=\tfix{\alpha}{\sigma}$, then we have, up to bound renaming, 
$\variant{\rho}{1} = \tfix{\alpha_1}{\variant{\sigma}{1}}$ and
$\variant{\rho^-}{2} = \tfix{\alpha_2}{\variant{\sigma^-}{2}}$.
Since $\variant{\rho^-}{2}$ is a subdomain and hence, by Claim~1, 
so is $\variant{\rho^-}{2}'$, we can set
\begin{enumerate}
\item[] (1) $\alpha_1 \equiv \variant{\rho^-}{2}'$ 
\end{enumerate}
and achieve our goal (of proving $\variant{\rho}{1}\subseteq \variant{\rho^-}{2}'$) 
by proving $\variant{\sigma}{1} \subseteq \alpha_1$
(to be precise, by ``we can set'' we mean that we construct the formal proof 
with the extra assumption (1)).
Setting further
\begin{enumerate}
\item[] (2) $\alpha_2 \equiv \variant{\rho^-}{2}$
\end{enumerate}
we have $\alpha_1\subseteq\alpha_2'$ and therefore, by the induction hypothesis,
($\variant{\rho^-}{2}'$ and $\variant{\rho^-}{2}$ are both regular subdomains,
and $\sigma$ is
admissible,
since otherwise we would be in the first case),
$\variant{\sigma}{1}\subseteq \variant{\sigma^-}{2}'$. 
We further have $\variant{\sigma^-}{2} \equiv \variant{\rho^-}{2}$, by (2).
Therefore,
\[ \variant{\sigma}{1}\subseteq\variant{\sigma^-}{2}' \equiv  
\variant{\rho^-}{2}' \equiv \alpha_1.\]
The cases that $\rho$ is a product or a sum are easy.
This completes the proof of Claim~2.
\medskip

For any $\CFP$ predicate $P$ and $i=1,2$,
let $\variant{\rea(P)}{i}$ be obtained from $\rea(P)$ 
by replacing 
each free predicate variable $\reali{X}$ by
the fresh predicate variable $\reali{X}_i$
and
each free type variable $\alpha_X$
by the fresh type variable $(\alpha_X)_i$. 

\paragraph{Claim 3.}
If $P$ is
admissible,
then$\variant{\rea(P)}{1}\subseteq \variant{\rea(P^-)}{2}'$ 
under the assumption that 
$(\alpha_X)_i\subseteq\regD$ 
($i=1,2$),
and furthermore $(\alpha_X)_1\subseteq(\alpha_X)_2'$ 
and $\reali{X}_1 \subseteq \reali{X}_2'$, 
for every free predicate variable $X$ of $P$.

Note that the Faithfulness Theorem is a special case of 
Claim~3:
If $a\in D$ realizes the  quasi-closed admissible formula $A$, 
then, by 
Claim~3, 
$\rea(A^-)'(a)$ and therefore
all $d\in\ddata(a)$ realize $A^-$.
Hence, the proof of Claim~3 completes the whole proof of the theorem.

\paragraph{Proof of Claim 3.}
Structural induction on
admissible
expressions $P$.

If $P$ is a Harrop predicate, then, by
admissibility,
it contains neither free predicate variables,
nor $\Set$, nor restriction, and therefore 
$\variant{\rea(P)}{1}=\variant{\rea(P^-)}{2} = \reah(P)$.
Hence, it suffices to show
$\forall a\,((a =\Nil\land\reah(P)) \to \forall d\in\ddata(a)\,(d=\Nil \land\reah(P)))$, 
which is a triviality since $\ddata(\Nil)=\{\Nil\}$.

From now on we assume that $P$ is non-Harrop. 

If $P$ is quasi-closed and contains neither $\Set$ nor restriction 
(because $P$ is
admissible,
this includes the case that $P$ 
is a functional implication), then $\rea(P)$ is quasi-closed and $P = P^-$.
Therefore, $\variant{\rea(P)}{1}= \variant{\rea(P^-)}{2} = \rea(P)$.
By 
Lemma~\ref{lem-regular}~(\ref{lem-regular-data}) 
and 
Lemma~\ref{lem-realizers-typed}, 
$\rea(P)\subseteq\Data$. Hence $\ddata(\rea(P)) = \rea(P)$ 
by 
Lemma~\ref{lem-data-nonempty}~(\ref{lem-data-nonempty-sing}),
and we are done.

If $P$ is a predicate variable $X$, then we have to prove
$\reali{X}_1\subseteq\reali{X}_2'$ which holds by the assumption.

From now on we assume that none of the above cases applies.

If $P$ is 
$A \to B$, then $A$ has  no free predicate variables
and $B$ is non-Harrop.
Since the case of a functional implication is excluded, $A$ must be Harrop.
Assume $\variant{\rea(A\to B)}{1}(b)$, that is, $b:\variant{\tau(B)}{1}$ and 
$\reah(A) \to \variant{\rea(B)}{1}(b)$.
We have to show $\variant{\rea(A\to B^-)}{2}'(b)$, that is,  
$b\in\regD$ and 
$\forall d\in\ddata(b)\,\variant{\rea(A\to B^-)}{2}(d)$, that is, for all
$d\in\ddata(b)$, $d : \variant{\tau(B^-)}{2}$ and 
$\reah(A) \to \variant{\rea(B^-)}{2}(d)$.
$b\in\regD$ holds since, 
by Claim~2, 
$\variant{\tau(B)}{1}\subseteq\variant{\tau(B)^-}{2}'\subseteq\regD$, 
and $\tau(B^-)=\tau(B)^-$.
Let $d\in\ddata(b)$.
$d : \variant{\tau(B^-)}{2}$ holds,
since, by 
Claim~2,  
$b:\variant{\tau(B^-)}{2}'$ 
(to apply Claim~2 we need the assumptions $\alpha_1\subseteq\alpha_2'$).
Assume $\reah(A)$. Then $\variant{\rea(B)}{1}(b)$ and, 
by the structural induction hypothesis (clearly, $B$ is again admissible),
$\variant{\rea(B^-)}{2}(d)$.

If $P$ is $\rt{A}{B}$ where $A$ is a Harrop formula, then $P^-$ is $A \to B^-$.
Assume $\variant{\rea(P)}{1}(b)$. 
Then, clearly, $\variant{\rea(A\to B)}{1}(b)$. 
Hence, the rest of the proof is exactly as in the previous case
(even for the case that $B$ is Harrop since, as one easily checks, 
the above proof for $A\to B$ is also valid if $B$ is Harrop).

If $P$ is $\Set(B)$, then 
$\variant{\rea(P^-)}{2} = \variant{\rea(B^-)}{2}$ and,  
since $B$ is again
admissible,
by the induction hypothesis,
$\variant{\rea(B)}{1} \subseteq \variant{\rea(B^-)}{2}'$
Assume $\variant{\rea(P)}{1}(c)$ and $d\in\ddata(c)$.
Then $c = \Amb(a,b)$ with $a,b : \variant{\tau(B)}{1}$ and 
\[(a \neq\bot \land d\in\ddata(a) \land \variant{\rea(B)}{1}(a))\lor 
  (b \neq\bot \land d\in\ddata(b) \land \variant{\rea(B)}{1}(b)).\]
In either case, it follows $\variant{\rea(B^-)}{2}(d)$.

If $P$ is $\mu\,(\lambda X\,.\,Q)$, 
then $\rea(P)$ is 
$\mu(\lambda\reali{X}\,.\,\rea(Q)[\tfix{\alpha_X}{\tau(Q)}/\alpha_X])$
whereas $\rea(P^-)$ is 
$\mu(\lambda\reali{X}\,.\,\rea(Q^-)[\tfix{\alpha_X}{\tau(Q^-)}/\alpha_X])$.
Hence, setting 
(see Lemma~\ref{lem-respect}~(\ref{lem-respect-hyp}) for a justification)
\begin{itemize}
\item[] $(\alpha_X)_1 \equiv\variant{(\tfix{\alpha_X}{\tau(Q)})}{1}$
\item[] $(\alpha_X)_2 \equiv\variant{(\tfix{\alpha_X}{\tau(Q^-)})}{2}$
\end{itemize} 
we have
$\variant{\rea(P)}{1} \equiv \mu(\lambda\reali{X}_1\,.\,\variant{\rea(Q)}{1})$,
and 
$\variant{\rea(P^-)}{2} \equiv \mu(\lambda\reali{X}_2\,.\,\variant{\rea(Q^-)}{2})$. 
Therefore, we can prove the assertion of Claim~3
(which is $\variant{\rea(P)}{1} \subseteq \variant{\rea(P^-)}{2}'$)
by s.p.\ induction (second rule for $\mu$ in Table~\ref{table-proof-ifp}) i.e.,
it suffices to show $\variant{\rea(Q)}{1} \subseteq \variant{\rea(P^-)}{2}'$
under the assumption 
\begin{itemize}
\item[] $\reali{X}_1 \equiv \variant{\rea(P^-)}{2}'$.
\end{itemize}
Setting further 
\begin{itemize}
\item[] $\reali{X}_2 \equiv \variant{\rea(P^-)}{2}$
\end{itemize} 
we have $\reali{X}_1 \subseteq (\reali{X}_2)'$, trivially, and furthermore,  
by Claim~2,  
$(\alpha_X)_1 \subseteq ((\alpha_X)_2)'$ since 
$(\tfix{\alpha_X}{\tau(Q)})^-=\tfix{\alpha_X}{\tau(Q^-)}$.
Therefore, by the structural induction hypothesis
($Q$ is admissible since otherwise we would be in the second case),
$\variant{\rea(Q)}{1} \subseteq \variant{\rea(Q^-)}{2}'$. 
By the closure axiom, 
$\variant{\rea(Q^-)}{2}\subseteq \reali{X}_2$. 
Hence, $\variant{\rea(Q^-)}{2}' \subseteq (\reali{X}_2)'$
by the monotonicity of the operation $\cdot'$.
But $(\reali{X}_2)' \equiv \variant{\rea(P^-)}{2}'$.

If $P$ is $\nu\,(\lambda X\,.\,Q)$, then we work with $\ddata(\cdot)$ 
instead of $\cdot'$
using the earlier mentioned fact that 
if $Y \subseteq \regD$, then
$Y \subseteq Z'$ iff $\ddata(Y) \subseteq Z$.
Since the condition $\reali{Y}_1 \subseteq\reali{Y}_2'$ implies that 
$\reali{Y}_1\subseteq\adummy{\regD}$ for all free predicate variables $Y$ of $P$,
and these variables are all s.p., 
Lemma~\ref{lem-regular}~(\ref{lem-regular-rea}) 
yields that 
$\variant{\rea(P)}{1} \subseteq\regD$. 
Therefore,
the assertion to be proven is equivalent to 
$\ddata(\variant{\rea(P)}{1}) \subseteq \variant{\rea(P^-)}{2}$.
$\rea(P)$ is $\nu(\lambda\reali{X}\,.\,\rea(Q)[\tfix{\alpha_X}{\tau(Q)}/\alpha_X])$
and $\rea(P^-)$ is 
$\nu(\lambda\reali{X}\,.\,\rea(Q^-)[\tfix{\alpha_X}{\tau(Q^-)}/\alpha_X])$.
This means that, setting $(\alpha_X)_i$ as before, we have that
$\variant{\rea(P)}{1} \equiv \nu(\lambda\reali{X}_1\,.\,\variant{\rea(Q)}{1})$,
and 
$\variant{\rea(P^-)}{2} \equiv \nu(\lambda\reali{X}_2\,.\,\variant{\rea(Q^-)}{2})$. 
Therefore, 
$\ddata(\variant{\rea(P)}{1}) \subseteq \variant{\rea(P^-)}{2}$ can be proven
by 
s.p.\ coinduction, that is, we show 
$\ddata(\variant{\rea(P)}{1}) \subseteq \variant{\rea(Q^-)}{2}$
under the assumption 
\begin{itemize}
\item[] $\reali{X}_2 \equiv \ddata(\variant{\rea(P)}{1})$.
\end{itemize}
Setting 
\begin{itemize}
\item[] $\reali{X}_1 \equiv \variant{\rea(P)}{1}$
\end{itemize} 
we have 
$\ddata(\reali{X}_1) \subseteq \reali{X_2}$, 
hence $\reali{X}_1 \subseteq \reali{X_2}'$, 
since $\variant{\rea(P)}{1} \subseteq\regD$.
Therefore, by the structural induction hypothesis,
$\variant{\rea(Q)}{1} \subseteq \variant{\rea(Q^-)}{2}'$,  i.e.\ 
$\ddata(\variant{\rea(Q)}{1}) \subseteq \variant{\rea(Q^-)}{2}$.
Finally, by the coclosure axiom, 
$\reali{X}_1 \subseteq \variant{\rea(Q)}{1}$ and therefore,
by the monotonicity of the operation $\ddata(\cdot)$, we get
$\ddata(\variant{\rea(P)}{1}) \subseteq \ddata(\variant{\rea(Q)}{1})$.

In all other cases (conjunction, disjunction, quantifiers), the 
induction hypothesis applies in a straightforward way.

\end{proof}
\bigskip

Theorems~\ref{thm-soundnessI} and~\ref{thm-faithfulness} imply:
 \begin{thm}[Soundness Theorem II]
 \label{thm-soundnessII}
 From a $\CFP$ proof of a well-formed
 quasi-closed
admissible formula $A$
one can extract a program $M$ such that
$\vdash M :\tau(A)$ and
$\RCFP$ proves
that all $d\in\ddata(\val{M})$ realize $A^-$,
that is, $\RCFP$ proves the formula 
$\forall d\in\ddata(M)\,\ire{d}{A^-}$.
\end{thm}

Theorems~\ref{thm-soundnessII} and~\ref{thm:data}, together with 
classical soundness yield:
\begin{thm}[Program Extraction]
  \label{thm-pe}
 From a $\CFP$ proof of a well-formed 
 quasi-closed
admissible 
formula $A$
one can extract a program $M$ such that $\vdash M:\tau(A)$ and for any 
computation 
$M =  M_0 \newprintp M_1 \newprintp  \ldots$,
the limit, $\sqcup_{i \in \NN} (M_i)_{D}$, realizes  $A^-$ in every  
model of $\IFP$.
\end{thm}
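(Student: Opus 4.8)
The plan is to derive Theorem~\ref{thm-pe} directly from the two immediately preceding results together with the classical soundness of $\RCFP$; no new construction is needed, since the program $M$ is exactly the one delivered by the earlier soundness theorem. First I would apply Soundness Theorem~II (Theorem~\ref{thm-soundnessII}) to the given $\CFP$ proof of the admissible formula $A$ from the axiom set $\ax$. This yields a program $M:\tau(A)$ for which $\RCFP(\ax)$ proves the sentence $\forall d\in\ddata(\val{M})\,\ire{d}{A^-}$. This is a purely syntactic, provability-level statement, whereas the conclusion of Theorem~\ref{thm-pe} is model-theoretic, so the essential work is to bridge the two.

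The bridge is the soundness of classical logic. Since $\RCFP$ is an ordinary classical first-order system whose extra axioms (the denotational semantics axioms for programs and types, the law of excluded middle, and the axioms for $\Am$ and strict application) hold in every Tarskian model of $\ax$, every $\RCFP(\ax)$-derivable sentence is true in every such model. Thus, fixing an arbitrary model of the axioms, the sentence $\forall d\in\ddata(\val{M})\,\ire{d}{A^-}$ holds there: every element of $\ddata(\val{M})$ realizes $A^-$ in that model.

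The final step links this to the operational behaviour of $M$. For any computation $M = M_0 \newprintp M_1 \newprintp \ldots$, the Computational Adequacy (Soundness) Theorem (Theorem~\ref{thm:data}) guarantees that the increasing sequence $((M_i)_D)_{i\in\NN}$ has a limit $\sqcup_{i\in\NN}(M_i)_D$ lying in $\ddata(\val{M})$. Instantiating the universal statement from the previous paragraph at this particular $d$ then gives that $\sqcup_{i\in\NN}(M_i)_D$ realizes $A^-$ in the chosen model, which is precisely the assertion of the theorem.

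As all three ingredients are already in hand, there is no genuinely difficult step. The only point demanding care is the passage from the provability claim of Theorem~\ref{thm-soundnessII} to the truth-in-every-model claim of Theorem~\ref{thm-pe}, which is exactly where classical soundness of $\RCFP$ enters; if $A$ carries free object variables one handles them by the universal closure over $\vec x$, exactly as in the Faithfulness Theorem, which affects nothing of substance.
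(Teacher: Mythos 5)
Your proposal is correct and follows exactly the route the paper takes: the paper derives Theorem~\ref{thm-pe} by combining Theorem~\ref{thm-soundnessII} (which supplies $M:\tau(A)$ and the $\RCFP$-provability of $\forall d\in\ddata(\val{M})\,\ire{d}{A^-}$), classical soundness of $\RCFP$ to pass from provability to truth in every model of the axioms, and Theorem~\ref{thm:data} to place the limit $\sqcup_{i\in\NN}(M_i)_D$ of any computation inside $\ddata(\val{M})$. Your identification of the provability-to-truth step as the only point requiring care matches the paper's (terse) presentation, so nothing is missing.
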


\begin{rem}
The theorems above can be generalized by dropping in the definition of 
admissibility the condition that restrictions must have Harrop premises.
The definition of $A^-$ must then be modified by replacing $\rt{A}{B}$, where
$A$ is non-Harrop, by $\neg\neg A \to B$ (instead of $A\to B$). 
Since, by the rules (Rest-antimon) and
(Rest-stab), the formulas $\rt{A}{B}$ and $\rt{\neg\neg A}{B}$ are equivalent 
and, moreover, have the same realizers, 
the proof of Theorem~\ref{thm-faithfulness} requires
only minimal changes and the Theorems~\ref{thm-soundnessII} and~\ref{thm-pe}
are unchanged.

\end{rem}

\section{Application}
\label{sec-gray}
As our main case study, we extract a concurrent conversion program between
two representations of real numbers 
in [-1, 1],
the signed digit representation and infinite Gray code.

In the following, we also write $d:p$ for $\Pair(d,p)$.

The signed digit representation is an extension of the usual binary expansion
that uses the set $\SD \eqdef \{-1, 0, 1\}$ of \emph{signed digits}. 
The following predicate $\myC(x)$ expresses coinductively
that $x$ has a signed digit representation.
\begin{eqnarray*}
  \myC(x) &\eqnu& |x|\le 1 \land \exists\, d \in \SD\, \myC(2x-d)\,,
\end{eqnarray*}
with $\SD(d) \eqdef (d = -1 \lor d = 1) \lor d = 0$.
The type of $\myC$ is $\tau(\myC) = \stream{\tri}$
where $\tri \eqdef (\one + \one) + \one$ and 
$\stream{\rho} \eqdef  \tfix{\alpha}{\rho \times \alpha}$,
and its realizability interpretation is
\begin{eqnarray*}
  \ire{p}{\myC(x)} &\eqnu& |x|\le 1\land\exists\, d\in\SD\,\exists p'\ (
  p = d: p'\ \land 
  \ire{p'}{\myC(2x-d)})\, 
\end{eqnarray*}
which expresses indeed that $p$ is a signed digit representation of $x$, that is,
$p = d_0:d_1:\ldots$ with
$d_i\in\SD$ and
$x = \sum_{i}d_i2^{-(i+1)}$.
Here, we identified the three digits $d = -1, 1, 0$ with their realizers
$\Left(\Left),    \Left(\Right), \Right$.

Infinite Gray code (\cite{Gianantonio99,Tsuiki02}) is an almost redundancy free
representation of real numbers in [-1, 1] using the partial digits 
$\{-1,1, \bot\}$. A stream 
$p =d_0:d_1:\ldots$ 
of such digits
is an infinite Gray code of $x$ iff 
$d_i = \sgb(\tent^{i}(x))$
 where $\tent$ is the tent function 
$\tent(x) = 1-|2x|$ and $\sgb$ is 
a multi-valued version of the sign function for which 
$\sgb(0)$ is any element of $\{-1, 1,\bot\}$
  (see also Example~\ref{example-ConsSD}).
Clearly,
$d_i = \bot$ 
for at most one $i$,
after which the sequence continues with $1:(-1)^\omega$.
Therefore, 
this coding has 
little 
redundancy in that the
code is uniquely determined and total except for at most one digit
which may be undefined. 
Hence, infinite Gray code 
is accessible through concurrent computation with two threads.
The coinductive predicate 
\begin{eqnarray*}
\myG(x) &\eqnu& |x|\le 1 \land \D(x) \land \myG(\tent(x))\,,
\end{eqnarray*}
where $\D$ is the predicate 
$
\D(x) \eqdef  x\neq 0 \to (x\leq 0 \lor x\geq 0)\,
$  
from Example~\ref{ex-d},
expresses that $x$ has an infinite Gray code. 
Indeed,
$\tau(\myG) = \bool^{\omega}$ and
\[
\ire{p}{\myG(x)} \eqnu |x|\le 1\land \,\exists d,p'(p = d:p' \land
  (x\neq 0 \to \ire{d}{(x \le 0 \lor x \ge 0)})\ \land 
  \ire{p'}{\myG(\tent(x))})\,.
\]
Identifying $\Left, \Right, \bot$ (the range of the variable $d$) with
$-1, 1, \bot$, one sees that the realizer $p$ is indeed a Gray code of $x$.

In \cite{IFP}, 
the inclusion $\myC \subseteq \myG$ was proved in IFP and a sequential conversion
function from signed digit representation to infinite Gray code extracted.
On the other hand, a program producing a signed digit representation from an
infinite Gray code cannot access its input sequentially 
from left to right since it will diverge when it accesses $\bot$.
Therefore, the program needs to evaluate 
two consecutive digits concurrently to obtain at least one
of them. 
With this idea in mind, we define
a concurrent version of $\myC$ as
\begin{eqnarray*}
\myC_2(x) &\eqnu& |x|\le 1 \land \Set(\exists\, d \in \SD\, \myC_2(2x-d)),
\end{eqnarray*}
so that $\tau(\myC_2) = \tfix{\alpha}{\Am(\tri \times \alpha)}$, and
prove $\myG \subseteq \myC_2$ in CFP (Theorem~\ref{thm-g-mc}).
Then we can extract from the proof a concurrent algorithm that 
converts infinite Gray code to signed digit representation.  
Note that, while the formula $\myG \subseteq \myC_2$ is \emph{not} admissible,
the formula $\myC_2(x)$ \emph{is}.
Therefore, if for some real number $x$ we can prove $\myG(x)$,
the proof of $\myG \subseteq \myC_2$ will give us a proof of $\myC_2(x)$ to which
Theorem~\ref{thm-pe} applies. Since $\myC_2(x)^-$ is $\myC(x)$, this means that 
we have a nondeterministic program all whose fair computation paths will result in a
(deterministic) signed digit representation of $x$.

Now we carry out the proof of $\myG \subseteq \myC_2$.
For simplicity, we use 
pattern matching on constructor expressions for defining functions.  
For example, we write
${\mathsf f}\ (a:t) \eqdef M$
for
${\mathsf f} \eqdef \lambda x.\, \caseof{x}\,\{\Pair(a, t) \to M \}.$

The crucial step in the proof is accomplished by
Example~\ref{example-ConsSD}, since it yields nondeterministic
information about the first digit of the signed digit representation
of $x$, as expressed by the predicate
\[
\ConSD(x)  \eqdef  \Set((x\leq 0 \lor x\geq 0) \lor |x| \leq 1/2).
\]

\begin{lem}
\label{lem-gsd}
$\myG \subseteq \ConSD$.
\end{lem}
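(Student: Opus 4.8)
The plan is to reduce the claim to the inclusion $\E \subseteq \ConSD$ already established in Example~\ref{example-ConsSD}, where $\E(x) \eqdef \D(x) \land \D(\tent(x))$. The essential point is that the coinductive predicate $\myG$ delivers, through its first two unfoldings, exactly the two instances of sign information that $\E$ packages together.

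First I would assume $\myG(x)$ and apply the coclosure rule $\nu(\Phi)\subseteq\Phi(\nu(\Phi))$ to unfold $\myG$ once, obtaining $|x|\le 1 \land \D(x) \land \myG(\tent(x))$; in particular this gives $\D(x)$. Applying coclosure a second time to the conjunct $\myG(\tent(x))$ yields $|\tent(x)|\le 1 \land \D(\tent(x)) \land \myG(\tent^2(x))$, and hence $\D(\tent(x))$. Conjoining the two sign facts gives $\E(x)$, and then Example~\ref{example-ConsSD} immediately produces $\ConSD(x)$, establishing $\myG\subseteq\ConSD$.

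On the computational side, a realizer of $\myG(x)$ is a Gray-code stream $p = d_0 : d_1 : \ldots$ in which $d_0$ realizes $x\neq 0 \to (x\le 0 \lor x\ge 0)$, i.e.\ $\D(x)$, and the head $d_1$ of its tail realizes $\D(\tent(x))$. The extracted realizer therefore needs only to read the first two digits of $p$, assemble them into $\Pair(d_0,d_1)$ realizing $\E(x)$, and hand this pair to the program $\conSD$ extracted in Example~\ref{example-ConsSD}. The resulting program reads the first two components of a Gray code and has type $\bool^\omega \to \Am(\tri)$, matching $\tau(\ConSD) = \Am(\tri)$.

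Since the whole argument consists of two coclosure unfoldings followed by a single appeal to a prior lemma, there is no substantial obstacle. The only point requiring a little care is the bookkeeping: keeping track of which unfolding supplies $\D(x)$ and which supplies $\D(\tent(x))$, and ensuring the realizer reads precisely the first two Gray-code digits in the correct order, so that the pair passed to $\conSD$ agrees with the argument order of $\E(x) = \D(x)\land\D(\tent(x))$.
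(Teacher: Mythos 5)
Your proposal is correct and follows essentially the same route as the paper: the paper's proof is the one-liner "$\myG(x)$ implies $\D(x)$ and $\D(\tent(x))$, and hence $\ConSD$, by Example~\ref{example-ConsSD}," which is exactly your two coclosure unfoldings followed by the appeal to $\E\subseteq\ConSD$. Your extracted realizer also coincides with the paper's program $\gscomp\,(a\!:\!b\!:\!p) \eqdef \conSD\,(\Pair(a,b))$.
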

\begin{proof}
$\myG(x)$ implies $\D(x)$ and $\D(\tent(x))$, 
and hence $\ConSD$, by Example~\ref{example-ConsSD}.
\end{proof}
The extracted program $\gscomp:\ftyp{\stream{\bool}}{\Am(\tri)}$  
uses the program $\conSD$ defined in Example~\ref{example-ConsSD}:
$$
\gscomp\ (a:b:p)  \eqdef \conSD\  (\Pair(a, b))\,.
$$

We also need the following closure properties of $\myG$:
\begin{lem}
\label{lem-gclosure}
Assume $\myG(x)$. Then:
\begin{itemize}
\item[(1)] $\myG(\tent(x))$, $\myG(|x|)$, and $\myG(-x)$;
\item[(2)] if $x \ge 0$, then $\myG(2x-1)$ and $\myG(1-x)$;
\item[(3)] if $|x|\le 1/2$, then $\myG(2x)$.
\end{itemize}
\end{lem}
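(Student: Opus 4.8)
The plan is to derive all six closures by a single application of the coclosure/coinduction principle for $\myG$, exploiting the fact that the tent map identifies the arguments in question. The one algebraic fact that drives everything is that $\tent(x) = 1-2|x|$ is invariant under the symmetries $x\mapsto|x|$ and $x\mapsto -x$, that is, $\tent(|x|) = \tent(-x) = \tent(x)$, together with two ``descent'' identities valid on the relevant half-lines: for $x\ge 0$ one has $\tent(x) = 1-2x$, hence $2x-1 = -\tent(x)$ and $\tent(1-x) = 1-2(1-x) = 2x-1$; and for arbitrary $x$, $\tent(2x) = 1-4|x| = 2\tent(x)-1$. These identities let me keep reusing the single datum $\myG(\tent(x))$ that unfolding $\myG(x)$ provides.

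First I would note that unfolding $\myG(x)$ once via coclosure immediately yields its three conjuncts $|x|\le 1$, $\D(x)$, and $\myG(\tent(x))$; the last is exactly (1a). For the other two parts of (1) I would prove $\myG(|x|)$ and $\myG(-x)$ by a single coinduction step each: the bounds $||x||\le 1$ and $|-x|\le 1$ follow from $|x|\le 1$, the predicate $\D(|x|)$ holds trivially by taking the right disjunct (since $|x|\ge 0$), and $\D(-x)$ holds by swapping the disjuncts supplied by $\D(x)$; in both cases the continuation $\myG(\tent(\cdot))$ is $\myG(\tent(x))$, already available from (1a). On realizers this amounts to replacing the head digit by $\Right$ (respectively applying $\nnot$ to it) and retaining the existing tail that realizes $\myG(\tent(x))$.

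For (2), assuming $x\ge 0$ (and noting $0\le x\le 1$ from $|x|\le 1$), the identity $2x-1 = -\tent(x)$ lets me obtain $\myG(2x-1)$ directly from $\myG(\tent(x))$ via the negation closure (1c). Then $\myG(1-x)$ follows by one further coinduction step: $|1-x|\le 1$ and $\D(1-x)$ hold because $1-x\ge 0$, and the continuation $\myG(\tent(1-x))$ equals $\myG(2x-1)$, which we have just established. Finally, for (3) I assume $|x|\le 1/2$, so $|2x|\le 1$, observe that $\D(2x)$ follows from $\D(x)$ since scaling by a positive factor preserves the sign disjunction, and obtain the continuation $\myG(\tent(2x)) = \myG(2\tent(x)-1)$ by applying (2a) to $\tent(x)$; the crucial point is that $\tent(x) = 1-2|x|\ge 0$ precisely because $|x|\le 1/2$, so (2a) is applicable, while $\myG(\tent(x))$ is again supplied by (1a). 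One last coinduction step then delivers $\myG(2x)$.

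The work is essentially bookkeeping, and I do not expect a genuine obstacle beyond verifying the tent-map identities and the $\D$-side conditions, all of which are routine consequences of the real-number axioms. The two points requiring care are (i) ensuring that the hypotheses of each closure step are met on the correct half-line, in particular that $\tent(x)\ge 0$ holds under $|x|\le 1/2$, which is exactly what makes the descent identity in (3) usable, and (ii) respecting the dependency order, since (2) consumes (1) and (3) consumes (2). Provided these are threaded together correctly, each clause reduces to a one-step unfolding against the already-available continuation $\myG(\tent(x))$.
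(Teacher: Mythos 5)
Your proof is correct and follows essentially the same route as the paper's: both use only the fixed-point equivalence $\myG(y) \leftrightarrow |y|\le 1 \land \D(y) \land \myG(\tent(y))$ together with the tent-map identities, chaining (1) into (2) into (3), with matching realizer behaviour (head digit replaced by $\Right$ for $\myG(|x|)$, negated for $\myG(-x)$). The only (inessential) difference is in (3): you apply the first claim of (2) to $\tent(x)\ge 0$ via $\tent(2x)=2\tent(x)-1$ to reach $\myG(\tent(2x))$ in one step, whereas the paper applies the second claim to get $\myG(|2x|)$ and then unfolds once more using $\tent(|2x|)=\tent(2x)$.
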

\begin{proof}
  This follows directly from the definition of $\myG$ and 
  elementary properties
  of the tent function (recall $\tent(x)=1-|2x|$).
  The extracted programs consist of simple manipulations of 
  the given digit stream realizing $\myG(x)$, concerning only its tail and  
  first two digits. No nondeterminism is involved.

  We only use the fixed point property of $\myG$, namely that
  for all $y$,  $\myG(y)$ is equivalent to 
  $|y|\le 1 \land \D(y) \land \myG(\tent(y))$.
  
  This equivalence has computational content which will show up in the programs
  extracted from the proofs below: If it is used from left to right, a 
  stream (the realizer of $\myG(y)$) is split into its head (realizer of $\D(y)$)
  and tail (realizer of $\myG(\tent(y))$). Using it
  from right to left corresponds to the converse operation of adding 
  a digit to a stream.
  
  Proofs of (1-3):
  
  (1) follows directly from the equivalence above and the fact that
  $\tent(x)=\tent(-x)$. 
  The three extracted programs are ${\sf f}_1(d:p) = p$, 
  ${\sf f}_2(d:p) = 1:p$, and 
  ${\sf f}_3(p) = \nh\ p$ where $\nh\ (d:p) = \nnot\ d: p$.
  
  (2) Assume in addition $x\ge 0$. Then $\tent(x) = 1 - 2x$. 
  Since $\myG(\tent(x))$ and, by (1), $\myG$ is closed under negation, we have
  $\myG(2x-1)$. 
  Furthermore, since $0 \le 1-x\le 1$, we have $|1-x|\le 1$ and $\D(1-x)$.
  Therefore, to establish $\myG(1-x)$, it suffices to show $\myG(\tent(1-x))$.
  But $\tent(1-x) = 1-2(1-x)= 2x-1$ and we have shown $\myG(2x-1)$ already.
  The extracted programs are ${\sf f}_4(d:p) = \nh\ p$ and 
  ${\sf f}_5(d:p) = 1:\nh\ p$.

  (3) Now assume $|x|\le 1/2$. Then $1-|2x| \ge 0$ and we have 
  $\myG(1-|2x|)$ (since $\myG(x)$). Therefore, by (2), $\myG(|2x|)$. 
  Hence $\myG(\tent(|2x|))$ and therefore also $\myG(\tent(2x))$ since
  $\tent(2x)=\tent(|2x|)$. Since $|x|\le 1/2$ implies $|2x|\le 1$ and 
  $\myG(x)$ implies $\D(x)$ and hence $\D(2x)$, it follows $\myG(2x)$.
  The extracted program is ${\sf f}_6(d:e:q) = d:\nh\ q$.
\end{proof} 
\begin{thm}
\label{thm-g-mc}
$\myG\subseteq\myC_2$.
\end{thm}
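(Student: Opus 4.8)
The plan is to prove the inclusion by \emph{coinduction} on $\myC_2$. Writing $\Phi \eqdef \lambda X\,\lambda x\,(|x|\le 1 \land \Set(\exists d\in\SD\, X(2x-d)))$, so that $\myC_2 = \nu(\Phi)$, the coinduction rule (the last rule of Table~\ref{table-proof-ifp}) reduces the goal $\myG\subseteq\myC_2$ to establishing $\myG\subseteq\Phi(\myG)$, i.e. to showing, for every $x$,
\[
\myG(x) \to |x|\le 1 \land \Set(\exists d\in\SD\, \myG(2x-d))\,.
\]
No strengthening of the hypothesis is needed: in each branch below I shall exhibit a genuine member of $\myG$ again, so plain coinduction with $P=\myG$ suffices, which also keeps the extracted realizer a straightforward corecursion. (One should note in passing that $\Phi$ is s.p.\ and that $\exists d\in\SD\,\myG(2x-d)$ is strict, so the occurring $\Set$-formula is well-formed.)

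Fix $x$ and assume $\myG(x)$. Unfolding the fixed point of $\myG$ gives $|x|\le 1$, discharging the first conjunct. For the concurrency conjunct the idea is to feed the nondeterministic first-digit information supplied by Example~\ref{example-ConsSD} into (Conc-mp). Concretely, Lemma~\ref{lem-gsd} yields $\ConSD(x)$, that is $\Set((x\le 0 \lor x\ge 0)\lor |x|\le 1/2)$, and it then remains to supply an implication
\[
((x\le 0 \lor x\ge 0)\lor |x|\le 1/2) \to \exists d\in\SD\, \myG(2x-d)\,,
\]
from which (Conc-mp), applied to this implication together with $\ConSD(x)$, delivers exactly $\Set(\exists d\in\SD\, \myG(2x-d))$.

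The implication is proved by case analysis on the disjunction, each case choosing a signed digit and invoking Lemma~\ref{lem-gclosure}. If $x\ge 0$, take $d=1$ and use Lemma~\ref{lem-gclosure}(2) to obtain $\myG(2x-1)$. If $|x|\le 1/2$, take $d=0$ and use Lemma~\ref{lem-gclosure}(3) to obtain $\myG(2x)$. The case $x\le 0$ is the only one needing an extra step: take $d=-1$, so the goal becomes $\myG(2x+1)$; since $\myG$ is closed under negation by Lemma~\ref{lem-gclosure}(1) we have $\myG(-x)$ with $-x\ge 0$, whence Lemma~\ref{lem-gclosure}(2) gives $\myG(2(-x)-1)=\myG(-(2x+1))$, and a second application of negation closure yields $\myG(2x+1)$. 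In all three cases $d\in\SD$, which completes the implication and hence, via (Conc-mp) and coinduction, the proof.

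The only genuine subtlety is the $x\le 0$ branch, where $2x-d$ must be reshaped into a form the closure lemma can reach by a double use of negation closure; everything else is bookkeeping. Conceptually the hard part—the classical, essentially non-constructive reasoning that produces the concurrent choice of the first digit—has already been isolated in Example~\ref{example-ConsSD}, so here it merely has to be packaged through (Conc-mp) and the coinduction combinator. The resulting extracted program is the corecursive function that, on a Gray code of $x$, runs $\conSD$ on the first two digits to obtain (concurrently, via the $\mapamb$ introduced by Conc-mp) a signed digit $d$ together with a Gray code of $2x-d$ built by the simple stream manipulations $\mathsf{f}_i$, and then calls itself corecursively on the latter; by Theorem~\ref{thm-pe}, every fair computation of this program converges to a signed digit representation of $x$.
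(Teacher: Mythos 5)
Your proof is correct and follows essentially the same route as the paper's: coinduction reducing the goal to $\myG\subseteq\Phi(\myG)$, then Lemma~\ref{lem-gsd} plus (Conc-mp) reducing that to the implication $(x\le 0 \lor x\ge 0 \lor |x|\le 1/2)\to\exists d\in\SD\,\myG(2x-d)$, handled by the same three-way case split via Lemma~\ref{lem-gclosure}. The only (inessential) difference is the $x\le 0$ branch, where the paper simply observes that $\tent(x)=2x+1$ for $x\le 0$, so $\myG(\tent(x))$ immediately gives $\myG(2x-(-1))$, avoiding your detour through a double application of negation closure.
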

\begin{proof}
By coinduction. Setting 
$A(x) \eqdef \exists d \in \SD\, \myG(2x-d)$, 
we have to show 
\begin{equation}
\label{eq-step}
\myG(x) \to |x| \leq 1 \land \Set(A(x))\,.
\end{equation}
Assume $\myG(x)$. 
Then $\ConSD(x)$, by Lemma \ref{lem-gsd}.
Therefore, it suffices to show 
\begin{equation}
\label{eq-consd}
\ConSD(x)  \to \Set(A(x))\,
\end{equation} 
which, with the help of the rule (Conc-mp), can be reduced to
\begin{equation}
\label{eq-onedigit}
(x \leq 0 \lor x \geq 0 \lor |x| \leq 1/2)   \to A(x).
\end{equation}
(\ref{eq-onedigit}) can be easily shown 
using Lemma~\ref{lem-gclosure}:
If $x\le 0$, then $\tent(x) = 2x+1$. Since $\myG(\tent(x))$, we have 
$\myG(2x-d)$ for $d=-1$.
If $x \ge 0$, then $\myG(2x-d)$ for $d=1$ by (2).
If $|x|\le 1/2$, then $\myG(2x-d)$ for $d=0$ by (3).
\end{proof}

The program 
$\onedigit : \ftyp{\stream{\bool}}{\ftyp{\tri}{\tri\times\stream{\bool}}}$
extracted from the proof of (\ref{eq-onedigit}) 
from the assumption $\myG(x)$ is
\begin{align*}
\onedigit\ (a:b:p)\ c \eqdef 
&  \caseof{c} \{\\
& \hspace{1em} \Left(d) \to 
                       \caseof{d} \{\\
&\hspace{7em} \Left(\_) \to \Pair(-1, b:p); \\
&\hspace{7em}               \Right(\_) \to \Pair(1,(\nnot\ b):p) \};\\
&\hspace{1em}\Right(\_) \to    \Pair(0, a: (\nh\ p))\}\\
\nnot\ a \eqdef& \caseof{a} \{\Left(\_) \to \Right; \\
&\hspace*{4.7em}\Right(\_) \to \Left\}\\
\nh\ (a: p) \eqdef& (\nnot\ a): p
\end{align*}
This is lifted to a proof of (\ref{eq-consd}) using $\mapamb$ (the realizer of
(Conc-mp)). Hence the extracted realizer 
${\mathsf s} : \ftyp{\stream{\bool}}{\Am(\tri \times \stream{\bool})}$
of~(\ref{eq-step}) is 
\[
{\mathsf s}\ p \eqdef  \mapamb\ (\onedigit\ p)\  (\gscomp\ p)
\]
The main program extracted  from the proof of  Theorem~\ref{thm-g-mc} is
obtained from the step function $\mathsf{s}$ by a special form of recursion, 
commonly known as \emph{coiteration}. Formally, we use the realizer of the
coinduction rule $\COIND(\Phi_{\myC_2},\myG)$ where $\Phi_{\myC_2}$ is the operator 
used to define $\myC_2$ as largest fixed point, i.e.\ 
\begin{align*}
\Phi_{\myC_2} &\eqdef \lambda X\,\lambda x\, |x| \leq 1 \land \Set(\exists d \in \SD\, 
       X(2x-d)).
\end{align*}
The realizer of coinduction (whose correctness is shown in~\cite{IFP}) 
also uses a program 
$\mon : \ftyp{(\ftyp{\alpha_X}{\alpha_Y})}{\Am(\ftyp{\tri \times \alpha_X)}{\Am(\tri \times \alpha_Y)}}$ 
extracted from the canonical proof of 
the monotonicity of $\Phi_{\myC_2}$:
\begin{align*}
&\mon\ f\ p = \mapamb\ (\mon'\ f)\  p\\ 
&\qquad \hbox{where}\quad
\mon'\ f\ (a:t) = a : f\ t
\end{align*}
Putting everything together, we obtain the 
\emph{infinite Gray code to signed digit representation conversion program} 
\[\gtos : \ftyp{\stream{\bool}}{\tfix{\alpha}{\Am(\tri \times \alpha)}}, 
\quad\gtos \eqrec (\mon\ \gtos) \circ \mathsf{s}\]
\begin{picture}(80,75)(-80,5)                                      
\put(20,60){$\stream{\bool}$}
\put(65,61){$\Longrightarrow$}
\put(64,70){$\gtos$}
\put(120,60){$\tfix{\alpha}{\Am(\tri \times \alpha)}$}
\put(24,40){$\Big\Downarrow$}
\put(15,40){$\mathsf{s}$}
\put(130,40){$\rotatebox{90}{=}$ %
}
\put(0,13){$\Am(\tri \times \stream{\bool})$}
\put(58,25){$\mon\ \gtos$}
\put(65,13){$\Longrightarrow$}
\put(110,13){$\Am(\tri \times (\tfix{\alpha}{\Am(\tri \times \alpha)}))$}
\end{picture} 

Using the equational theory of RCFP,
one can simplify $\gtos$ to the following program.
The soundness of RIFP axioms with respect to the denotational semantics and the
adequacy property of our language guarantees that these two programs are equivalent.
\begin{align*}
  \gtos\ &(a:b:t) =\Amb(\\
  &\ \ (\caseof{a} \{\Left(\_) \to  -1: \gtos\ (b:t);\\
  &\ \ \hspace*{1.7cm}                   \Right(\_) \to  1: \gtos((\nnot\ b):t)\}),\\
  &\ \        (\caseof{b} \{\Right(\_) \to  0: \gtos(a:(\nh\ t))\})).\\
  &\ \   \hspace*{1.7cm}      \Left(\_) \to  \bot\})).
  \end{align*}
In \cite{Tsuiki05}, a Gray-code to signed digit conversion program was written
with the locally angelic $\Amb$ operator
that evaluates the first two cells $a$ and $b$  in parallel and continues the computation based
on the value obtained first.  In that program,  if the value of $b$ is first obtained and 
it is $\Left$, then it has to evaluate $a$ again.   With globally angelic choice, as the above program shows, 
one can simply neglect the value %
to use the value of the other thread.  
Globally angelic choice also has the possibility to speed up the computation  if 
the two threads of $\Amb$ are computed in parallel and 
the {\em whole} computation based on the secondly-obtained value of $\Amb$ terminates first.

\section{Implementation}
\label{sec-experiments}
We describe the Haskell implementation of extracted programs in general,
and the particular extracted program for infinite Gray code conversion in particular.

Since our programming language can be viewed as a fragment of Haskell,  
we can execute the extracted program in Haskell by
implementing the Amb operator with the Haskell concurrency module.
We comment on the essential points of the implementation.
The full code is listed in the appendix.

First, we define the domain $D$ as a Haskell data type: %
\begin{verbatim}
data D = Nil | Le D | Ri D | Pair(D, D) | Fun(D -> D) | Amb(D, D)
\end{verbatim}
The $\ssp$-reduction, which preserves the Phase I denotational semantics and reduces a program to a w.h.n.f. 
with the leftmost outermost reduction strategy,  coincides with reduction in Haskell. 
Thus, we can identify extracted programs with programs of type D that compute 
that phase.
The $\newprintc$ reduction that concurrently calculates the arguments of $\Amb$
can be implemented  with the Haskell concurrency module.   
In \cite{BoisPLT02}, the (locally angelic) amb operator 
was implemented in Glasgow Distributed Haskell (GDH).
Here, we implemented it 
with  the Haskell libraries \verb|Control.Concurrent| and \verb|Control.Exception|
as 
a simple function 
\verb|ambL :: [b] -> IO b|
that concurrently evaluates the elements of a list and writes the 
result first obtained in a mutable variable.

Finally, the function 
\verb|ed :: D -> IO D|  produces an element of  $\ddata(a)$  from $a \in D$
by activating \verb|ambL| for the case of $\Amb(a, b)$.
It corresponds to $\newprintp$-reduction though it computes arguments of 
a pair 
sequentially.
This function is nondeterministic since the result of executing \verb|ed (Amb a b)|
depends on which of the arguments \verb|a,b| delivers a result first.
The set of all possible results of \verb|ed a| corresponds to the set $\ddata(a)$.

We executed the program extracted in Section \ref{sec-gray} with
\verb|ed|. 
As we have noted, the number $0$ has three Gray-codes
(i.e., realizers of $\myG(0)$):  
$a = \bot\!:\!1\!:\!(-1)^\omega$,
$b = 1\!:\!1\!:\!(-1)^\omega$, and 
$c = -1\!:\!1\!:\!(-1)^\omega$.  On the other hand, 
the set of signed digit representations of $0$ is
$A \cup B \cup C$ where
$A = \{0^\omega\}$,
$B = \{0^k\!:\!1\!:\!(-1)^\omega \mid k \geq 0\}$, 
and
$C = \{0^k\!:\!(-1)\!:\!1^\omega \mid k \geq 0\}$, 
i.e.,
$A \cup B \cup C$ is the set of realizers of $\myC(0)$.
One can calculate
$$\gtos (a) = \Amb(\bot, 0\!:\! \Amb(\bot,  0\!:\! \ldots))$$
and $\ddata(\gtos (a)) = A$. Thus 
$\gtos (a)$ is reduced uniquely to $0 \!:\! 0 \!:\! \ldots$
by the operational semantics.
On the other hand,  one can calculate $\ddata(\gtos (b)) = A \cup B$ and
$\ddata(\gtos (c)) = A \cup C$.  They are subsets of 
the set of realizers of $\myC(0)$ as Theorem \ref{thm-soundnessII} says, and
$\gtos (b)$ is reduced to an element of $A \cup B$ 
as Theorem \ref{thm-pe} says.

We wrote a program that produces a $\{-1, 1, \bot\}$-sequence 
with the option to control the speed of computation of each digit ($-1$ and $1$). 
Applying this to \texttt{gtos} and then to \texttt{ed} yields the expected results. 

\section{Conclusion}
\label{sec-conclusion}

We introduced the logical system $\CFP$ by extending $\IFP$ \cite{IFP}
with two propositional operators $\rt{A}{B}$ and $\Set(A)$,
and developed a method for
extracting nondeterministic and concurrent 
programs that are provably total and  satisfy
their specifications. 

While $\IFP$ already imports classical logic through nc-axioms that
need only be true classically, in $\CFP$ the access to classical logic
is considerably widened through the rule (Conc-lem) which,
when interpreting $\rt{A}{B}$ as $A \to B$ and identifying $\Set(A)$ with $A$,
is constructively 
invalid but has nontrivial 
nondeterministic
computational content.

We applied our system to extract a 
concurrent
translation from infinite Gray 
code to the signed digit representation, thus demonstrating that this approach
is not merely about program extraction `in principle', but can be used to
solve nontrivial 
concurrent
computation problems through 
program extraction.

After an overview of related work, 
we conclude with a brief discussion of further applications of our approach.

\subsection{Related work}
\label{sub-related}
The CSL 2016 paper \cite{BergerCSL16} 
is an early attempt to capture the concurrent execution of partial programs
via program extraction and can be seen as the starting point of our work.
Our main advance, compared to that paper, is the specification of 
globally angelic choice with the help of the new logical connective $B|_A$ 
which allows us to express bounded nondeterminism with complete
control of the number of threads.
In contrast, \cite{BergerCSL16} 
modelled nondeterminism with countably infinite branching,
which is either unsuitable or overkill for most applications. 
Furthermore, our approach has a typing discipline,
and a sound and complete small-step reduction that permits switching 
between global and local nondeterminism (see Section~\ref{sub-local} below).

As for the study of angelic nondeterminism, it is not easy to develop 
a denotational semantics as we noted in Section \ref{sec-ang}, 
and it has been mainly studied from the operational point of view,  
for example through notions of equivalence or refinement of processes
and associated proof methods,
which are all fundamental for correctness and termination
\cite{LassenMoran99,MoranSandsCarlsson2003,Lassen2006,sabel_schmidt-schauss_2008,CarayolHirschkoffSangiori2005,Levy07}.
Regarding imperative languages, Hoare logic and its extensions 
have been applied to nondeterminism and proving totality from the very 
beginning (\cite{Apt2019FiftyYO} is a good survey on this subject).
\cite{Mamouras15} studies angelic nondeterminism with 
an extension of Hoare Logic.
In \cite{Geoffroy18} it is shown that nondeterminism can be modelled 
within Krivine's classical realizability~\cite{Krivine03}.

There are many logical approaches to concurrency.
For example, Reynolds' separation logic~\cite{Reynolds:2002} has been extended  
to the concurrent and higher-order setting~\cite{OHearn07,Brookes07,Jungetal18},
and there are logics for session types
and process calculi ~\cite{Wadler14a,CairesPfenningToninho16,Kouzapasetal16}
as well as logics based on Avron's hypersequent calculus \cite{AscieriCiabattoniGenco20}, 
which are oriented more towards the 
formulae-as-types/proofs-as-programs~\cite{Howard80,Wadler14} or 
proofs-as-processes paradigm~\cite{Abramsky94}.
Another line of research are process 
algebras~\cite{Milner80,Hoare85,MilnerParrowWalker92} with associated temporal and modal 
logics~\cite{Stirling91},
as well as axiomatic approaches to process algebra~\cite{BergstraKlop89}.
In~\cite{SchmidtSchau2020}
a translation of Milner's $\pi$-calculus in a formal model of concurrent Haskell
is provided and proven to be correct. Based on this, we hope to give a similar
correctness proof of the current implementation of our calculus in concurrent Haskell.

The work cited above introduces highly specialized logics and expressive languages that model and reason about concurrent programs with complex communication patterns. While we do not model communication, our approach focuses on a form of choice that can be seen as the formal representation of a race condition, where two threads of a process attempt to modify the same memory cell simultaneously (in our case, the same cell of a stream of digits representing a real number). Race conditions generally lead to unpredictable behaviour, so they are typically \emph{avoided} through proper memory access management. In contrast, our work provides a logical framework for the \emph{safe use} of racy programs, an area that has not been fully explored. Additionally, while the cited work is concerned with formal systems for \emph{processes}, our work focuses on the formalization of \emph{abstract mathematics} (e.g., the axiomatic description of real numbers), which connects to concurrency through realizability and program extraction. The result is programs (with a limited form of concurrency) that compute provably correct witnesses for statements in abstract mathematics.

\subsection{Further applications in computable analysis}
\label{sub-analysis}
Our formalization of infinite Gray code in $\CFP$ has been extended 
from real numbers to compact sets of reals~\cite{BergerSpreen23},
building upon a general theory of digital representation 
in compact metric spaces~\cite{BergerSpreen16,Spreen20}.
In~\cite{BergerSpreen23} further rules for $\rt{A}{B}$ and $\Set(B)$
are introduced whose realizers are expressed using a generalized case-construct
with overlapping clauses.

Another application in computable analysis is
Gaussian elimination which involves the task of finding
a non-zero entry in a non-singular matrix. 
As shown in~\cite{BergerSeisenbergerSpreenTsuiki22}, 
our approach makes it possible to search for such `pivot elements'
in a concurrent way.

\subsection{Modelling locally angelic choice} 
\label{sub-local}
We remarked earlier that our interpretation of $\Amb$ corresponds to
\emph{globally} angelic choice. Surprisingly, \emph{locally} angelic choice
can be modelled by a slight modification of the restriction and 
the total concurrency
operators: Simply replace $A$ by the logically equivalent formula
$A \lor \False$, more precisely, set
$\rtp{A}{B} \eqdef \rt{A}{(B\lor\False)}$ and
$\Set'(A) \eqdef \Set(A\lor\False)$.
Then the proof rules in 
Section~\ref{sec-cfp}
with $\rt{}{}$ and $\Set$ replaced by $\rtp{}{}$ and
$\Set'$, respectively but without the 
strictness
condition, are theorems of $\CFP$. 
To see that the operator $\Set'$ indeed corresponds to locally angelic choice
it is best to compare the realizers of the rule (Conc-mp) for $\Set$ and $\Set'$.
Assume $A$, $B$ are non-Harrop and $f$ is a realizer of $A \to B$.
Then, 
if $\Amb(a,b)$ realizes $\Set(A)$, 
then $\Amb(\strictapp{f}{a}, \strictapp{f}{b})$ realizes $\Set(B)$.
This means that to choose, say, the left argument of $\Amb$ as a result,
$a$ must terminate and so must the ambient (global) computation 
$\strictapp{f}{a}$.
On the other hand, the program extracted from the proof of (Conc-mp) for
$\Set'$ takes a realizer $\Amb(a,b)$ of $\Set'(A)$ and returns 
$\Amb(\strictapp{(\aup \circ f \circ \adown)}{a}, \strictapp{(\aup \circ f \circ \adown)}{b})$
as realizer of $\Set'(B)$, where $\aup$ and $\adown$ are the realizers of $B \to (B\lor\False)$ and 
$(A \lor\False) \to A$, 
namely,
$\aup \eqdef \lambda a.\, \Left(a)$ and
$\adown \eqdef \lambda c.\, \caseof{c}\{\Left(a) \to a\}$.
Now, to choose the left argument of $\Amb$, 
it is enough for $a$ to terminate since the non-strict operation $\aup$
will immediately produce a w.h.n.f. without invoking the ambient computation.
By redefining realizers of $\rt{A}{B}$ and $\Set(A)$ as realizers of 
$\rtp{A}{B}$ and $\Set'(A)$ and 
the realizers of the rules of $\CFP$ as those extracted from the proofs 
of the corresponding rules for $\rtp{}{}$ and $\Set'$,
we have another realizability interpretation of CFP that models
locally angelic choice.  

\subsection{Markov's principle with restriction}
\label{sub-markov}
So far, (Rest-intro) is the only rule that derives a restriction in a 
non-trivial way. 
However, there are other such rules,
for example
\begin{center}
\AxiomC{$\forall x \in \NN (P(x) \lor \neg P(x))$}
\RightLabel{Rest-Markov}
\UnaryInfC{$\rt{\exists x \in\NN\,P(x)}{\exists x \in\NN\,P(x)}$}
            \DisplayProof \ \ \ \ 
          \end{center}
If $P(x)$ is Harrop, then (Rest-Markov) 
is realized by minimization.
More precisely, if $f $ realizes $\forall x \in \NN (P(x) \lor \neg P(x))$,
then $\min(f)$ realizes the formula  
$\rt{\exists x \in\NN\, P(x)}{\exists x \in\NN\,P(x)}$,
where $\min(f)$ computes the least $k \in \NN$ such that $f\, k = \Left$
if such $k$ exists, and does not terminate, otherwise. 
One might expect as conclusion of (Rest-Markov) the formula
$\rt{(\neg\neg\exists x \in\NN\,P(x))}{\exists x \in\NN\,P(x)}$.  
However, because of (Rest-stab) (which is realized by the identity), 
this wouldn't make a difference.
The rule (Rest-Markov) can be used, for example, to prove that
Harrop predicates that are recursively enumerable (re) and 
have re complements are decidable. 
From the proof one can extract a program 
that 
concurrently searches for evidence of membership in the predicate and
its complement.

\subsubsection*{Acknowledgements}
This work was supported by IRSES Nr.~612638 CORCON and Nr.~294962 COMPUTAL of
  the European Commission, the JSPS Core-to-Core Program, A. Advanced
  research Networks and JSPS KAKENHI  15K00015 and 23H03346 as well as 
  the Marie Curie RISE project CID (H2020-MSCA-RISE-2016-731143).
\bigskip

We thank the three anonymous reviewers for their thorough reviews
and constructive criticism of our work. We have incorporated many of their
valuable suggestions.

\bibliographystyle{alphaurl}
\bibliography{../../bibandmac/refs}

\newcommand{\etalchar}[1]{$^{#1}$}
\begin{thebibliography}{DBPLT02}

\bibitem[Abr92]{Abramsky94}
S.~Abramsky.
\newblock Proofs as processes.
\newblock {\em Theoretical Computer Science}, 135(1):5--9, April 1992.
\newblock \href {https://doi.org/10.1016/0304-3975(94)00103-0}
  {\path{doi:10.1016/0304-3975(94)00103-0}}.

\bibitem[ACG20]{AscieriCiabattoniGenco20}
F.~Aschieri, A.~Ciabattoni, and F.~Genco.
\newblock On the concurrent computational content of intermediate logics.
\newblock {\em Theoretical Computer Science}, 813:375--409, 2020.

\bibitem[AO19]{Apt2019FiftyYO}
K.~Apt and E.~Olderog.
\newblock Fifty years of {H}oare’s logic.
\newblock {\em Formal Aspects of Computing}, 31:751 -- 807, 2019.
\newblock \href {https://doi.org/10.1007/s00165-019-00501-3}
  {\path{doi:10.1007/s00165-019-00501-3}}.

\bibitem[Ber]{githubUB}
U.~Berger.
\newblock {CFP} (concurrent fixed point logic) repository.
\newblock URL: \url{https://github.com/ujberger/cfp}.

\bibitem[Ber10]{Berger10}
U.~Berger.
\newblock Realisability for induction and coinduction with applications to
  constructive analysis.
\newblock {\em Jour.~Universal Comput.~Sci.}, 16(18):2535--2555, 2010.

\bibitem[Ber11]{Berger11}
U.~Berger.
\newblock From coinductive proofs to exact real arithmetic: theory and
  applications.
\newblock {\em Logical Methods in Comput.~Sci.}, 7(1):1--24, 2011.
\newblock \href {https://doi.org/10.2168/LMCS-7(1:8)2011}
  {\path{doi:10.2168/LMCS-7(1:8)2011}}.

\bibitem[Ber16]{BergerCSL16}
U.~Berger.
\newblock {Extracting Non-Deterministic Concurrent Programs}.
\newblock In J.-M. Talbot and L.~Regnier, editors, {\em 25th EACSL Annual
  Conference on Computer Science Logic (CSL 2016)}, volume~62 of {\em Leibniz
  International Proceedings in Informatics (LIPIcs)}, pages 26:1--26:21,
  Dagstuhl, Germany, 2016. Schloss Dagstuhl--Leibniz-Zentrum fuer Informatik.
\newblock \href {https://doi.org/10.4230/LIPIcs.CSL.2016.26}
  {\path{doi:10.4230/LIPIcs.CSL.2016.26}}.

\bibitem[BK89]{BergstraKlop89}
J.~A. Bergstra and J.~W. Klop.
\newblock {ACP}${}_\tau$ - {A} {U}niversal {A}xiom {S}ystem for {P}rocess
  {S}pecification.
\newblock In Martin Wirsing and Jan~A. Bergstra, editors, {\em Algebraic
  Methods: Theory, Tools and Applications}, pages 445--463, Berlin, Heidelberg,
  1989. Springer Berlin Heidelberg.

\bibitem[BMSS11]{BergerMiyamotoSchwichtenbergSeisenberger11}
U.~Berger, K.~Miyamoto, H.~Schwichtenberg, and M.~Seisenberger.
\newblock Minlog - {A} {T}ool for {P}rogram {E}xtraction for {S}upporting
  {A}lgebra and {C}oalgebra.
\newblock In {\em CALCO-Tools}, volume 6859 of {\em Lecture Notes in Computer
  Science}, pages 393--399. Springer, 2011.
\newblock \href {https://doi.org/10.1007/978-3-642-22944-2_29}
  {\path{doi:10.1007/978-3-642-22944-2_29}}.

\bibitem[BP18]{BergerPetrovska18}
U.~Berger and O.~Petrovska.
\newblock Optimized program extraction for induction and coinduction.
\newblock In {\em CiE 2018: Sailing Routes in the World of Computation}, volume
  10936 of {\em LNCS}, pages 70--80. Springer Verlag, Berlin, Heidelberg, New
  York, 2018.
\newblock \href {https://doi.org/10.1007/978-3-319-94418-0_7}
  {\path{doi:10.1007/978-3-319-94418-0_7}}.

\bibitem[BPT20]{DBLP:conf/cie/0001PT20}
U.~Berger, O.~Petrovska, and H.~Tsuiki.
\newblock Prawf: An interactive proof system for program extraction.
\newblock In M.~Anselmo, G.D. Vedova, F.~Manea, and A.~Pauly, editors, {\em
  Beyond the Horizon of Computability - 16th Conference on Computability in
  Europe, CiE 2020}, volume 12098 of {\em Lecture Notes in Computer Science},
  pages 137--148. Springer, 2020.
\newblock \href {https://doi.org/10.1007/978-3-030-51466-2_12}
  {\path{doi:10.1007/978-3-030-51466-2_12}}.

\bibitem[Bro86]{Broy1986}
M.~Broy.
\newblock A theory for nondeterminism, parallelism, communication, and
  concurrency.
\newblock {\em Theoretical Computer Science}, 45:1 -- 61, 1986.
\newblock \href {https://doi.org/10.1016/0304-3975(86)90040-X}
  {\path{doi:10.1016/0304-3975(86)90040-X}}.

\bibitem[Bro07]{Brookes07}
S.~Brookes.
\newblock A semantics for concurrent separation logic.
\newblock {\em Theoretical Computer Science}, 375:227--370, 2007.
\newblock \href {https://doi.org/10.1016/j.tcs.2006.12.034}
  {\path{doi:10.1016/j.tcs.2006.12.034}}.

\bibitem[BS12]{SeisenBerger12}
U.~Berger and M.~Seisenberger.
\newblock Proofs, programs, processes.
\newblock {\em Theory of Computing Systems}, 51(3):213--329, 2012.
\newblock \href {https://doi.org/10.1007/s00224-011-9325-8}
  {\path{doi:10.1007/s00224-011-9325-8}}.

\bibitem[BS16]{BergerSpreen16}
U.~Berger and D.~Spreen.
\newblock A {C}oinductive {A}pproach to {C}omputing with {C}ompact {S}ets.
\newblock {\em Journal of Logic and Analysis}, 8, 2016.

\bibitem[BSST22]{BergerSeisenbergerSpreenTsuiki22}
U.~Berger, M.~Seisenberger, D.~Spreen, and H.~Tsuiki.
\newblock {C}oncurrent {G}aussian {E}limination.
\newblock In {\em Mathematics for Computation (M4C)}, pages 193--217. World
  Scientific, Singapore, 2022.
\newblock \href {https://doi.org/10.1142/12500} {\path{doi:10.1142/12500}}.

\bibitem[BT21]{IFP}
U.~Berger and H.~Tsuiki.
\newblock Intuitionistic fixed point logic.
\newblock {\em Annals of Pure and Applied Logic}, 172(3):1--56, 2021.
\newblock An updated and corrected version is available at
  \texttt{arXiv:2002.00188}.
\newblock \href {https://doi.org/10.1016/j.apal.2020.102903}
  {\path{doi:10.1016/j.apal.2020.102903}}.

\bibitem[BT22]{CFPesop}
U.~Berger and H.~Tsuiki.
\newblock Extracting total {A}mb programs from proofs.
\newblock In {\em Programming Languages and Systems, 31st European Symposium on
  Programming, ESOP 2022, Held as Part of the Joint European Conferences on
  Theory and Practice of Software, {ETAPS} 2022, Munich, Germany, April 2-7,
  2022, Proceedings}, volume 13240 of {\em Lecture Notes in Computer Science},
  pages 85--113, 2022.
\newblock \href {https://doi.org/10.1007/978-3-030-99336-8}
  {\path{doi:10.1007/978-3-030-99336-8}}.

\bibitem[CH85]{ClingerHalpern85}
W.~Clinger and C.~Halpern.
\newblock Alternative semantics for {M}c{C}arthy's amb.
\newblock In Winskel~G. Brookes~S.D., Roscoe~A.W., editor, {\em Seminar on
  Concurrency. CONCURRENCY 1984}, volume 197 of {\em Lecture Notes in Computer
  Science}. Springer, 1985.
\newblock \href {https://doi.org/10.1007/3-540-15670-4_22}
  {\path{doi:10.1007/3-540-15670-4_22}}.

\bibitem[CHS05]{CarayolHirschkoffSangiori2005}
A.~Carayol, D.~Hirschkoff, and D.~Sangiorgi.
\newblock On the representation of {McCarthy's} amb in the $\pi$-calculus.
\newblock {\em Theoretical Computer Science}, 330(3):439 -- 473, 2005.
\newblock Expressiveness in Concurrency.
\newblock \href {https://doi.org/10.1016/j.tcs.2004.10.005}
  {\path{doi:10.1016/j.tcs.2004.10.005}}.

\bibitem[CPT16]{CairesPfenningToninho16}
L.~Caires, F.~Pfenning, and B.~Toninho.
\newblock Linear logic propositions as session types.
\newblock {\em Mathematical Structures in Computer Science}, 26:367--423, 2016.
\newblock \href {https://doi.org/10.1017/S0960129514000218}
  {\path{doi:10.1017/S0960129514000218}}.

\bibitem[DBPLT02]{BoisPLT02}
A.R. Du~Bois, R.F. Pointon, H.-W. Loidl, and P.W. Trinder.
\newblock Implementing declarative parallel bottom-avoiding choice.
\newblock In {\em 14th Symposium on Computer Architecture and High Performance
  Computing {(SBAC-PAD} 2002), 28-30 October 2002, Vitoria, Espirito Santo,
  Brazil}, pages 82--92. {IEEE} Computer Society, 2002.
\newblock \href {https://doi.org/10.1109/CAHPC.2002.1180763}
  {\path{doi:10.1109/CAHPC.2002.1180763}}.

\bibitem[DG99]{Gianantonio99}
P.~Di~Gianantonio.
\newblock An abstract data type for real numbers.
\newblock {\em Theoretical Computer Science}, 221(1-2):295--326, 1999.
\newblock \href {https://doi.org/10.1016/S0304-3975(99)00036-5}
  {\path{doi:10.1016/S0304-3975(99)00036-5}}.

\bibitem[Esc96]{Escardo96}
M.~H. Escardo.
\newblock {PCF} extended with real numbers.
\newblock {\em Theoretical Computer Science}, 162:79--115, 1996.
\newblock \href {https://doi.org/10.1016/0304-3975(95)00250-2}
  {\path{doi:10.1016/0304-3975(95)00250-2}}.

\bibitem[Geo18]{Geoffroy18}
G.~Geoffroy.
\newblock Classical realizability as a classifier for nondeterminism.
\newblock In {\em Proceedings of the Thirty third Annual IEEE Symposium on
  Logic in Computer Science (LICS 2018)}, pages 462--471. IEEE Computer Society
  Press, July 2018.

\bibitem[GHK{\etalchar{+}}03]{GierzHofmannKeimelLawsonMisloveScott03}
G.~Gierz, K.~H. Hofmann, K.~Keimel, J.~D. Lawson, M.~Mislove, and D.~S. Scott.
\newblock {\em Continuous Lattices and Domains}, volume~93 of {\em Encyclopedia
  of Mathematics and its Applications}.
\newblock Cambridge University Press, 2003.

\bibitem[HM92]{HughesM92}
J.~Hughes and A.~Moran.
\newblock A semantics for locally bottom-avoiding choice.
\newblock In John Launchbury and Patrick~M. Sansom, editors, {\em Functional
  Programming, Glasgow 1992, Proceedings of the 1992 Glasgow Workshop on
  Functional Programming, Ayr, Scotland, UK, 6-8 July 1992}, Workshops in
  Computing, pages 102--112. Springer, 1992.
\newblock \href {https://doi.org/10.1007/978-1-4471-3215-8_9}
  {\path{doi:10.1007/978-1-4471-3215-8_9}}.

\bibitem[HO89]{HughesO89}
John Hughes and John O'Donnell.
\newblock Expressing and reasoning about non-deterministic functional programs.
\newblock In Kei Davis and John Hughes, editors, {\em Functional Programming,
  Proceedings of the 1989 Glasgow Workshop, 21-23 August 1989, Fraserburgh,
  Scotland, {UK}}, Workshops in Computing, pages 308--328. Springer, 1989.

\bibitem[Hoa85]{Hoare85}
C.A.R. Hoare.
\newblock {\em Communicating Sequential Processes}.
\newblock Prentice Hall, 1985.

\bibitem[How80]{Howard80}
W.~A. Howard.
\newblock The formulae-as-types notion of construction.
\newblock In J.~P. Seldin and J.~R. Hindley, editors, {\em To H.B. Curry:
  Essays on Combinatory Logic, Lambda Calculus and Formalism}, pages 479--490.
  Academic Press, 1980.

\bibitem[JKJ{\etalchar{+}}18]{Jungetal18}
R.~Jung, R.~Krebbers, J.-H. Jourdan, A.~Bizjak, L.~Birkedal, and D.~Dreyer.
\newblock Iris from the ground up.
\newblock {\em Journal of Functional Programming}, 28:1--73, 2018.
\newblock \href {https://doi.org/10.1017/S0956796818000151}
  {\path{doi:10.1017/S0956796818000151}}.

\bibitem[Kle45]{Kleene45}
S.~C. Kleene.
\newblock On the interpretation of intuitionistic number theory.
\newblock {\em The Journal of Symbolic Logic}, 10:109--124, 1945.

\bibitem[KNHH16]{Kouzapasetal16}
D.~Kouzapas, Y.~Nobuko, R.~Hu, and K.~Honda.
\newblock On asynchronous eventful session semantics.
\newblock {\em Mathematical Structures in Computer Science}, 26:303--364, 2016.
\newblock \href {https://doi.org/10.1017/S096012951400019X}
  {\path{doi:10.1017/S096012951400019X}}.

\bibitem[Kri03]{Krivine03}
J-L. Krivine.
\newblock Dependent choice, `quote' and the clock.
\newblock {\em Theoretical Computer Science}, 308:259--276, 2003.

\bibitem[Las06]{Lassen2006}
S.~B. Lassen.
\newblock {N}ormal {F}orm {S}imulation for {M}c{C}arthy's {A}mb.
\newblock {\em Electronic Notes in Theoretical Computer Science}, 155:445 --
  465, 2006.
\newblock Proceedings of the 21st Annual Conference on Mathematical Foundations
  of Programming Semantics (MFPS XXI).
\newblock \href {https://doi.org/10.1016/j.entcs.2005.11.068}
  {\path{doi:10.1016/j.entcs.2005.11.068}}.

\bibitem[Lev07]{Levy07}
P.~B. Levy.
\newblock {A}mb breaks {W}ell-{P}ointedness, {G}round {A}mb {d}oesn't.
\newblock {\em Electronic Notes in Theoretical Computer Science}, 173:221 --
  239, 2007.
\newblock Proceedings of the 23rd Annual Conference on Mathematical Foundations
  of Programming Semantics (MFPS XXIII).
\newblock \href {https://doi.org/10.1016/j.entcs.2007.02.036}
  {\path{doi:10.1016/j.entcs.2007.02.036}}.

\bibitem[LM99]{LassenMoran99}
S.~B. Lassen and A.~Moran.
\newblock {U}nique {F}ixed {P}oint {I}nduction for {M}c{C}arthy's {A}mb.
\newblock In M~Kutylowski, L.~Pacholski, and T.~Wierzbicki, editors, {\em
  Mathematical Foundations of Computer Science 1999, 24th International
  Symposium, MFCS'99, Szklarska Poreba, Poland, September 6-10, 1999,
  Proceedings}, volume 1672 of {\em Lecture Notes in Computer Science}, pages
  198--208. Springer, 1999.
\newblock \href {https://doi.org/10.1007/3-540-48340-3_18}
  {\path{doi:10.1007/3-540-48340-3_18}}.

\bibitem[Luc77]{LUCKHARDT1977321}
H.~Luckhardt.
\newblock A fundamental effect in computations on real numbers.
\newblock {\em Theoretical Computer Science}, 5(3):321--324, 1977.
\newblock \href {https://doi.org/10.1016/0304-3975(77)90048-2}
  {\path{doi:10.1016/0304-3975(77)90048-2}}.

\bibitem[Mam15]{Mamouras15}
K.~Mamouras.
\newblock Synthesis of strategies and the {Hoare} logic of angelic
  nondeterminism.
\newblock In Andrew~M. Pitts, editor, {\em Foundations of Software Science and
  Computation Structures - 18th International Conference, FoSSaCS 2015}, volume
  9034 of {\em Lecture Notes in Computer Science}, pages 25--40. Springer,
  2015.
\newblock \href {https://doi.org/10.1007/978-3-662-46678-0_2}
  {\path{doi:10.1007/978-3-662-46678-0_2}}.

\bibitem[McC63]{McCarthy1963}
J.~McCarthy.
\newblock A basis for a mathematical theory of computation.
\newblock In P.~Braffort and D.~Hirschberg, editors, {\em Computer Programming
  and Formal Systems}, volume~35 of {\em Studies in Logic and the Foundations
  of Mathematics}, pages 33 -- 70. Elsevier, 1963.
\newblock \href {https://doi.org/10.1016/S0049-237X(08)72018-4}
  {\path{doi:10.1016/S0049-237X(08)72018-4}}.

\bibitem[Mil80]{Milner80}
R.~Milner.
\newblock {\em A Calculus of Communicating Systems}.
\newblock Springer Verlag, Berlin, Heidelberg, New York, 1980.

\bibitem[MPW92]{MilnerParrowWalker92}
R.~Milner, J.~Parrow, and D.~Walker.
\newblock A {C}alculus of {M}obile {P}rocesses, {P}art 1.
\newblock {\em Journal of Information and Computation}, 100(1):1--40, 1992.

\bibitem[MSC03]{MoranSandsCarlsson2003}
A.~Moran, D.~Sands, and M.~Carlsson.
\newblock Erratic fudgets: a semantic theory for an embedded coordination
  language.
\newblock {\em Science of Computer Programming}, 46(1):99 -- 135, 2003.
\newblock Special Issue on Coordination Languages and Architectures.
\newblock \href {https://doi.org/10.1016/S0167-6423(02)00088-6}
  {\path{doi:10.1016/S0167-6423(02)00088-6}}.

\bibitem[O'H07]{OHearn07}
P.~O'Hearn.
\newblock Resources, concurrency, and local reasoning.
\newblock {\em Theoretical Computer Science}, 375(1):271--307, 2007.
\newblock \href {https://doi.org/10.1016/j.tcs.2006.12.035}
  {\path{doi:10.1016/j.tcs.2006.12.035}}.

\bibitem[Pie02]{Pierce:2002}
B.~C. Pierce.
\newblock {\em Types and Programming Languages}.
\newblock The MIT Press, 2002.

\bibitem[Rey02]{Reynolds:2002}
J.~C. Reynolds.
\newblock Separation logic: A logic for shared mutable data structures.
\newblock In {\em Proceedings of the 17th Annual IEEE Symposium on Logic in
  Computer Science}, LICS '02, pages 55--74, Washington, DC, USA, 2002. IEEE
  Computer Society.
\newblock \href {https://doi.org/10.1109/LICS.2002.1029817}
  {\path{doi:10.1109/LICS.2002.1029817}}.

\bibitem[SB23]{BergerSpreen23}
D.~Spreen and U.~Berger.
\newblock Computing with infinite objects: The {G}ray code case.
\newblock {\em Logical~Methods~in~Comput.~Sci.}, 19(3):1--62, 2023.

\bibitem[Sch06]{SchwichtenbergMinlog06}
H.~Schwichtenberg.
\newblock Minlog.
\newblock In F.~Wiedijk, editor, {\em The Seventeen Provers of the World},
  number 3600 in Lecture Notes in Artificial Intell., pages 151--157, 2006.
\newblock \href {https://doi.org/10.1016/j.jlap.2004.07.005}
  {\path{doi:10.1016/j.jlap.2004.07.005}}.

\bibitem[Spr21]{Spreen20}
D.~Spreen.
\newblock Computing with continuous objects: a uniform co-inductive approach.
\newblock {\em Mathematical Structures in Computer Science}, 31(2):144–192,
  2021.
\newblock \href {https://doi.org/10.1017/S0960129521000116}
  {\path{doi:10.1017/S0960129521000116}}.

\bibitem[SS92]{SondergardSestoft92}
H.~Sondergard and P.~Sestoft.
\newblock Non-determinism in {F}unctional {L}anguages.
\newblock {\em The {C}omputer {J}ournal}, 35(5):514--523, 1992.
\newblock \href {https://doi.org/10.1093/comjnl/35.5.514}
  {\path{doi:10.1093/comjnl/35.5.514}}.

\bibitem[SSS08]{sabel_schmidt-schauss_2008}
D.~Sabel and M.~Schmidt-Schauss.
\newblock A call-by-need lambda calculus with locally bottom-avoiding choice:
  context lemma and correctness of transformations.
\newblock {\em Mathematical Structures in Computer Science}, 18(3):501–553,
  2008.
\newblock \href {https://doi.org/10.1017/S0960129508006774}
  {\path{doi:10.1017/S0960129508006774}}.

\bibitem[SSS20]{SchmidtSchau2020}
Manfred Schmidt-Schauß and David Sabel.
\newblock Correctly {I}mplementing {S}ynchronous {M}essage {P}assing in the
  {P}i-{C}alculus by {C}oncurrent {H}askell’s {MV}ars.
\newblock {\em Electronic Proceedings in Theoretical Computer Science},
  322:88–105, August 2020.
\newblock URL: \url{http://dx.doi.org/10.4204/EPTCS.322.8}, \href
  {https://doi.org/10.4204/eptcs.322.8} {\path{doi:10.4204/eptcs.322.8}}.

\bibitem[Sti91]{Stirling91}
Colin Stirling.
\newblock An introduction to modal and temporal logics for {CCS}.
\newblock In Akinori Yonezawa and Takayasu Ito, editors, {\em Concurrency:
  Theory, Language, and Architecture}, pages 1--20, Berlin, Heidelberg, 1991.
  Springer Berlin Heidelberg.

\bibitem[SW12]{SchwichtenbergWainer12}
H.~Schwichtenberg and S.~S. Wainer.
\newblock {\em Proofs and Computations}.
\newblock Cambridge University Press, 2012.

\bibitem[Tsu02]{Tsuiki02}
H.~Tsuiki.
\newblock {R}eal number computation through {G}ray code embedding.
\newblock {\em Theoretical Computer Science}, 284(2):467--485, 2002.
\newblock \href {https://doi.org/10.1016/S0304-3975(01)00104-9}
  {\path{doi:10.1016/S0304-3975(01)00104-9}}.

\bibitem[Tsu05]{Tsuiki05}
H.~Tsuiki.
\newblock Real number computation with committed choice logic programming
  languages.
\newblock {\em J. Log. Algebr. Program.}, 64(1):61--84, 2005.
\newblock \href {https://doi.org/10.1016/j.jlap.2004.07.005}
  {\path{doi:10.1016/j.jlap.2004.07.005}}.

\bibitem[Wad14a]{Wadler14}
P.~Wadler.
\newblock Propositions as sessions.
\newblock {\em Journal of Functional Programming}, 24:384--418, 2014.
\newblock \href {https://doi.org/10.1017/S095679681400001X}
  {\path{doi:10.1017/S095679681400001X}}.

\bibitem[Wad14b]{Wadler14a}
P.~Wadler.
\newblock Propositions as types.
\newblock {\em Communications of the ACM}, 58(12):75--84, 2014.
\newblock \href {https://doi.org/10.1145/2699407} {\path{doi:10.1145/2699407}}.

\bibitem[Wei00]{Weihrauch00}
K.~Weihrauch.
\newblock {\em Computable Analysis}.
\newblock Springer, 2000.

\end{thebibliography}

\vfill

\newpage

\section*{Appendix}

\appendix

\section{Implementation}
\label{Sec:appendix-program}
We explain the program and experiments of Section~7 in more detail.  The source code (GraySD.hs) is available from the archive \cite{githubUB}.

\subsection{Nondeterminism}
\label{sub-program-nondet}

Using the primitives of the Haskell libraries
\verb|Concurrent| and \verb|Exception|
we can implement nondeterministic choice through a program
\verb|ambL| that picks from a list nondeterministically
a terminating element (if exists). Although in our application 
we need only binary choice, we implement arbitrary finite choice
since it is technically more convenient and permits more applications,
e.g.\ Gaussian elimination (Section~8.4).

\begin{verbatim}
import Control.Concurrent
import Control.Exception

ambL :: [a] -> IO a
ambL xs = 
  do { m <- newEmptyMVar ; 
       acts <- sequence 
                 [ forkIO (do { y <- evaluate x ; putMVar m y }) 
                    | x <- xs ] ;
       z <- takeMVar m ;        
       x <- sequence_ (map killThread acts) ;
       seq x (return z)
     }
\end{verbatim}
Comments:

\begin{itemize}
\item \verb|newEmptyMVar| creates an empty mutable variable, 
\item \verb|forkIO| creates a thread,
\item \verb|evaluate| evaluates its argument to head normal form,  
\item \verb|putMVar m y| writes \verb|y| into the mutable variable \verb|m| 
       provided \verb|m| is empty,
\item the line \verb|seq x (return z)| makes sure that the threads are killed 
      before the final result \verb|z| is returned. 
\end{itemize}

\subsection{Extracting data}
\label{sub-program-data}

We define the domain $D$ (Section~2) and a program \verb|ed| on $D$
(`extract data') that, using \verb|ambL|,  nondeterministically selects 
a terminating argument of the constructor Amb.

\begin{verbatim}
data D = Nil | Le D | Ri D | Pair(D, D) | Fun(D -> D) | Amb(D, D)

ed :: D -> IO D
ed (Le d) = do { d' <- ed d ; return (Le d') }
ed (Ri d) = do { d' <- ed d ; return (Ri d') }
ed (Pair d e) = do { d' <- ed d ; e' <- ed e ; return (Pair d' e') }
ed (Amb a b) = do { c <- ambL [a,b] ; ed c } ;
ed d = return d
\end{verbatim}

\verb|ed| can be seen as an implementation of the operational semantics 
in Section~3.

\subsection{Gray code to Signed Digit Representation conversion}
\label{sub-program-gray}

We read-off the programs extracted in the Sections~5 and~6
to obtain the desired conversion function.
Note that this is nothing but a copy of the programs in those sections
  with type annotations for readability. The programs work without
  type annotation because Haskell infers their types.
The Haskell types contain only one type $D$.
Their types as CFP-programs are shown as comments in the code below. 

\paragraph{From Section~5.}

\begin{verbatim}
mapamb :: (D -> D)  -> D -> D  -- (B -> C) -> A(B) -> A(C)
         -- (A(B) is the type of Amb(a,b) where a,b are of type B)

leftright :: D -> D   -- B + C -> B + C
leftright = \b ->  case b of {Le _ -> Le Nil; Ri _ -> Ri Nil}

mapamb = \f -> \c -> case c of {Amb(a,b) -> Amb(f $! a, f $! b)}

conSD :: D -> D    -- 2 x 2 -> A(3) 
         -- (2 = 1+1, etc. where 1 is the unit type)
conSD = \c -> case c of {Pair(a, b) ->
      Amb(Le $! (leftright a), 
          Ri $! (case b of {Le _ -> bot; Ri _ -> Nil}))}
\end{verbatim}

\paragraph{From Section~6.}

\begin{verbatim}
gscomp :: D -> D  -- [2] -> A(3) 
gscomp (Pair(a, Pair(b, p))) = conSD (Pair(a, b))

onedigit :: D -> D -> D   -- [2] -> 3 -> 3 x [2]
onedigit  (Pair(a, Pair (b, p))) c = case c of {
       Le d -> case d of {
              Le _ -> Pair(Le(Le Nil), Pair(b,p));
              Ri _  -> Pair(Le(Ri Nil), Pair(notD b,p))
        };
        Ri _  -> Pair(Ri Nil, Pair(a, nhD p))}

notD :: D -> D   -- 2 -> 2
notD a = case a of {Le _ -> Ri Nil; Ri _ -> Le Nil}

nhD :: D -> D  -- [2] -> [2]
nhD (Pair (a, p)) = Pair (notD a, p)

s :: D -> D    -- [2] -> A(3 x [2])
s p = mapamb (onedigit p) (gscomp p)

mon :: (D -> D) -> D -> D   -- (B -> C) -> A(3 x B) -> A(3 x C)
mon f p = mapamb (mond f) p  
   where  mond f (Pair(a,t)) = Pair(a, f t)

gtos :: D -> D    -- [2]  -> [3] 
gtos = (mon gtos) . s
\end{verbatim}

\subsection{Gray code generation with delayed digits}
\label{sub-program-delay}

Recall that Gray code has the digits $1$ and $ -1$, modelled
as \verb|Ri Nil| and \verb|Le Nil|. A digit may as well be undefined ($\bot$) 
in which case it is modelled by a nonterminating computation (such as \verb|bot| below).

To exhibit the nondeterminism in our programs we generate digits with different
computation times. For example, \verb|graydigitToD 5| denotes the digit $1$ computed in
$500000$ steps, while \verb|graydigitToD 0| does not terminate and therefore 
denotes $\bot$.

\begin{verbatim}
delay :: Integer -> D
delay n  | n > 1     = delay (n-1)
         | n == 1    = Ri Nil
         | n == 0    = bot
         | n == (-1) = Le Nil
         | n < (-1)  = delay (n+1)
bot = bot

graydigitToD :: Integer -> D  
graydigitToD a | a == (-1) = Le Nil
               | a == 1    = Ri Nil
               | True      = delay (a*100000)
\end{verbatim}

The function \verb|grayToD| lifts this to Gray codes, that is, 
infinite sequences of partial Gray digits 
represented as elements of $D$:

\begin{verbatim}
-- list to Pairs
ltop :: [D] -> D
ltop = foldr (\x -> \y -> Pair(x,y)) Nil

grayToD :: [Integer] -> D
grayToD = ltop . (map graydigitToD)
\end{verbatim}

For example, \verb|grayToD (0:5:-3:[-1,-1..])| denotes the Gray code
$\bot:1:-1:-1,-1,\ldots$ where the first digit does not terminate, the second digit (1)
takes 500000 steps to compute and the third digit (-1) takes 300000 steps.
The remaining digits (all $-1$) take one step each.

\subsection{Truncating the input and printing the result}
\label{sub-io}

The program $\gtos$ transforms Gray code into signed digit representation, 
so both, input and output are infinite. To observe the computation, 
we truncate the input to some finite approximation which $\gtos$ will
map to some finite approximation of the output. This finite output
is a nondeterministic element of $D$ 
(i.e.~it may contain the constructor \verb|Amb|) 
from which we then can extract nondeterministically a deterministic 
data using the function \verb|ed| which can be printed.

In the following we define the truncation and the printing 
of deterministic finite data.

\paragraph{Truncating $d\in D$ at depth $n$.}

\begin{verbatim}
takeD :: Int -> D -> D
takeD n d | n > 0 = 
  case d of
    {
      Nil        -> Nil ;
      Le a       -> Le (takeD (n-1) a) ;
      Ri a       -> Ri (takeD (n-1) a) ;
      Pair(a, b) -> Pair (takeD (n-1) a, takeD (n-1) b) ;
      Amb(a,b)   -> Amb(takeD (n-1) a, takeD (n-1) b) ;
      Fun _      -> error "takeD _ (Fun _)" ;
    }
            | otherwise = Nil
\end{verbatim}

\paragraph{Showing a partial signed digit.}

\begin{verbatim}
dtosd :: D -> String
dtosd (Le (Ri Nil)) = " 1"
dtosd (Le (Le Nil)) = "-1"
dtosd (Ri Nil)      = " 0"
dtosd _             = " bot"
\end{verbatim}

\paragraph{Printing an element of $D$ that represents a finite deterministic 
signed digit stream.}

\begin{verbatim}
prints :: D -> IO ()
prints (Pair (d,e)) = putStr (dtosd d) >> prints e
prints Nil          = putStrLn ""
prints _ = error "prints: not a partial signed digit stream"
\end{verbatim}

\subsection{Experiments}
\label{sub-experiments}

As explained in Section~7,
there are three Gray codes of $0$:
\begin{eqnarray*}
a &=& \ \ \bot:1:-1,-1,-1,\ldots\\
b &=& \ \ \, 1:1:-1,-1,-1,\ldots\\
c &=& -1:1:-1,-1,-1,\ldots
\end{eqnarray*}

and the set of signed digit representations of $0$ is
$A \cup B \cup C$ where

\begin{eqnarray*}
A &=& \{0^\omega\}\\
B &=& \{0^k\!:\!1\!:\!(-1)^\omega \mid k \geq 0\}\\
C &=& \{0^k\!:\!(-1)\!:\!1^\omega \mid k \geq 0\}. 
\end{eqnarray*}

Our \verb|gtos| program 
nondeterministically produces 
an element of $A$ for input $a$,
an element of $A \cup B$ for input $b$, 
and an element of $A \cup C$ for input $c$.
As the following results show, 
the obtained value depends on 
the speed of computation of the individual Gray-digits.

\bigskip

Input $b$:
\begin{verbatim}
*GraySD> ed (takeD 50 (gtos (grayToD (1:1:[-1,-1..])))) >>= prints
 1-1-1-1-1-1-1-1-1-1-1-1-1-1-1-1-1-1-1-1-1-1-1-1 bot
\end{verbatim}

Input $c$:
\begin{verbatim}
*GraySD> ed (takeD 50 (gtos (grayToD (-1:1:[-1,-1..])))) >>= prints
-1 1 1 1 1 1 1 1 1 1 1 1 1 1 1 1 1 1 1 1 1 1 1 1 bot
\end{verbatim}

\bigskip

Input $a$ (demonstrating that the program can cope with an undefined digit):
\begin{verbatim}
*GraySD> ed (takeD 50 (gtos (grayToD (0:1:[-1,-1..])))) >>= prints
 0 0 0 0 0 0 0 0 0 0 0 0 0 0 0 0 0 0 0 0 0 0 0 0 bot
\end{verbatim}

\bigskip

Input $b$ with delayed first digit:
\begin{verbatim}
*GraySD> ed (takeD 50 (gtos (grayToD (2:1:[-1,-1..])))) >>= prints
 0 0 1-1-1-1-1-1-1-1-1-1-1-1-1-1-1-1-1-1-1-1-1-1 bot
\end{verbatim}

\bigskip

Same, but with more delayed first digit:
\begin{verbatim}
*GraySD> ed (takeD 50 (gtos (grayToD (10:1:[-1,-1..])))) >>= prints
 0 0 0 0 0 0 0 0 0 1-1-1-1-1-1-1-1-1-1-1-1-1-1-1 bot
\end{verbatim}

\bigskip

Input $1,1,1,\ldots$ which is the Gray code of  $1/3$:
\begin{verbatim}
*GraySD> ed (takeD 50 (gtos (grayToD ([1,1..])))) >>= prints
 1-1 1-1 1-1 1-1 1-1 1-1 1 0-1-1 1-1 1-1 1-1 1-1 bot
\end{verbatim}

\bigskip

Same, but with delayed first digit:
\begin{verbatim}
*GraySD> ed (takeD 50 (gtos (grayToD (2:[1,1..])))) >>= prints
 0 1 1-1 1-1 1-1 1-1 0 1 1-1 1-1 1-1 1-1 1-1 1-1 bot
\end{verbatim}

To see that the last two results are indeed approximations of signed digit 
representations of $1/3$, one observes that in the signed digit 
representation \verb|0 1| means the same as \verb|1-1| ($0+1/2 = 1-1/2$), 
so both results are equivalent to
\begin{verbatim}
 1-1 1-1 1-1 1-1 1-1 1-1 1-1 1-1 1-1 1-1 1-1 1-1 bot
\end{verbatim}
which denotes $1/3$.

\bigskip

Note that since our experiments use the nondeterministic program \verb|ed|,
the results obtained with a different computer may differ from the ones 
included here.

Our theoretical results ensure that, whatever the results are, 
they will be correct.

\end{document}